\documentclass[12pt,a4paper,final]{iopart}

\usepackage{iopams}  
\usepackage[dvips]{graphicx}
\usepackage[breaklinks=true,colorlinks=true,linkcolor=blue,urlcolor=blue,citecolor=blue]{hyperref}
\expandafter\let\csname equation*\endcsname\relax
\expandafter\let\csname endequation*\endcsname\relax
\usepackage{amsmath}
\usepackage{ulem}
\usepackage{cancel}
\usepackage{amsthm}
\usepackage{mathtools}
\usepackage{tikz}
\usepackage{graphicx}
\usepackage{mathrsfs}
\usepackage{multirow}
\graphicspath{ {./images/} }
\newtheorem{theorem}{Theorem}[section]
\newtheorem{corollary}{Corollary}[theorem]
\newtheorem{lemma}[theorem]{Lemma}
\newcommand{\vct}{\textbf}
\newcommand{\spc}{\qquad}
\newcommand{\npg}{\vspace{\baselineskip}}

\newcommand{\npgni}{\npg\noindent}
\newcommand{\tph}{^\frac{1}{2}}

\begin{document}
\title{3-Body Problems, Hidden Constants, Trojans and WIMPs (NQG II)}

\author{Richard Durran, Aubrey Truman}
\address{Department of Mathematics, Computational Foundry, Swansea University Bay Campus, Fabian Way, Swansea, SA1 8EN, UK}
\ead{a.truman@swansea.ac.uk}

\begin{abstract}

This work includes two new results - principally two new constants of motion for the linearised restricted 3-body problem and an important isosceles triangle generalisation of 
Lagrange's equilateral triangle solution of the restricted case leading to hidden constants for Hildans as well as Trojans. Both of these results are classical, but we also have included new results on Newtonian quantum gravity emanating from the asymptotics relevant for WIMPish particles, explaining the origin of systems like that of the Trojans. The latter result uses a generalisation of our semi-classical mechanics for Schrödinger equations involving vector as well as scalar potentials, presented here for the first time, thereby providing an acid test of our ideas in predicting the quantum curvature and torsion of WIMPish trajectories for our astronomical elliptic states. The combined effect is to give a new celestial mechanics for WIMPs in gravitational systems as well as new results for classical problems. As we shall explain, we believe these results could help to see how spiral galaxies evolve into elliptical ones, giving a simple fluid model in this connection. A simple classical consequence of our isosceles triangle result gives a Keplerian type $4^{\textrm{th}}$ Law for 3-body problems.
    
\end{abstract}

\section{Introduction}

\npgni The paper is in two parts. The first part is classical and is based on Newton's laws. The second part is quantum mechanical and rests upon Newtonian quantum gravity for WIMPs. 

\npgni \textbf{Part 1} (Sections 2 and 3)

\npgni Lagrange’s equilateral triangle solution to the 3-body problem (first published in 1772) was not thought to have any applications to astronomy until 1905, when it was realised that the Trojan asteroids, (60 degrees ahead or behind Jupiter on the same orbit, assumed to be a circle centred at O, the mass centre of the Sun(S) and Jupiter(J)) provided a working example. In the usual case the Sun and Jupiter move on circles, also centred at O, as required by the SJ 2-body problem, the asteroid being assumed to have negligible mass but still subject to the gravitational attractions of the Sun and Jupiter. For equilibrium the asteroid (A) is positioned at the point L, so that triangle SLJ is equilateral. In this circular case the side length of the triangle, $a_{0}$, is constant. Lagrange’s solution is more general e.g. it allows for the orbits of the 3 bodies to be similar ellipses with the same eccentricity, e, and common focus O. When e is not zero, the triangle side lengths are not constant but have to vary so as to maintain the initial side length ratios. So, if the original configuration of the 3 bodies is an equilateral triangle, it remains an equilateral triangle whose size changes as it rotates. For the restricted 3-body problem we generalise Lagrange's result to isosceles triangles - the most general case possible as we explain. To begin with, it is the motion of the asteroid A in a neighbourhood of L in the 2 cases $e=0$ and $e\sim0$ we investigate in this work, SLJ being an equilateral triangle.(See F.R. Moulton Ref.[19]).

\npgni The theorems of Bruns and Poincare on constants of the motion for the 3-body problem are presumed to preclude the existence of new constants not dependent on the so-called classical ones. However, they do not forbid there being new constants in restricted regions of phase-space e.g. for the above linearised or full restricted 3-body problem considered here. For the case of circular orbits we present 2 new constants of the motion in the rotating frame of the corresponding linearised SJ 2-body problem with rotating axes $OX$, $OY$, where $OX$ is parallel to $O\textrm{J}$. (See E. Whittaker chapter XIV Ref.[35] 
and W. Thirring Vol 1 Ref.[31]). We also reveal hidden constants of the motion for isosceles triangle configurations for the restricted 3-body problem, concomitant with our new isosceles triangle solution e.g. for Hildan asteroids.

\npgni It is well known that in the, $e=0$, circular orbit case, the Jacobi integral, $J$, is the constant Hamiltonian for the asteroid motion in the rotating frame. We show here that $J=f(D_{1},D_{2})$ where $D_{1}$ and $D_{2}$ are the new constants, $f$ a simple function. When $e$ is nonzero, we consider the asymptotic behaviour of the solution of our equations of motion as $e\sim0$, thereby finding 6 explicit constants of integration for the first order approximation to the solution. We believe this result generalises to arbitrarily high orders in powers of $e$, but the algebraic complications would require a lot of computation to determine any constants. These last results are achieved by realising that $J$ can be viewed as an electromagnetic Hamiltonian which we have used before in Burgers-Zeldovich models of the early history of galaxies and solar systems.(See Ref.[20]).

\npgni \textbf{Part 2} (Sections 4 and 5)

\npgni Unlike the first part of the paper where in line with classical theory we try to predict the future behaviour of the motion in the linearised restricted 3-body problem (e.g. the motion of the Trojan asteroids) by identifying two new constants of the motion, in the second part, in the context of Newtonian quantum gravity for our astronomical elliptic states, we try to predict the past. Namely we try to explain how celestial bodies such as Trojan asteroids could have condensed out of a cloud of WIMPish particles onto periodic elliptical orbits with force centre at the centre (not the focus) of the ellipse. Needless to say this inevitably involves the Schrödinger equation for the isotropic harmonic oscillator in 2 and 3-dimensions associated with a homogeneous cloud of gravitating WIMPs and very special initial conditions. (See Lena, Delande and Gay Ref.[17]).

\npgni To be specific we compute the Bohr correspondence limit of the analogue for the isotropic harmonic oscillator potential of Lena et al's atomic elliptic state for the Schrödinger equation. The leading term of Nelson's stochastic mechanics as $\epsilon^2=\frac{\hbar}{m}\sim0$, $\hbar$ being Planck's constant, $m$ the mass of the diffusing WIMPish particle, defines our semi-classical mechanics. That this captures the correspondence limit of Nelson's stochastic mechanics, more generally, even in the presence of singularities, can be deduced from Freidlin and Wentzell. (See Ref.[13]). In our case this does indeed give the desired convergence to motion on an ellipse as required for the Trojans.

\npgni We contend that this is a sensible approach to understanding better the formation of celestial systems in spite of the caution in Landau and Lifshitz; "To describe the state of a macroscopic body by a wave function at all is impracticable, since the available data concerning the state of such a body are far short of the complete set of data necessary to establish its wave function." (See Ref.[16]). We believe very strongly that it is highly profitable to compute the Schrödinger wave functions for different component parts of protoplanetary nebulae. Indeed this is our main strategy. 

\npgni An added complication here is that the linearised restricted 3-body problem involves rotating coordinates and so involves a vector as well as a scalar potential as we shall see. Here we generalise our semi-classical mechanics in line with the Schrödinger equation and 
calculate the quantum corrections to curvature and torsion of the trajectories of the WIMPish particles condensing into the Trojan asteroids. This enables us to test our ideas on other embryonic restricted 3-body problems given we have explicit information on the quantum corrections in this case. There are important ramifications of these ideas for the evolution of galaxies as we shall explain by developing a Burgers-Zeldovich fluid model with vorticity and viscosity incorporating our ideas and extending the scope of applications considerably.

\npgni \textbf{Commentary}

\npgni A few final remarks about our treatment of the above problems, firstly, unlike most previous authors, we have made extensive use of Hamilton-Lenz-Runge vectors in considering the Lagrange results. Our predilection in this connection owes everything to Pauli's treatment of the quantum Kepler/Coulomb problem which was vitally important in Lena et al's derivation of their atomic elliptic state whose semi-classical limit underlies the present work and its connection to constants for WIMPs. We should add that, if at time $t$, $\overset{\rightharpoonup}{\textrm{LA}}=(X(t),Y(t))$ in rotated coordinates, then both $X$ and $Y$ are almost periodic functions, $f:\mathbb{R}_{+}\rightarrow\mathbb{R}$. For any such function $f$, important roles are played by

\npgni $$\textrm{P}_{r}(f)=\lim_{T\nearrow\infty}T^{-1}\int^T_0\textrm{e}^{-irs}f(s)ds \;\;\;\textrm{and}\;\;\; \sigma_{\textrm{B}}(f)=\{r\in\mathbb{R}:\textrm{P}_{r}(f)\ne0\}.$$

\npgni The B stands for Haage Bohr, the brother of Niels who developed the notion in quantum mechanics of the spectrum of a quantum observable realised as a linear operator e.g. $\hat{H}$ the quantised version of our electromagnetic Hamiltonian herein, its spectrum is given by,

$$\sigma_{\textrm{B}}[\hat{H}]=\{\lambda\in\mathbb{R}:(\lambda I-\hat{H})\;\textrm{is not invertible for}\;I\;\textrm{the identity}\}.$$

\npgni For periodic orbits there are strong links between $\sigma_{\textrm{B}}(X)$, $\sigma_{\textrm{B}}(Y)$ and $\sigma_{\textrm{B}}[\hat{H}]$ for Trojan asteroids as we shall see.

\npgni Our one true inspiration is in drawing attention to the result that triangle SAJ is isosceles i.e. $|\textrm{SA}|=|\textrm{JA}|$ if and only if the angular momentum of A about $O$, the mass centre of S and J, is a constant in time. This result and the new constants seem to have escaped the attention of previous authors. This is all the more amazing when one realises that the new constants have ramifications for the Foucault pendulum as well as other classical problems and, even though the isosceles triangle results emanate from the topical Trojan orbits, no one has investigated the possibilities of the corresponding orbits for Hildans. Even more surprising are the isosceles triangle results on Kepler's $3^{\textrm{rd}}$ Law detailed in the Appendix. To verify our results further, we clearly will need more data.

\section{Hidden Constants for Circular Orbits}

\subsection{Background for Linearisation of Circular Orbital Case of 3 Bodies}

\npgni The Hamiltonian, $H$, for the motion of the asteroid A in the centre of mass frame for the two bodies S and J, with coordinates $Ox$, $Oy$ forming our inertial frame, is

$$H=2^{-1}(p_{x}^2+p_{y}^2)+V,\;\;\;V=-\frac{\mu_{1}}{|\textrm{SA}|}-\frac{\mu_{2}}{|\textrm{JA}|},$$

\npgni $\overset{\rightharpoonup}{O\textrm{A}}=(x,y)$ with conjugate momenta $(p_{x},p_{y})$, $\mu_{1}$ and $\mu_{2}$ being the gravitational masses of the Sun and Jupiter, respectively.

\npgni Denoting by $\omega$ the angular velocity of Jupiter on its circular orbit around $O$, the classical contact transformation from $(x,y)$ coordinates to $(\tilde x,\tilde y)$ coordinates in the rotating frame, is defined by:-

$$x=\frac{\partial W}{\partial p_{x}},\;\;\;y=\frac{\partial W}{\partial p_{y}},\;\;\;\tilde p_{x}=\frac{\partial W}{\partial \tilde x},\;\;\;\tilde p_{y}=\frac{\partial W}{\partial \tilde y},$$

\npgni for

$$W=p_{x}(\tilde x\cos\omega t-\tilde y\sin\omega t)+p_{y}(\tilde x\sin\omega t+\tilde y\cos\omega t),$$

\npgni giving the new Hamiltonian in the rotating frame

$$K=H-\frac{\partial W}{\partial t}=\frac{1}{2}(p_{x}^2+p_{y}^2)+\omega(\tilde y\tilde p_{x}-\tilde x\tilde p_{y})+\tilde V,$$\vspace{-10mm}

\npgni i.e.\vspace{-5mm}

$$K=2^{-1}(\tilde {\vct{p}}-\tilde {\vct{A}})^{2}+\tilde V-\frac{\tilde {\vct{A}}^2}{2},$$

\npgni where $\tilde A=\omega(-\tilde y,\tilde x)$, with a similar interpretation for $\tilde V$.

\npgni Since $\dfrac{\partial K}{\partial t}=0$, the new Hamiltonian is constant in time and is the celebrated Jacobi\vspace{3mm}
integral, $J$, in the rotating coordinates $(\tilde x,\tilde y)$,

$$x=\tilde x\cos\omega t-\tilde y\sin\omega t,\;\;\;y=\tilde x\sin\omega t+\tilde y\cos\omega t.$$

\npgni The resulting Hamilton equations are:-

$$\ddot{\tilde x}-2\omega\dot{\tilde y}-\omega^{2}\tilde x=-\frac{\mu_{1}(\tilde x+r_{1})}{((\tilde x+r_{1})^{2}+{\tilde y}^{2})^{3/2}}-\frac{\mu_{2}(\tilde x-r_{2})}{((\tilde x-r_{2})^{2}+
{\tilde y}^{2})^{3/2}},$$

$$\ddot{\tilde y}+2\omega\dot{\tilde x}-\omega^{2}\tilde y=-\frac{\mu_{1}\tilde y}{((\tilde x+r_{1})^{2}+{\tilde y}^{2})^{3/2}}-\frac{\mu_{2}\tilde y}{((\tilde x-r_{2})^{2}+{\tilde y}^{2})^{3/2}},$$

\npgni where $r_{1}=|O\textrm{S}|$, $r_{2}=|O\textrm{J}|$, $(r_{1}+r_{2})=a_{0}$, is the triangle side length and $\omega^2=\dfrac{\mu_{1}+\mu_{2}}{a_{0}^3}$. These equations admit Lagrange's equilateral triangle solution,

$$(\tilde x,\tilde y)=(c,d)=\left(\frac{r_{2}-r_{1}}{2},\pm \frac{\sqrt{3}}{2}(r_{1}+r_{2})\right),$$

\npgni $\overset{\rightharpoonup}{O\textrm{L}}=(c,d)$ is the position vector of $\textrm{L}=\mathscr{L}_{4,5}$, the Lagrange equilibrium points.

\subsection{Equations of Motion in Rotated Coordinates}\vspace{-5mm}

\npgni Working in rotating coordinates and writing, $(\tilde x,\tilde y)=(c+\delta(t),d+\epsilon(t))$, where $\overset{\rightharpoonup}{O\textrm{L}}=(c,d)$, the Lagrange equilibrium point, $\mathscr{L}_{4,5}$, gives the linearised equations:-

$$\ddot{\delta}-2\omega \dot{\epsilon}-\frac{3}{4}{\omega}^2\delta-{\Omega}^2\epsilon=0,$$

$$\ddot{\epsilon}+2\omega \dot{\delta}-\frac{9}{4}{\omega}^2\epsilon-{\Omega}^2\delta=0,$$\vspace{-8mm}

\npgni where ${\Omega}^2=\dfrac{3\sqrt{3}(\mu_{1}-\mu_{2})}{4(r_{1}+r_{2})^3}$, $\mu_{1}\ge\mu_{2}$,\vspace{3mm}

$$\tilde C=\frac{1}{2}({\dot{\delta}^2}+{\dot{\epsilon}^2})-\frac{3}{8}{\omega}^2({\delta}^2+3{\epsilon}^2)-\Omega^2\epsilon\delta,$$

\npgni being the corresponding Jacobi integral. This suggests we make a fixed rotation $(\epsilon,\delta)\rightarrow(X,Y)$,

$$\delta=X\cos \gamma - Y\sin \gamma \spc;\spc \epsilon=X\sin \gamma + Y\cos \gamma,$$\vspace{-8mm}

\npgni where\vspace{-5mm}

$$\tan 2\gamma=-\dfrac{\sqrt{3}(\mu_{1}-\mu_{2})}{\mu_{1}+\mu_{2}},$$

\npgni and

$$C_{0}=\frac{1}{2}({\dot{X}^2}+{\dot{Y}^2})-\omega_{X}^2X^2-\omega_{Y}^2Y^2,\;\;\dot{X}=p_{X}-A_{X},\;\dot{Y}=p_{Y}-A_{Y}.$$\vspace{-8mm}

\npgni Here

$$\omega_{X}^2=\left(\dfrac{3}{4}-\left(\dfrac{3}{8}+\dfrac{2\Omega^4}{3\omega^4}\right)\cos 2\gamma\right)\omega^2,$$

$$\omega_{Y}^2=\left(\dfrac{3}{4}+\left(\dfrac{3}{8}+\dfrac{2\Omega^4}{3\omega^4}\right)\cos 2\gamma\right)\omega^2,$$

\npgni with $\omega_{X}^2+\omega_{Y}^2=\dfrac{3}{2}\omega^2$ and $\omega_{X}^2\omega_{Y}^2=\dfrac{27\mu_{1}\mu_{2}}{16(\mu_{1}+\mu_{2})^2}\omega^4$.

\npgni Moreover $X$ and $Y$ satisfy

$$\ddot{X}-2\omega\dot{Y}-2\omega_{X}^2X=0\spc;\spc\ddot{Y}+2\omega \dot{X}-2\omega_{Y}^2Y=0.$$

\npgni Assuming solutions of the form $X=A\textrm{e}^{\lambda t}$ and $Y=B\textrm{e}^{\lambda t}$ leads to

$$\lambda^4+\omega^2\lambda^2+4\omega_{X}^2\omega_{Y}^2=0,$$\vspace{-8mm}

\npgni with general solution of the form

$$X=2\omega {\sum_{\textrm{i}=1}^{4}}\lambda_\textrm{i}C_\textrm{i}\textrm{e}^{\lambda_\textrm{i}t}\spc;\spc Y={\sum_{\textrm{i}=1}^{4}}(\lambda_\textrm{i}^2-2\omega_{X}^2)C_\textrm{i}\textrm{e}^{\lambda_\textrm{i}t}.$$

\npgni Suppose further that all the roots are imaginary i.e. $(\lambda_{1},\lambda_{2},\lambda_{3},\lambda_{4})=(\alpha i, -\alpha i, \beta i, -\beta i)$, with $\alpha\ne\beta$, $i=\sqrt{-1}$.

\begin{theorem}(Hidden Constants for the Case $e=0$)

\npgni For the system

$$\ddot{X}-2\omega \dot{Y}-2\omega_{X}^2X=0\spc;\spc\ddot{Y}+2\omega \dot{X}-2\omega_{Y}^2Y=0,$$

\npgni for constant $\omega$ with $\omega_{X}^2+\omega_{Y}^2=\dfrac{3}{2}\omega^2$ and $\omega_{X}^2\omega_{Y}^2=\dfrac{27\mu_{1}\mu_{2}}{16(\mu_{1}+\mu_{2})^2}\omega^4$, in addition to the Jacobi constant,

$$J=\frac{1}{2}({\dot{X}^2}+{\dot{Y}^2})-\omega_{X}^2X^2-\omega_{Y}^2Y^2,\;\;\dot{X}=p_{X}-A_{X},\;\dot{Y}=p_{Y}-A_{Y},$$

\npgni there are two further constants of the motion:-

$$D_{1}=\alpha^2\{(\beta^2+2\omega_{X}^2)\dot{X}-2\omega\beta^2Y\}^2+4\omega_{X}^4\{2\omega\dot{Y}+(\beta^2+2\omega_{X}^2)X\}^2,$$

$$D_{2}=\beta^2\{(\alpha^2+2\omega_{X}^2)\dot{X}-2\omega\alpha^2Y\}^2+4\omega_{X}^4\{2\omega\dot{Y}+(\alpha^2+2\omega_{X}^2)X\}^2,$$

\npgni $\alpha$ and $\beta$ being the real roots of $t^4-\omega^2t^2+4\omega_{X}^2\omega_{Y}^2=0$, provided that $\dfrac{\mu_{1}\mu_{2}}{(\mu_{1}+\mu_{2})^2}<\dfrac{1}{27}$.

\npgni Moreover, setting $f(u)=u^{-2}((\omega^2+3\omega_{X}^2)u^4-\omega_{X}^2(13\omega^2-8\omega_{X}^2)u^2-4\omega_{X}^4\omega_{Y}^2)$ yields

$$32\omega_{X}^4\omega^2(\alpha^2-\beta^2)^2J=f(\alpha)D_{1}+f(\beta)D_{2}.$$

\end{theorem}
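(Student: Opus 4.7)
The plan is to verify conservation of $D_1, D_2$ by direct time-differentiation along trajectories, and then to derive the Jacobi identity via a normal-mode decomposition. The single algebraic input that drives everything is Vieta applied to the reduced characteristic polynomial: since $\alpha^2,\beta^2$ are the roots of $s^2-\omega^2 s+4\omega_X^2\omega_Y^2=0$, one has $\alpha^2+\beta^2=\omega^2$ and $\alpha^2\beta^2=4\omega_X^2\omega_Y^2$. Combined with the given $\omega_X^2+\omega_Y^2=\tfrac{3}{2}\omega^2$, these relations suffice for every subsequent manipulation. The mass condition $\mu_1\mu_2/(\mu_1+\mu_2)^2<\tfrac{1}{27}$ is equivalent to $\omega^4>16\omega_X^2\omega_Y^2$, i.e.\ to positive discriminant of that quadratic, which is precisely what makes $\alpha,\beta$ real and distinct.

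For $\dot D_1=0$, I set $U=(\beta^2+2\omega_X^2)\dot X-2\omega\beta^2 Y$ and $V=2\omega\dot Y+(\beta^2+2\omega_X^2)X$, so that $D_1=\alpha^2 U^2+4\omega_X^4 V^2$. Substituting the equations of motion gives $\dot U=2\omega_X^2 V$ after an obvious cancellation, and $\dot V=(\beta^2+2\omega_X^2-4\omega^2)\dot X+4\omega\omega_Y^2 Y$. Matching the latter expression to $-(2\omega_Y^2/\beta^2)U$ reduces to the single scalar condition $\beta^4-\omega^2\beta^2+4\omega_X^2\omega_Y^2=0$, which holds because $\beta$ is a characteristic root (here $2\omega_X^2+2\omega_Y^2=3\omega^2$ is used). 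Therefore
\[
\dot D_1=\frac{4\omega_X^2 UV}{\beta^2}\bigl(\alpha^2\beta^2-4\omega_X^2\omega_Y^2\bigr)=0
\]
by Vieta; the same computation with $\alpha\leftrightarrow\beta$ gives $\dot D_2=0$.

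For the identity $32\omega_X^4\omega^2(\alpha^2-\beta^2)^2 J=f(\alpha)D_1+f(\beta)D_2$, I would insert the superposition formula already derived in the paper, $X=X_\alpha+X_\beta$ and $Y=Y_\alpha+Y_\beta$ with mode amplitudes $(C_1,C_2)$ and $(C_3,C_4)$. A short calculation shows that both brackets defining $D_1$ annihilate the $\beta$-mode, and evaluating on the $\alpha$-mode yields $D_1=64\omega^2\omega_X^4\alpha^2(\alpha^2-\beta^2)^2\,C_1 C_2$; symmetrically $D_2=64\omega^2\omega_X^4\beta^2(\alpha^2-\beta^2)^2\,C_3 C_4$. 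The Jacobi integral in turn splits as $J=J_\alpha+J_\beta$: the single-mode oscillatory pieces $C_i^2 e^{\pm 2i\alpha t}$ and the cross pieces $C_iC_j e^{\pm i(\alpha\pm\beta)t}$ each cancel because their coefficients factor through $\alpha^4-\omega^2\alpha^2+4\omega_X^2\omega_Y^2$ or $\alpha^2\beta^2-4\omega_X^2\omega_Y^2$. The claim then collapses to two scalar identities $32\omega_X^4\omega^2(\alpha^2-\beta^2)^2 J_\sigma=f(\sigma)D_\sigma$, one for each mode.

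The main obstacle is the bookkeeping behind this last step: one must expand $J_\alpha=\tfrac{1}{2}(\dot X_\alpha^2+\dot Y_\alpha^2)-\omega_X^2 X_\alpha^2-\omega_Y^2 Y_\alpha^2$ in powers of $\alpha^2$ and the $\omega_\bullet$, divide by $C_1 C_2$, and reduce modulo $\alpha^4-\omega^2\alpha^2+4\omega_X^2\omega_Y^2$ and $2\omega_X^2+2\omega_Y^2=3\omega^2$ to recognise the result as precisely $\alpha^2 f(\alpha)$ with $f$ as stated. An equivalent but less transparent route would be to expand both sides of the proposed identity as quadratic forms in $(X,Y,\dot X,\dot Y)$ and match all ten monomial coefficients directly.
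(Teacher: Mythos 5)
Your proposal is correct, and its first half takes a genuinely different route from the paper. The paper's own proof obtains $D_{1}$ and $D_{2}$ by writing the general solution in matrix form, inverting the $4\times4$ matrix to express $C_{1}\mathrm{e}^{i\alpha t}$ and $C_{3}\mathrm{e}^{i\beta t}$ as explicit linear combinations of $(X,\dot X,Y,\dot Y)$, and then noting that $|C_{1}|^{2}$ and $|C_{3}|^{2}$ are conserved because $\mathrm{e}^{i\alpha t}\mathrm{e}^{-i\alpha t}=1$; the stated $D_{1},D_{2}$ are these squared moduli up to normalisation. Your direct verification --- $\dot U=2\omega_{X}^{2}V$, $\dot V=-(2\omega_{Y}^{2}/\beta^{2})U$ via the characteristic relation $\beta^{4}-\omega^{2}\beta^{2}+4\omega_{X}^{2}\omega_{Y}^{2}=0$, hence $\dot D_{1}=\tfrac{4\omega_{X}^{2}}{\beta^{2}}(\alpha^{2}\beta^{2}-4\omega_{X}^{2}\omega_{Y}^{2})UV=0$ by Vieta --- bypasses the matrix inversion entirely, applies to any solution without invoking the normal-mode form, and isolates exactly which algebraic inputs drive conservation; I have checked these identities and they hold. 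What it buys less of is interpretation: the paper's route identifies $D_{1},D_{2}$ as mode amplitudes, which is what makes the later material (periodicity iff $|C_{1}|$ or $|C_{3}|$ vanishes, the Bohr spectra, the reduction to an isotropic oscillator when one $D$ is set to zero) immediate. For the identity $32\omega_{X}^{4}\omega^{2}(\alpha^{2}-\beta^{2})^{2}J=f(\alpha)D_{1}+f(\beta)D_{2}$ your route and the paper's essentially coincide: both evaluate $J$ on the explicit solution. Your organisation of that computation is sharper than the paper's one-line remark --- the cancellation of the oscillatory and cross terms through the factors $\alpha^{4}-\omega^{2}\alpha^{2}+4\omega_{X}^{2}\omega_{Y}^{2}$ and $\alpha^{2}\beta^{2}-4\omega_{X}^{2}\omega_{Y}^{2}$ is exactly right, your formula $D_{1}=64\omega^{2}\omega_{X}^{4}\alpha^{2}(\alpha^{2}-\beta^{2})^{2}C_{1}C_{2}$ is correct, and the residual claim reduces to $J_{\alpha}/(C_{1}C_{2})=2\alpha^{2}f(\alpha)$, which checks out --- so the bookkeeping you flag as the main obstacle is real but routine, and is no worse than what the paper itself leaves implicit.
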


\begin{proof}
For the system defined above the solutions for $X$ and $Y$ together with their time derivatives can be expressed in matrix form:

$$\begin{pmatrix}{X}\\{\dot X}\\{Y}\\{\dot Y}\end{pmatrix}=\begin{pmatrix}2\omega\lambda_{1} & 2\omega\lambda_{2} & 2\omega\lambda_{3} & 2\omega\lambda_{4}\\2\omega\lambda_{1}^2 & 2\omega\lambda_{2}^2 & 2\omega\lambda_{3}^2 & 2\omega\lambda_{4}^2\\\lambda_{1}^2-2\omega_{X}^2 & \lambda_{2}^2-2\omega_{X}^2 & \lambda_{3}^2-2\omega_{X}^2 & \lambda_{4}^2-2\omega_{X}^2\\ \lambda_{1}(\lambda_{1}^2-2\omega_{X}^2) & \lambda_{2}(\lambda_{2}^2-2\omega_{X}^2) & \lambda_{3}(\lambda_{3}^2-2\omega_{X}^2) & \lambda_{4}(\lambda_{4}^2-2\omega_{X}^2) \end{pmatrix}\begin{pmatrix}{C_{1}\textrm{e}^{\lambda_{1}t}}\\{C_{2}\textrm{e}^{\lambda_{2}t}}\\{C_{3}\textrm{e}^{\lambda_{3}t}}\\{C_{4}\textrm{e}^{\lambda_{4}t}}\end{pmatrix}.$$

\npgni Inverting this matrix equation gives expressions for $\textrm{e}^{\lambda_{\textrm{i}}t},\;\textrm{i}=1,2,3,4$ in terms of $X,\dot{X},Y$\vspace{3mm}
\npgni and $\dot{Y}$. When $\dfrac{\mu_{1}\mu_{2}}{(\mu_{1}+\mu_{2})^2}<\dfrac{1}{27}$, $\lambda_{1}=\alpha i$, $\lambda_{2}=-\alpha i$, $\lambda_{3}=\beta i$ and $\lambda_{4}=-\beta i$, where $\alpha$\vspace{-3mm}
\npgni and $\beta$ are the real roots of $t^4-\omega^2t^2+4\omega_{X}^2\omega_{Y}^2=0$. Assuming that $\omega_{X}(\alpha^2-\beta^2)\ne0$,\vspace {-3mm}
\npgni constants of the motion, $D_{1}$ and $D_{2}$ follow from the fact that $\textrm{e}^{i\alpha t}\textrm{e}^{-i\alpha t}=1$ and similarly for $\beta$. Furthermore, 
calculating the Jacobi integral directly from the solutions for $X$ and $Y$ yields its connection with $D_{1}$ and $D_{2}$.

\end{proof}\vspace{-5mm}

\subsection{Solutions of Equations of Motion X(t) and Y(t)}\vspace{-5mm}

\npgni For c.c. complex conjugate

$$X(t)=(2\omega\alpha iC_{1}\textrm{e}^{i\alpha t}+\textrm{c.c.})+(2\omega\beta iC_{3}\textrm{e}^{i\beta t}+\textrm{c.c.})$$\vspace{-12mm}

\npgni and

$$Y(t)=(-(\alpha^2+2\omega_{X}^2)C_{1}\textrm{e}^{i\alpha t}+\textrm{c.c.})+(-(\beta^2+2\omega_{X}^2)C_{3}\textrm{e}^{i\beta t}+\textrm{c.c.})$$\vspace{-10mm}

\npgni $\textrm{for}\;t\ge0$.

$$C_{1}\textrm{e}^{i\alpha t}=\frac{i}{8\omega_{X}^2\omega\alpha(\alpha^2-\beta^2)}\{i\alpha(\beta^2+2\omega_{X}^2)\dot X+2\omega i\beta^2\alpha Y+2\omega_{X}^2(2\omega\dot Y+(\beta^2+2\omega_{X}^2)X\}\rceil_{t}$$\vspace{-10mm}

\npgni and\vspace{-5mm}

$$C_{3}\textrm{e}^{i\beta t}=\frac{-i}{8\omega_{X}^2\omega\beta(\alpha^2-\beta^2)}\{i\beta(\alpha^2+2\omega_{X}^2)\dot X+2\omega i\alpha^2\beta Y+2\omega_{X}^2(2\omega\dot Y+(\alpha^2+2\omega_{X}^2)X\}\rceil_{t}$$

\npgni Evidently $|C_{1}|$ and $|C_{3}|$ are constants of the motion since the r.h.s in the last two equations have to be evaluated at time $t$. These are in essence the two new constants of the motion $D_{1}$ and $D_{2}$. If $C_{1}=|C_{1}|\textrm{e}^{i\phi_{1}}$ and $C_{3}=|C_{3}|\textrm{e}^{i\phi_{3}}$, $\phi_{1}$ and $\phi_{3}$ can be read off the above identities evaluated at $t=0$ giving

$$X(t)=-4\omega\alpha|C_{1}|\sin (\alpha t+\phi_{1})-4\omega\beta|C_{3}|\sin (\beta t+\phi_{3}),$$

$$Y(t)=-(\alpha^2+2\omega_{X}^2)|C_{1}|\cos (\alpha t+\phi_{1})-(\beta^2+2\omega_{X}^2)|C_{3}|\cos (\beta t+\phi_{3})$$

\npgni and as expected $X$ and $Y$ are almost periodic if there is no rational relationship between $\alpha$ and $\beta$. Needless to say one can read off the Jacobi integral constant from the above formulae and the initial conditions for periodicity are $|C_{1}|$ or $|C_{3}|$ has to be zero.

\section{More Constants for Equilateral Triangle Case and Generalisations to Isosceles Triangles}

\subsection{Equilateral Triangle and Isosceles Triangle Results}\vspace{-5mm}

\npgni In our laboratory inertial frame if the position vector of the centre of mass of three particles is $\vct R$, $\ddot{\vct R}=0$ and writing ${\vct r}_{\textrm{i}}={\vct R}_{\textrm{i}}-\vct R$, $\textrm{i}=1,2,3$, where $\textrm{i}=3$ corresponds to the asteroid position vector,

$$\ddot{\vct r}_{3}=-\mu_{1}\frac{({\vct r}_{3}-{\vct r}_{1})}{|{\vct r}_{3}-{\vct r}_{1}|^3}-\mu_{2}\frac{({\vct r}_{3}-{\vct r}_{2})}{|{\vct r}_{3}-{\vct r}_{2}|^3},\;\;\;\textrm{where}\;\;\sum_{\textrm{i}=1}^{3}\mu_{\textrm{i}}\vct r_{\textrm{i}}=\boldsymbol{0}.$$

\npgni So, if we look for a solution in which

$$|{\vct r}_{1}-{\vct r}_{2}|=|{\vct r}_{2}-{\vct r}_{3}|=|{\vct r}_{3}-{\vct r}_{1}|=a_{0}(t)$$\vspace{-10mm}

\npgni we must have

$$\ddot{\vct r}_{\textrm{i}}=-\frac{(\mu_{1}+\mu_{2}+\mu_{3})}{a_{0}^3(t)}{\vct r}_{\textrm{i}},\;\;\textrm{i}=1,2,3,\;\;\;\vct{h}_{\textrm{i}}=\vct{r}_{\textrm{i}}\times\dot{\vct{r}}_{\textrm{i}},\;\textrm{constant}\;\textrm{for}\;\textrm{i}=1,2,3.$$

\npgni So, setting $r_{\textrm{i}}=|{\vct r}_{\textrm{i}}(t)|$ resolving forces radially

$$\ddot{r_{\textrm{i}}}-{\dot\theta_{\textrm{i}}^2}r_{\textrm{i}}=-(\mu_{1}+\mu_{2}+\mu_{3})a_{0}^{-3}(t)r_{\textrm{i}},\;\;\;\textrm{i}=1,2,3,$$\vspace{-10mm}

\npgni i.e.\vspace{-5mm}

$$\frac{1}{r_{\textrm{i}}}\frac{d}{dr_{\textrm{i}}}\left(\frac{{\dot r_{\textrm{i}}}^2}{2}+\frac{h_{\textrm{i}}^2}{2r_{\textrm{i}}^2}\right)=-\frac{(\mu_{1}+\mu_{2}+\mu_{3})}{a_{0}^3(t)},\;\;\;\textrm{i}=1,2,3,$$

\npgni for $h_{\textrm{i}}=r_{\textrm{i}}^2\dot\theta_{\textrm{i}}$. Energy conservation for $E_{\textrm{i}}$, which we justify a posteriori, gives

$$\frac{1}{r_{\textrm{i}}}\frac{d}{dr_{\textrm{i}}}\left(\frac{{\tilde\mu_{\textrm{i}}}}{r_{\textrm{i}}}\right)=-\frac{(\mu_{1}+\mu_{2}+\mu_{3})}{a_{0}^3(t)},\;\;\;\textrm{i}=1,2,3.$$

\npgni So as long as we can assume we can choose constant $\tilde\mu_{\textrm{i}}$ and negative energies,

\npgni $E_{\textrm{i}}=2^{-1}\left(\dot r_{\textrm{i}}^2+\dfrac{h_{\textrm{i}}^2}{r_{\textrm{i}}^2}\right)-\dfrac{\tilde\mu_{\textrm{i}}}{r_{\textrm{i}}}$, such that

$$\ddot{\vct r}_{\textrm{i}}=-\frac{\tilde\mu_{\textrm{i}}}{r_{\textrm{i}}^{3}}{\vct r}_{\textrm{i}},\;\;\;\textrm{i}=1,2,3,$$\vspace{-8mm}

\npgni we will get Keplerian motion on an ellipse corresponding to this inverse square law. So assuming motion is rotation plus scaling we need

$$\tilde{\mu_{\textrm{i}}}=(\mu_{1}+\mu_{2}+\mu_{3})\frac{r_{\textrm{i}}^3(0)}{a_{0}^3(0)},\;\;\;\textrm{i}=1,2,3.$$\vspace{-5mm}

\npgni For this choice, individual energy and angular momentum are conserved, but we need to assume $\dot\theta_{1}=\dot\theta_{2}=\dot\theta_{3}$, so the triangle whilst it rotates remains equilateral and is subject to a simple scale factor. In any case we know from Newton's results that for the $\textrm{i}^{\textrm{th}}$ semi-latus rectum $l_{\textrm{i}}$

$$l_{\textrm{i}}=\frac{h_{\textrm{i}}^2}{\tilde{\mu_{\textrm{i}}}}=\frac{h_{\textrm{i}}^2(0)}{\tilde{\mu_{\textrm{i}}}}=\frac{r_{\textrm{i}}^2(0)\dot\theta_{\textrm{i}}(0)}{\tilde{\mu_{\textrm{i}}}}$$

\npgni and for the eccentricity of the orbit of the $\textrm{i}^{\textrm{th}}$ particle

$$e_{\textrm{i}}=\sqrt{1+\frac{2h_{\textrm{i}}^2E_{\textrm{i}}}{\tilde\mu_{\textrm{i}}^2}},\;\;\;\textrm{i}=1,2,3,$$\vspace{-10mm}

\npgni as long as\vspace{-5mm}

$$\ddot{\vct r}_{\textrm{i}}=-\frac{\tilde\mu_{\textrm{i}}}{r_{\textrm{i}}^{3}}{\vct r}_{\textrm{i}},\;\;\;\textrm{i}=1,2,3,$$

\npgni $\vct{A}_{\textrm{i}}=\dot{\vct{r}}_{\textrm{i}}\times\vct{h}_{\textrm{i}}+{\tilde\mu}_{\textrm{i}}r_{\textrm{i}}^{-1}\vct{r}_{\textrm{i}}$, constant for $\textrm{i}=1,2,3$, $\vct{A}_{\textrm{i}}$ being the Hamilton-Lenz-Runge vectors.

\npgni So we ask: "what conditions guarantee this?" Evidently the scale factor $s(t)=\dfrac{a_{0}(t)}{a_{0}(0)}$, so

$$s(t)=\frac{r_{\textrm{i}}(t)}{r_{\textrm{i}}(0)}=\frac{1+e_{\textrm{i}}\cos\theta_{\textrm{i}}(0)}{1+e_{\textrm{i}}\cos\theta_{\textrm{i}}(t)},\;\;\;\textrm{i}=1,2,3,$$

\npgni and we see it is necessary that $e_{1}=e_{2}=e_{3}=e$ and, if $e\ne0$, $\theta_{1}(0)=\theta_{2}(0)=\theta_{3}(0)=\theta_{0}$ and then $\dot\theta_{1}=\dot\theta_{2}=\dot\theta_{3}$ for all $t\ge0$. But

$$\dot\theta_{\textrm{i}}(t)=\frac{h_{\textrm{i}}}{s^2(t)r_{\textrm{i}}^2(0)},\;\;\;\textrm{i}=1,2,3,$$

\npgni so it is also necessary that $\dfrac{h_{\textrm{i}}}{r_{\textrm{i}}^2(0)}=c$, for 
$\textrm{i}=1,2,3,$ $c$ being the same constant i.e.\vspace{3mm} $h_{1}(0):h_{2}(0):h_{3}(0)=r_{1}^2(0):r_{2}^2(0):r_{3}^2(0)$, the angles between the Hamilton-Lenz-Runge vectors and ${\vct r}_{\textrm{i}}(0)$ having the same common value $\theta_{0}$ and length, $A_{\textrm{i}}=|{\vct A}_{\textrm{i}}|$, where

$$A_{\textrm{i}}=\tilde\mu_{\textrm{i}}e,\;\;\;\textrm{i}=1,2,3,$$

\npgni $e$, with $0<e<1$, can be any value if the above conditions are satisfied. This will be so if we set $\theta_{\textrm{i}}(t)=\theta_{0}(t)$, the true anomaly for the orbit of Jupiter denoted as $\theta_{0}(t)$. Therefore,

$$s(t)=\frac{1+e\cos\theta_{0}(0)}{1+e\cos\theta_{0}(t)},\;\;\;\frac{h_{1}}{r_{1}^2(0)}=\frac{h_{2}}{r_{2}^2(0)}=\frac{h_{3}}{r_{3}^2(0)},\;\;\;\;\;\;\;\;\;\;\;\;\;\;(\ast)$$

\npgni the $\dot r_{\textrm{i}}(0)$ having to be arranged so that\vspace{3mm}

$$e=\sqrt{1+\frac{2h_{\textrm{i}}^2E_{\textrm{i}}}{\tilde\mu_{\textrm{i}}^2}},\;\;\;E_{\textrm{i}}=\frac{1}{2}\left(\dot r_{\textrm{i}}^2(0)+\frac{h_{\textrm{i}}^2}{r_{\textrm{i}}^2(0)}\right)-\frac{\tilde\mu_{\textrm{i}}}{r_{\textrm{i}}(0)},\;\;\;\;\;(\ast\ast)$$

\npgni for $\textrm{i}=1,2,3$. It is simple to check that in this case the transverse equations of motion are also satisfied because $h_{\textrm{i}}$ has the correct value. Lagrange's most general result can be proved in a similar fashion.

\npgni The following result is relevant here. We use the above nomenclature in which potentially the Sun is particle 1 with position $\vct{r}_{1}$ in the centre of mass frame, Jupiter is particle 2 and an asteroid of negligible mass particle 3, with the same conventions.

\begin{theorem}
For particle 3 at the point $\textrm{P}_{3}$ with position vector $\vct{r}_{3}$, relative to $O$ (the mass centre of particle 1, at the point $\textrm{P}_{1}$ and particle 2, at the point $\textrm{P}_{2}$, which are performing 2-body motion under their mutual gravitational attraction) its equation of motion reads:

$$\ddot{\vct{r}}_{3}=\mu_{1}\frac{\vct{r}_{1}-\vct{r}_{3}}{|\vct{r}_{1}-\vct{r}_{3}|^3}+\mu_{2}\frac{\vct{r}_{2}-\vct{r}_{3}}{|\vct{r}_{2}-\vct{r}_{3}|^3}$$

\npgni and a necessary and sufficient condition for the angular momentum of particle 3, $\vct{h}_{3}$ to be constant is that $|\vct{r}_{1}-\vct{r}_{3}|=|\vct{r}_{2}-\vct{r}_{3}|,$ which requires $r_{3}>2^{-1}(r_{2}-r_{1})$ i.e. $\bigtriangleup\textrm{P}_{1}\textrm{P}_{2}\textrm{P}_{3}$ is isosceles. In this case if the motion is one of rotation and scaling in line with the 2-body problem as above\vspace{-5mm}

$$\ddot{\vct{r}}_{3}=-\frac{{\tilde\mu}_{3}}{r_{3}^{3}}\vct{r}_{3},$$\vspace{-5mm}

\npgni the energy $E_{3}=2^{-1}\left(\dot{r_{3}}^2+\dfrac{h_{3}^2}{r_{3}^2}-
\dfrac{{\tilde\mu}_{3}}{r_{3}}\right)$ and the Hamilton-Lenz-Runge vector,

\npgni $\vct{A}_{3}=\dot{\vct{r}}_{3}\times\vct{h}_{3}+{\tilde\mu}_{3}r_{3}^{-1}\vct{r}_{3},$ are constant if and only if ${\tilde\mu}_{3}$ is constant

\npgni where ${\tilde\mu}_{3}=\dfrac{\mu_{1}+\mu_{2}}{k^3}$, $k=\dfrac{|\vct{r}_{1}-\vct{r}_{3}|}{r_{3}}=\left(1-\dfrac{4\mu_{1}\mu_{2}\cos^2\alpha_{0}}{(\mu_{1}+\mu_{2})^2}\right)^{-\frac{1}{2}}$, $\alpha_{0}$ being the constant

\npgni equal angle of $\bigtriangleup\textrm{P}_{1}\textrm{P}_{2}\textrm{P}_{3}$.

\end{theorem}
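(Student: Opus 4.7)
The plan is to establish the angular-momentum characterisation by differentiating $\vct{h}_3=\vct{r}_3\times\dot{\vct{r}}_3$ and exploiting the centre-of-mass identity $\mu_1\vct{r}_1+\mu_2\vct{r}_2=\vct{0}$ (particle 3 has negligible mass) to collapse the right-hand side to a single term. Since $\vct{r}_3\times\vct{r}_3=\vct{0}$, one has $\dot{\vct{h}}_3=\vct{r}_3\times\ddot{\vct{r}}_3$; substituting the stated equation of motion and then replacing $\mu_2\vct{r}_2$ by $-\mu_1\vct{r}_1$ yields
\[
\dot{\vct{h}}_3=\mu_1(\vct{r}_3\times\vct{r}_1)\left\{|\vct{r}_1-\vct{r}_3|^{-3}-|\vct{r}_2-\vct{r}_3|^{-3}\right\}.
\]
Generically $\vct{r}_3\times\vct{r}_1\neq\vct{0}$, so $\dot{\vct{h}}_3\equiv\vct{0}$ is equivalent to the bracket vanishing, i.e.\ to $|\vct{r}_1-\vct{r}_3|=|\vct{r}_2-\vct{r}_3|$. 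The bound $r_3>(r_2-r_1)/2$ then follows because $\textrm{P}_3$ must lie on the perpendicular bisector of $\overline{\textrm{P}_1\textrm{P}_2}$, whose foot in the centre-of-mass frame sits at distance $(r_2-r_1)/2$ from $O$; strict inequality rules out the degenerate collinear case.

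Next I would reduce the force law to Kepler form directly: setting $\rho:=|\vct{r}_1-\vct{r}_3|=|\vct{r}_2-\vct{r}_3|$ and applying the centre-of-mass identity a second time gives $\ddot{\vct{r}}_3=-(\mu_1+\mu_2)\vct{r}_3/\rho^3$, so with $k:=\rho/r_3$ this is the stated inverse-square law with $\tilde\mu_3=(\mu_1+\mu_2)/k^3$. The explicit formula for $k$ comes from planar trigonometry on $\bigtriangleup\textrm{P}_1\textrm{P}_2\textrm{P}_3$. Using $\mu_1 r_1=\mu_2 r_2$ to obtain $(r_2-r_1)/(r_1+r_2)=(\mu_1-\mu_2)/(\mu_1+\mu_2)$, reading off $\cos\alpha_0=(r_1+r_2)/(2kr_3)$ from the right triangle dropped from $\textrm{P}_3$ to the base, and eliminating the altitude $h$ from the two Pythagorean identities $r_3^2=((r_2-r_1)/2)^2+h^2$ and $(kr_3)^2=((r_1+r_2)/2)^2+h^2$, the algebra collapses to $k^{-2}=1-4\mu_1\mu_2\cos^2\alpha_0/(\mu_1+\mu_2)^2$.

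For the conservation equivalence I would differentiate $E_3$ and $\vct{A}_3$ along a trajectory obeying $\ddot{\vct{r}}_3=-\tilde\mu_3(t)\vct{r}_3/r_3^3$, treating $\tilde\mu_3$ as a priori time-dependent. Using $\dot{\vct{r}}_3\cdot\vct{r}_3=r_3\dot r_3$, the energy derivative telescopes to $\dot E_3=-\dot{\tilde\mu}_3/r_3$. For $\vct{A}_3$, the identity $\dot{\vct{h}}_3=\vct{0}$ (just established) combined with the standard Kepler cancellation of the $\tilde\mu_3$-proportional terms in $\ddot{\vct{r}}_3\times\vct{h}_3$ and $d(\vct{r}_3/r_3)/dt$ isolates the lone extra contribution $\dot{\vct{A}}_3=\dot{\tilde\mu}_3\,\vct{r}_3/r_3$ (up to the sign convention in the definition of $\vct{A}_3$). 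Both $\dot E_3$ and $\dot{\vct{A}}_3$ vanish identically iff $\dot{\tilde\mu}_3\equiv0$.

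The main obstacle I anticipate is the passage from ``the bracket vanishes wherever $\vct{r}_3\times\vct{r}_1\neq\vct{0}$'' to ``the isosceles identity holds on the whole orbit'': one needs a brief continuity (or real-analyticity) argument to exclude $\textrm{P}_3$ sitting on the $\textrm{P}_1\textrm{P}_2$-line across a non-trivial time interval, since on such an interval the cross-product factor degenerates and the identity is vacuous. The trigonometric identification of $k$ with $\alpha_0$ is bookkeeping-heavy but purely algebraic once the centre-of-mass relation $\mu_1 r_1=\mu_2 r_2$ is used.
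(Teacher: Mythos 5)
Your proposal is correct and follows essentially the same route as the paper: differentiate $\vct{h}_3=\vct{r}_3\times\dot{\vct{r}}_3$, substitute the equation of motion so only the $\vct{r}_1$ and $\vct{r}_2$ terms survive, and use the centre-of-mass identity $\mu_1\vct{r}_1+\mu_2\vct{r}_2=\vct{0}$ (equivalently $\mu_1 r_1=\mu_2 r_2$ with $\vct{r}_1,\vct{r}_2$ antiparallel) to reduce $\dot{\vct{h}}_3=\vct{0}$ to the equal-side-length condition, with the formula for $k$ coming from elementary trigonometry on the isosceles triangle. You have in fact supplied several details the paper compresses into ``elementary trigonometry completes the proof'' --- the perpendicular-bisector bound, the explicit elimination giving $k^{-2}=1-4\mu_1\mu_2\cos^2\alpha_0/(\mu_1+\mu_2)^2$, and the computation $\dot E_3=-\dot{\tilde\mu}_3/r_3$, $\dot{\vct{A}}_3=\dot{\tilde\mu}_3\vct{r}_3/r_3$ for the final equivalence --- and these check out.
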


\begin{proof}
The angular momentum of particle 3 is $\vct{h}_{3}=\vct{r}_{3}\times\dot{\vct{r}}_{3}$. So

$$\dot{\vct{h}}_{3}=\vct{r}_{3}\times\ddot{\vct{r}}_{3}=\vct{r}_{3}\times\left(\frac{\mu_{1}\vct{r}_{1}}{|\vct{r}_{1}-\vct{r}_{3}|^3}+\frac{\mu_{2}\vct{r}_{2}}{|\vct{r}_{2}-\vct{r}_{3}|^3}\right)$$

\npgni and, if $\hat{\vct{r}}_{2}.\hat{\vct{r}}_{3}=\cos\tilde\alpha$, $\tilde\alpha=\textrm{A}\hat{O}\textrm{J}$,

$$\dot{\vct{h}}_{3}=r_{3}\sin\tilde\alpha\left(-\frac{\mu_{2}r_{2}}{|\vct{r}_{2}-\vct{r}_{3}|^3}+\frac{\mu_{1}r_{1}}{|\vct{r}_{1}-\vct{r}_{3}|^3}\right)$$

\npgni and, since $\mu_{1}\vct{r}_{1}+\mu_{2}\vct{r}_{2}=\vct{0}$, $\mu_{2}r_{2}=\mu_{1}r_{1}$ and the first result follows.

\npgni Now, if $|\vct{r}_{1}-\vct{r}_{3}|=|\vct{r}_{2}-\vct{r}_{3}|$, from $\mu_{1}\vct{r}_{1}+\mu_{2}\vct{r}_{2}=\vct{0}$, the second result follows. Elementary trigonometry completes the proof. 

\end{proof}

\begin{corollary}

\npgni The only way for the $\bigtriangleup\textrm{P}_{1}\textrm{P}_{2}\textrm{P}_{3}$ to be isosceles and to describe the above motion with $k$ and $\alpha_{0}$ constant and $e\ne0$ is for the triangle to be equilateral.

\end{corollary}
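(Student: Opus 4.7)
\noindent\emph{Proof plan.} The plan is to exploit simultaneously the two Keplerian motions now in play. By Theorem~3.2 the asteroid $P_3$ traces an ellipse about $O$ with gravitational parameter $\tilde\mu_3=(\mu_1+\mu_2)/k^3$, while independently Jupiter $P_2$ traces its two-body ellipse about $O$ with $\tilde\mu_2=\mu_1^3/(\mu_1+\mu_2)^2$ (since $\mu_1\vct{r}_1+\mu_2\vct{r}_2=\vct{0}$ collapses the S--J two-body equation to $\ddot{\vct{r}}_2=-\tilde\mu_2\vct{r}_2/r_2^3$). I will show that insisting on constant $\alpha_0$ -- equivalently constant $k$ -- together with $e\ne 0$ over-determines the problem and pins the shape to the equilateral one.

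First I would observe that for the triangle to remain similar (constant $k$ and $\alpha_0$ for all $t$) the radial ratio $r_3(t)/r_2(t)$ must equal some constant $k_2$. For $e\ne 0$ the function $r_2(t)$ is non-constant, so the identity $r_3(t)=k_2 r_2(t)$ forces the two Kepler ellipses to be scaled copies of one another: same eccentricity $e$, same period, and synchronised true anomalies whose perihelia are offset by the constant angle $\theta_3-\theta_2$. I would settle this rigidity algebraically via the conic equation $r=p/(1+e\cos(\theta-\varpi))$, identifying coefficients in
$$p_3\big(1+e_2\cos(\theta_2-\varpi_2)\big)=k_2\,p_2\big(1+e_3\cos(\theta_3-\varpi_3)\big)$$
as an identity in $\theta_2$, forcing $e_2=e_3$, $p_3=k_2 p_2$ and $\varpi_3-\varpi_2=\theta_3-\theta_2$.

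Kepler's third law now applies to the matched periods: $a_3^3/\tilde\mu_3=a_2^3/\tilde\mu_2$ with $a_3/a_2=k_2$ collapses to the single relation
$$k\,k_2=\frac{\mu_1+\mu_2}{\mu_1}.$$
Independently, the isosceles hypothesis together with $r_1=(\mu_2/\mu_1)r_2$ (from $\mu_1\vct{r}_1+\mu_2\vct{r}_2=\vct{0}$) and the law of cosines on $\triangle P_1P_2P_3$ yields $L^2=r_3^2+r_1 r_2$, so
$$k^2-1=\frac{r_1 r_2}{r_3^2}=\frac{\mu_2/\mu_1}{k_2^2}.$$

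Finally, eliminating $k_2$ between these two relations yields, with $\rho=\mu_2/\mu_1$,
$$k^2(1+\rho+\rho^2)=(1+\rho)^2,$$
which is precisely the equilateral value $k^2=(r_1+r_2)^2/(r_1^2+r_1 r_2+r_2^2)$. Since the isosceles shape is determined uniquely by $k$ for fixed mass ratio, this forces $L=r_1+r_2$, i.e.\ the triangle is equilateral. The principal obstacle is the first step -- the rigidity of two Kepler orbits sharing a focus under constant radial ratio -- which I propose to settle by the Fourier-style coefficient-matching in the conic equation above rather than by invoking dynamical-systems uniqueness.
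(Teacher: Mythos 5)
Your proposal is correct and takes essentially the same route as the paper: your conic coefficient-matching under $r_3=k_2r_2$ is exactly the content of the paper's Hamilton--Lenz--Runge identity $\hat{\vct{A}}_{\textrm{i}}.\hat{\vct{r}}_{\textrm{i}}(0)$ for bodies $1$ and $3$ (both encode the shared eccentricity and synchronised anomaly), and the resulting Kepler-III relation $kk_2=(\mu_1+\mu_2)/\mu_1$ is literally the paper's conclusion $k=r(0)/r_3(0)$. Your closing law-of-cosines elimination of $k_2$ is an equivalent, slightly longer finish to the paper's direct observation that $kr_3$ is by definition the equal side and by this relation equals the base $|\vct{r}_1-\vct{r}_2|$, whence the triangle is equilateral.
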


\begin{proof}

\npgni A simple calculation yields $\hat{\vct{A}}_{\textrm{i}}.\hat{\vct{r}}_{\textrm{i}}(0)=\dfrac{1}{\tilde{\mu}_{\textrm{i}}e_{\textrm{i}}}\left(-\dfrac{h_{\textrm{i}}^2}{r_{\textrm{i}}(0)}+\tilde{\mu}_{\textrm{i}}\right)$, for $\textrm{i}=1,2,3$.

\npgni Considering the identity for $\textrm{i}=1$ and $\textrm{i}=3$ and the fact that by hypothesis $e_{1}=e_{3}=e$, the common value of the eccentricity, together with $\dot{\theta}_{1}=\dot{\theta}_{3}$ gives

$$\frac{(r_{3}(0))^2}{\tilde{\mu}_{3}}=\frac{(r_{1}(0))^2}{\tilde{\mu}_{1}},$$

\npgni where $\tilde{\mu}_{1}=\dfrac{\mu_{2}^3}{(\mu_{1}+\mu_{2})^2}$, $\tilde{\mu}_{2}=\dfrac{\mu_{1}^3}{(\mu_{1}+\mu_{2})^2}$ from the 2-body problem and $\tilde{\mu}_{3}=\dfrac{(\mu_{1}+\mu_{2})}{k^3}$

\npgni from the above. Together these yield

$$k=\dfrac{(\mu_{1}+\mu_{2})r_{1}(0)}{\mu_{2}r_{3}(0)}=\dfrac{r(0)}{r_{3}(0)},$$

\npgni by definition of $O$. However, we have assumed that $k=\dfrac{a_{0}}{r_{3}}$.

\end{proof}

\npgni Nevertheless there is a solution in which $k$ and $\alpha_{0}$ vary in time as we see next.

\begin{theorem}
(Isosceles Triangle Equation)

\npgni The equation of motion for $\textrm{P}_{3}$, when $h_{3}=L$ is constant, and $(')=\dfrac{d}{d\theta}$ reduces to:

$$\rho''-\dfrac{2}{r}r'\rho'-\dfrac{L^2r^4}{h^2\rho^3}+\dfrac{(\mu_{1}+\mu_{2})r^4}{h^2}\left(\rho^2+\dfrac{\mu_{1}\mu_{2}}{(\mu_{1}+\mu_{2})^2}r^2\right)^{-\frac{3}{2}}\rho=0,$$

\npgni where $\dfrac{l}{r}=1+e\cos\theta$, $r=|\vct{r}_{1}-\vct{r}_{2}|$, $h=r^2\dot{\theta}$, $l=\dfrac{h^2}{(\mu_{1}+\mu_{2})}$.

\npgni This equation has a unique solution in a neighbourhood of the Lagrange equilateral triangle solution for which

$$\rho^2=\dfrac{1}{4}\left(\dfrac{\mu_{1}-\mu_{2}}{\mu_{1}+\mu_{2}}\right)^2r^2+\dfrac{3}{4}r^2$$

\npgni i.e. $\rho=\tilde{\mu}r$ with $\tilde{\mu}^2=\dfrac{\mu_{1}^2+\mu_{2}^2+\mu_{1}\mu_{2}}{(\mu_{1}+\mu_{2})^2}$. Writing $\rho=\tilde{\mu}a_{0}+\epsilon$, for small $\epsilon$, $a_{0}$ the Lagrange

\npgni solution, for $e=0$, yields $\epsilon''+\tilde{k}^2\epsilon=0$, $\tilde{k}^2=4-3\tilde{\mu}^2>0$, and for $e\sim0$, $\dfrac{l}{r}=1+e\cos{\theta}$, $\rho=\tilde{\mu}r+\epsilon$, $\epsilon=\textrm{exp}(-e\cos{\theta})z$, where $z$ is the 
Mathieu function:

$$\dfrac{d^2z}{d\chi^2}+4(4-3\tilde{\mu}^2+3\tilde{\mu}e\cos{2\chi})z=0,\;\;\chi=\frac{\theta}{2},$$

\npgni for all choices of $\mu_{1}$ and $\mu_{2}$.

\npgni [Note: defining $\dot{\phi}$ by $\rho^2\dot{\phi}=L$, a suitable constant value of angular momentum, we have an isosceles triangle solution of the restricted 3-body problem in which the triangle rotates and pulsates for the circular orbital case].

\end{theorem}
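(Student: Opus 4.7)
The plan is to derive the radial equation for $\textrm{P}_{3}$ under the isosceles hypothesis, pass from time to the true anomaly $\theta$ of the inner two-body orbit, and then linearise.

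First I would exploit Theorem~3.1: under the isosceles condition $|\vct{r}_{1}-\vct{r}_{3}|=|\vct{r}_{2}-\vct{r}_{3}|$, together with $\mu_{1}\vct{r}_{1}+\mu_{2}\vct{r}_{2}=\vct{0}$, the two gravitational terms combine to give the central equation
$$\ddot{\vct{r}}_{3}=-\frac{(\mu_{1}+\mu_{2})}{|\vct{r}_{1}-\vct{r}_{3}|^{3}}\,\vct{r}_{3}.$$
Since $\vct{r}_{3}\cdot\dot{\vct{r}}_{3}$ drops out of the angular part, $h_{3}=\rho^{2}\dot{\phi}=L$ is a (new) constant of the motion, automatically consistent with taking $h_{3}=L$ in the statement. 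To express $|\vct{r}_{1}-\vct{r}_{3}|$ purely in terms of $\rho=|\vct{r}_{3}|$ and $r=|\vct{r}_{1}-\vct{r}_{2}|$, I would drop the foot of the perpendicular from $\textrm{P}_{3}$ onto the line $\textrm{P}_{1}\textrm{P}_{2}$: isosceles forces this foot to be the midpoint $M=(\vct{r}_{1}+\vct{r}_{2})/2$, whence Pythagoras and the centre-of-mass identity give $|M|^{2}=(\mu_{2}-\mu_{1})^{2}r^{2}/\bigl(4(\mu_{1}+\mu_{2})^{2}\bigr)$, and hence
$$|\vct{r}_{1}-\vct{r}_{3}|^{2}=\rho^{2}+\frac{\mu_{1}\mu_{2}}{(\mu_{1}+\mu_{2})^{2}}\,r^{2}.$$

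Next, writing the radial Newton equation $\ddot{\rho}-L^{2}/\rho^{3}=-(\mu_{1}+\mu_{2})\rho/|\vct{r}_{1}-\vct{r}_{3}|^{3}$ and converting via $\dot{()}=(h/r^{2})\,()'$ (so $\ddot{\rho}=(h^{2}/r^{4})(\rho''-2r'\rho'/r)$), multiplication by $r^{4}/h^{2}$ produces exactly the displayed ODE. Direct substitution of $\rho=\tilde{\mu}r$ with $\tilde{\mu}^{2}+\mu_{1}\mu_{2}/(\mu_{1}+\mu_{2})^{2}=1$ then reduces the ODE to the Binet identity $r''-2r'^{2}/r-r+r^{2}/l=0$ for the Keplerian orbit of $\textrm{S}\textrm{J}$, which holds automatically; this verifies the Lagrange equilateral branch and pins $L=\tilde{\mu}^{2}h$.

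For the $e=0$ case, $r\equiv a_{0}=l$ and $r'=0$, so linearising the nonlinear function $F(\rho)=-L^{2}a_{0}^{4}/(h^{2}\rho^{3})+(\mu_{1}+\mu_{2})a_{0}^{4}\rho(\rho^{2}+ca_{0}^{2})^{-3/2}/h^{2}$ at $\rho=\tilde{\mu}a_{0}$ (with $c=\mu_{1}\mu_{2}/(\mu_{1}+\mu_{2})^{2}$) gives $F'(\tilde{\mu}a_{0})=4-3\tilde{\mu}^{2}$, hence $\epsilon''+\tilde{k}^{2}\epsilon=0$; positivity follows from $\tilde{\mu}^{2}=1-c<1<4/3$. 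Uniqueness in a neighbourhood of the equilateral branch is then standard ODE theory once the Jacobian at the equilibrium is nondegenerate.

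The main obstacle — and the step I would save for last — is the small-$e$ reduction to Mathieu form. I would linearise the full $\theta$-ODE about $\rho=\tilde{\mu}r$ and keep all coefficients, obtaining
$$\epsilon''-\frac{2r'}{r}\epsilon'+\Bigl[3+\frac{1-3\tilde{\mu}^{2}}{1+e\cos\theta}\Bigr]\epsilon=0,$$
where the expansion of $(\rho^{2}+cr^{2})^{-3/2}$ about $\rho=\tilde{\mu}r$ collapses beautifully because $\tilde{\mu}^{2}+c=1$. To remove the first-derivative term to leading order in $e$, I would use $r/l=1/(1+e\cos\theta)$ and note that $\ln r=\ln l-e\cos\theta+O(e^{2})$, so the integrating factor $\exp\bigl(\int(r'/r)\,d\theta\bigr)$ is well-approximated by $\exp(-e\cos\theta)$; substituting $\epsilon=\exp(-e\cos\theta)z$ and retaining terms through $O(e)$ yields a Hill-type equation whose $2\chi=\theta$ rescaling produces the stated Mathieu equation. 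The delicate point is tracking the cancellation of the $(1-3\tilde{\mu}^{2})$ term against the explicit $\tilde{\mu}^{2}$-contribution coming from $r''/r-2r'^{2}/r^{2}=1-r/l$, which is what produces the coefficient $4-3\tilde{\mu}^{2}$ of the unperturbed part and the oscillatory forcing of amplitude proportional to $\tilde{\mu}^{2}e$.
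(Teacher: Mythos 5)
Your proposal follows the paper's route exactly: the isosceles condition collapses the two attractions into a single central force with equal side length $a^2=\rho^2+\mu_{1}\mu_{2}r^2/(\mu_{1}+\mu_{2})^2$, the time variable is traded for the true anomaly $\theta$ via $r^2\dot{\theta}=h$, and the remainder is precisely the linearisation that the paper dismisses as ``a simple computation''. Your explicit working is sound throughout --- including the identification $L=\tilde{\mu}^2h$ on the equilateral branch --- and your Mathieu forcing amplitude $3\tilde{\mu}^2e$ (rather than the $3\tilde{\mu}e$ printed in the theorem statement) appears to be the correct coefficient, the two being numerically indistinguishable for $\tilde{\mu}\sim1$.
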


\begin{proof}

The requirement that the $\bigtriangleup\textrm{P}_{1}\textrm{P}_{2}\textrm{P}_{3}$ be isosceles requires $\rho^2\dot{\phi}=L$ and 

$$\dfrac{d^2\rho}{d\phi^2}-\dfrac{L^2}{\rho^3}=-(\mu_{1}+\mu_{2})\left(\rho^2+\dfrac{\mu_{1}\mu_{2}}{(\mu_{1}+\mu_{2})^2}r^2\right)^{-\frac{3}{2}}\hspace{-2mm}\rho,$$

\npgni $L$ constant and the equal side length $a$ being

$$a^2=\rho^2+\dfrac{\mu_{1}\mu_{2}}{(\mu_{1}+\mu_{2})^2}r^2.$$

\npgni To obtain the isosceles triangle equation we merely change the independent variable from $\phi$ to $\theta$, where $r^2\dot{\theta}=h$, which comes from the 2-body equation. The rest of the proof is a simple computation.

\end{proof}

\begin {corollary} (Isosceles Triangle Orbital Equation for $e\sim0$)

\npgni When the eccentricity of the 2-body orbit, $e=0$, for the isosceles triangle solution the equation of the asteroid's orbit is given by the quadrature:

$$\int\frac{du}{\sqrt{E-\dfrac{L^2u^2}{2}+(\mu_{1}+\mu_{2})u\left(1+\dfrac{\mu_{1}\mu_{2}a_{0}^2}{(\mu_{1}+\mu_{2})^2}u^2\right)^{-\frac{1}{2}}}}=\pm\sqrt{\dfrac{2}{L^2}}\int d\phi,$$

\npgni where $u=\dfrac{1}{\rho}$, $E$ being the energy and $a_{0}=|\vct{r}_{2}-\vct{r}_{1}|$. For $E<0$ the isosceles triangle orbit is bounded away from $\rho=0$ and $\rho=\infty$. The orbit is closed if and only if

\npgni $$\Phi=\sqrt{\dfrac{L^2}{2}} \int_{u_{\textrm{min}}}^{u_{\textrm{max}}}\dfrac{du}{\sqrt{E-V_{\textrm{eff}}(u)}}=\dfrac{2\pi m}{n},\;\;m,n\in\mathbb{Z}.$$

\npgni In this case there is a unique circular isosceles triangle solution which is stable for small $\mu_{2}$.

\end{corollary}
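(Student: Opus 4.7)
The plan is to specialise the isosceles triangle equation of Theorem 3.2 to $e=0$, recognise it as a one-dimensional central-force problem, and then extract the quadrature, the boundedness, the closure criterion and the circular orbit from a standard effective-potential analysis.

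First, setting $e=0$ gives $r=a_{0}$ constant and $r'=0$, so the isosceles equation of Theorem 3.2 collapses to $\ddot{\rho}-\rho\dot{\phi}^{2}=-(\mu_{1}+\mu_{2})(\rho^{2}+k)^{-3/2}\rho$ with $k=\mu_{1}\mu_{2}a_{0}^{2}/(\mu_{1}+\mu_{2})^{2}$, together with the first integral $\rho^{2}\dot{\phi}=L$. The right-hand side is $-V'(\rho)$ for the scalar potential $V(\rho)=-(\mu_{1}+\mu_{2})(\rho^{2}+k)^{-1/2}$, so the energy $E=\tfrac{1}{2}\dot{\rho}^{2}+\tfrac{L^{2}}{2\rho^{2}}+V(\rho)$ is conserved.

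Second, I apply the Binet substitution $u=1/\rho$, for which $\dot{\rho}=-L\,du/d\phi$, and use the identity $(1/u^{2}+k)^{-1/2}=u(1+ku^{2})^{-1/2}$. The energy integral becomes $\tfrac{L^{2}}{2}(du/d\phi)^{2}+V_{\textrm{eff}}(u)=E$, where $V_{\textrm{eff}}(u)=\tfrac{L^{2}u^{2}}{2}-(\mu_{1}+\mu_{2})u(1+ku^{2})^{-1/2}$. Solving for $du/d\phi$ and separating variables produces exactly the stated quadrature. As $u\to 0^{+}$, $V_{\textrm{eff}}\to 0$; as $u\to\infty$, the centrifugal term forces $V_{\textrm{eff}}\to+\infty$. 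Hence for $E<0$ the set $\{E\ge V_{\textrm{eff}}\}$ is a compact subinterval $[u_{\textrm{min}},u_{\textrm{max}}]\subset(0,\infty)$, so $\rho$ is trapped away from $0$ and $\infty$. The quadrature identifies the angular displacement between consecutive apsides with $\Phi$, and the orbit closes precisely when $n\Phi=2\pi m$ for integers $m,n$.

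Third, the circular isosceles orbit is the equilibrium $V'_{\textrm{eff}}(u_{0})=0$. A direct computation gives $V'_{\textrm{eff}}(u)=L^{2}u-(\mu_{1}+\mu_{2})(1+ku^{2})^{-3/2}$; the first term is strictly increasing in $u>0$ and the second strictly decreasing, so $V'_{\textrm{eff}}$ is strictly increasing with $V'_{\textrm{eff}}(0)<0$ and $V'_{\textrm{eff}}(+\infty)=+\infty$. Hence there is a unique positive root $u_{0}$, and $V''_{\textrm{eff}}(u_{0})>0$ automatically, yielding linear stability within the reduced radial dynamics.

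The main obstacle, I expect, is the \emph{stability for small $\mu_{2}$} clause. Linear stability of the reduced one-dimensional problem does not by itself control perturbations transverse to the isosceles constraint surface, because $|\textrm{P}_{1}\textrm{P}_{3}|=|\textrm{P}_{2}\textrm{P}_{3}|$ is a dynamical and not a kinematic condition. To close this step one has to return to the rotating-frame linearisation of Section 2.2 and invoke the Routh-type criterion $\mu_{1}\mu_{2}/(\mu_{1}+\mu_{2})^{2}<1/27$ guaranteeing that the four characteristic roots of $\lambda^{4}+\omega^{2}\lambda^{2}+4\omega_{X}^{2}\omega_{Y}^{2}=0$ are purely imaginary. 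This inequality is automatic for $\mu_{2}/\mu_{1}\ll 1$, so the circular isosceles solution inherits neutral stability from the nearby Lagrangian equilateral equilibrium $\mathscr{L}_{4,5}$.
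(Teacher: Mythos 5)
Your proposal is correct and follows essentially the same route as the paper: both reduce the $e=0$ isosceles problem to one-dimensional motion in the effective potential $V_{\textrm{eff}}=\frac{L^2}{2\rho^2}-(\mu_1+\mu_2)(\rho^2+c^2)^{-1/2}$ with $c^2=\mu_1\mu_2a_0^2/(\mu_1+\mu_2)^2$, read off the quadrature from energy conservation, get boundedness for $V_{\textrm{eff}}(\rho_0)<E<0$ from the two turning points, identify closure with the apsidal angle being a rational multiple of $2\pi$ (citing Arnold), and locate the circular orbit at the critical point of $V_{\textrm{eff}}$. The one place you genuinely diverge is the uniqueness of that critical point: the paper sets $X=\rho^2$ and argues graphically that the line $Y_1(X)=(\mu_1+\mu_2)^{1/2}L^{-1}X$ meets the concave curve $Y_2(X)=(X+c^2)^{3/4}$ exactly once (the printed $Y_2=(X^2+c^2)^{3/4}$ is a slip — that function is convex, and the substitution $X=\rho^2$ in $\rho^4(\rho^2+c^2)^{-3/2}$ gives $(X+c^2)$, not $(X^2+c^2)$), whereas your observation that $V'_{\textrm{eff}}(u)=L^2u-(\mu_1+\mu_2)(1+ku^2)^{-3/2}$ increases strictly from $-(\mu_1+\mu_2)$ to $+\infty$ is cleaner, sidesteps that slip, and yields $V''_{\textrm{eff}}(u_0)>0$ automatically, where the paper instead hedges with ``assuming $V''_{\textrm{eff}}>0$''. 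Your caveat on the final clause is also apt: the paper, like your main argument, only establishes stability of the reduced radial dynamics on the isosceles constraint surface and does not control perturbations breaking the isosceles condition, so your appeal to the criterion $\mu_1\mu_2/(\mu_1+\mu_2)^2<1/27$ for small $\mu_2$ is a sensible supplement to, rather than a departure from, what the paper actually proves.
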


\begin{proof}

The effective potential here is

$$V_{\textrm{eff}}=\dfrac{L^2}{2\rho^2}-\dfrac{\mu_{1}+\mu_{2}}{(\rho^2+c^2)^\frac{1}{2}},\;\;\;\textrm{where}\;\;\;c^2=\dfrac{\mu_{1}\mu_{2}}{(\mu_{1}+\mu_{2})^2}a_{0}^2,$$

\npgni $a_{0}$ the radius of the 2-body orbit. So as $\rho\searrow0$, $V_{\textrm{eff}}(\rho)\nearrow\infty$ and as $\rho\nearrow\infty$, $V_{\textrm{eff}}(\rho)\nearrow0$. Further

$$V'_{\textrm{eff}}(\rho)=-\dfrac{L^2}{\rho^3}+\dfrac{(\mu_{1}+\mu_{2})\rho}{(\rho^2+c^2)^\frac{3}{2}}=0$$

\npgni if and only if $\dfrac{\rho^4}{(\rho^2+c^2)^\frac{3}{2}}=\dfrac{L^2}{\mu_{1}+\mu_{2}}.$ This last equation has a solution $\rho_{0}>0$ if and only

\npgni if, for $X=\rho^2$, $Y_{1}(X)=Y_{2}(X)$, where

$$Y_{1}(X)=\dfrac{(\mu_{1}+\mu_{2})^{\frac{1}{2}}}{L}X\;\;\;,\;\;\;Y_{2}(X)=(X^2+c^2)^{\frac{3}{4}},$$

\npgni $Y_{2}$ being convex downwards for $\rho>0$, $Y'_{2}>0$, $Y''_{2}<0$. It is then easy to see graphically that $V_{\textrm{eff}}(\rho)$ has a unique minimiser $\rho=\rho_{0}$ if $E<0$. It follows that, if $V_{\textrm{eff}}(\rho_{0})<E<0$, the equation, $V_{\textrm{eff}}(\rho)=E$, has two solutions $\rho_{\textrm{min}}$, $\rho_{\textrm{max}}$, $0<\rho_{\textrm{min}}<\rho_{\textrm{max}}$, proving the first part of the result. Assuming $L^2$, $c^2$, $(\mu_{1}+\mu_{2})$ are such that $V''_{\textrm{eff}}>0$ implies there is a stable circular orbit at $\rho=\rho_{0}$. The last result is standard (See e.g. Arnold Ref.[3]).
\end{proof}

\npgni For non constant $u$ one can evaluate the integral on the l.h.s. for small $\dfrac{\mu_{1}\mu_{2}a_{0}^2}{(\mu_{1}+\mu_{2})^2}$ in terms of Weierstrass' elliptic function, $\wp$, giving an application to the Hildan asteroids as we show next.

\npgni Letting $\mu_{2}\rightarrow0$ in Corollary 3.2.1 we obtain for the isosceles triangle solution

$$\sqrt{\frac{2}{L^2}}\int d\phi=\int\frac{du}{\sqrt{E-V_{0}}},$$

\npgni where $\rho=\dfrac{1}{u}$ and $V_{0}=\dfrac{L^2}{2}u^2-\mu_{1}u$, $E$ being a negative constant.

\npgni It is easy to deduce from this result that

$$r_{3}=\rho=\dfrac{\tilde l}{1+\tilde e\cos(\phi-\phi_{0})},\;\;\tilde l=\dfrac{L^2}{\mu_{1}},\;\;\tilde e=\sqrt{1+\dfrac{2L^2E}{\mu_{1}^2}},$$

\npgni the 1-body equation of our asteroid, particle 3 in the restricted 3-body problem, when $\mu_{2}=0$.

\npgni The Hildans are in $\dfrac{3}{2}$ resonance with Jupiter and at aphelion on Jupiter's circular orbit of radius $r_{0}$. So for the paradigm Hildan, from Kepler's $3^{\textrm{rd}}$ Law, $\dfrac{2}{3}=(1+\tilde e)^{-\frac{3}{2}}$,

$$r_{0}=\tilde a(1+\tilde e),\;\;\;\;\;\tilde l=\tilde a(1-\tilde e^2),\;\;\;\;\;\tilde e=\sqrt{1+\dfrac{2L^2E}{\mu_{1}^2}},$$\vspace{-10mm}

\npgni where\vspace{-5mm}

$$\tilde e=\left(\dfrac{2}{3}\right)^{-\frac{2}{3}}\hspace{-1mm}-1\sim0.307.$$

\npgni In fact according to the latest data the eccentricities of the Hildans are all less than $0.3$. A simple computation confirms that for this restricted 3-body (Sun-Jupiter-Hildan) problem, $r_{3}>\dfrac{r_{2}-r_{1}}{2}$, $\rho=r_{3}$. So the isosceles triangle solution is available for the Hildans. We now compute the equation of this isosceles triangle orbit as $\mu_{2}\sim0$.

\npgni From the above, correct to $1^{\textrm{st}}$ order,

$$\dfrac{\sqrt{2}}{L}(\phi-\phi_{0})=\int\frac{du}{\sqrt{Q(u)}}\stackrel{\text{def}}{=}z,$$\vspace{-10mm}

\npgni where

$$Q(u)=-\dfrac{\mu_{2}r_{0}^2}{2}u^3-\dfrac{L^2}{2}u^2+(\mu_{1}+\mu_{2})u+E,$$

\npgni $Q(u)$ being the quartic, $Q(u)=a_{0}u^4+4a_{1}u^3+6a_{2}u^2+4a_{3}u+a_{4}$, with,

$$a_{0}=0,\;\;\;\;\;a_{1}=-\dfrac{\mu_{2}r_{0}^2}{8},\;\;\;\;\;a_{2}=-\dfrac{L^2}{12},\;\;\;\;\;a_{3}=\dfrac{\mu_{1}+\mu_{2}}{4},\;\;\;\;\;a_{4}=E.$$

\npgni This enables us to use the celebrated Weierstrass formula:

$$u-u_{0}=\dfrac{1}{4}Q'(u_{0})\Big[\wp(z;g_{2},g_{3})-\dfrac{1}{24}Q''(u_{0})\Big]^{-1},$$

\npgni for quartic invariants $g_{2}$ and $g_{3}$, $\wp$ the Weierstrass elliptic function and $u=u_{0}$ being any root of $Q(u)=0$. (See Whittaker and Watson Ref.[36] pp. 452-3, where their argument still works for cubics i.e. when $a_{0}=0$, as long as we work with the 
appropriate root of the resulting cubic). Correct to first order in $\mu_{2}$,

$$g_{2}=\dfrac{L^4}{48}+\dfrac{\mu_{1}\mu_{2}r_{0}^2}{8},\;\;\;\;\;g_{3}=\dfrac{\mu_{1}\mu_{2}L^2r_{0}^2}{192}+\dfrac{L^6}{1728}.$$

$$Q'(u)=-\dfrac{3}{2}\mu_{2}r_{0}^2u^2-L^2u+(\mu_{1}+\mu_{2}),\;\;\;\;\;Q''(u)=-(3\mu_{2}r_{0}^2u+L^2).$$

\npgni We have proved:-

\begin{theorem}

\npgni Choosing $u_{0}=\dfrac{1}{\rho_{\textrm{max}}}$, $\rho_{\textrm{max}}$ the maximum value of $\rho$ on the asteroid isosceles triangle orbit, we obtain the equation of this orbit correct to first order in $\mu_{2}$:

$$\dfrac{1}{\rho}-\dfrac{1}{\rho_{\textrm{max}}}=\dfrac{1}{4}\left(\mu_{1}+\mu_{2}-\dfrac{3\mu_{2}r_{0}^2}{2\rho_{\textrm{max}}^2}-\dfrac{L^2}{\rho_{\textrm{max}}}\right)\Bigg[\wp\left(\dfrac{\sqrt{2}}{L}(\phi-\phi_{0});g_{2},g_{3}\right)+\dfrac{1}{24}\left(\dfrac{3\mu_{2}r_{0}^2}{\rho_{\textrm{max}}}+L^2\right)\Bigg]^{-1}\hspace{-2mm},$$

\npgni where the roots of the cubic, $C(\rho)$, for sufficiently small $\mu_{2}$, are real $\rho_{\textrm{max}}$, $\rho_{\textrm{min}}$, $\rho_{3}$,

$$C(\rho)=\rho^3+\dfrac{\mu_{1}+\mu_{2}}{E}\rho^2-\dfrac{L^2}{2E}\rho-\dfrac{\mu_{2}r_{0}^2}{2E}=0,$$

\npgni with $\rho_{\textrm{max}}>\rho_{\textrm{min}}>0>\rho_{3}$, and on the orbit $\rho_{\textrm{min}}<\rho<\rho_{\textrm{max}}$.

\end{theorem}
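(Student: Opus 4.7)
\npgni The plan is to convert the stated quadrature into an explicit inversion via Weierstrass' elliptic function, after first pinning down the root structure of the associated cubic. Starting from the quadrature established immediately before the theorem, namely $\frac{\sqrt{2}}{L}(\phi-\phi_{0})=\int du/\sqrt{Q(u)}$ with $Q$ the cubic whose coefficients are recorded via $a_{0}=0$, $a_{1},a_{2},a_{3},a_{4}$, I would first clear denominators under the change $\rho=1/u$: multiplying $Q(1/\rho)=0$ by $\rho^{3}/E$ gives exactly the claimed cubic $C(\rho)$. Turning points of the effective motion are the positive real roots of $C(\rho)=0$, so the aim of the opening step is to identify these with $\rho_{\textrm{max}}$, $\rho_{\textrm{min}}$ and isolate the spurious root $\rho_{3}$.

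\npgni To settle the root pattern I would use continuity in $\mu_{2}$. At $\mu_{2}=0$ the cubic factorises as $\rho\bigl(\rho^{2}+\mu_{1}E^{-1}\rho-L^{2}(2E)^{-1}\bigr)$, whose two non-zero roots recover the classical Kepler perihelion and aphelion and are therefore both positive (as guaranteed by the $\mu_{2}\to 0$ reduction preceding the theorem). Switching on $\mu_{2}>0$ the constant term becomes $-\mu_{2}r_{0}^{2}/(2E)>0$ since $E<0$, so $C(0)>0$; because $C$ is monic with $C(\rho)\to -\infty$ as $\rho\to -\infty$, the zero root migrates to a negative value $\rho_{3}$, while the two positive roots perturb smoothly, yielding $\rho_{\textrm{max}}>\rho_{\textrm{min}}>0>\rho_{3}$ for all sufficiently small $\mu_{2}$. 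The orbit is thus confined to $[\rho_{\textrm{min}},\rho_{\textrm{max}}]$, $u_{0}=1/\rho_{\textrm{max}}$ is a bona fide root of $Q$, and $Q'(u_{0})\ne 0$.

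\npgni Next I would invoke the Weierstrass inversion formula from Whittaker and Watson (Ref.~[36], pp.~452--453), which, crucially, remains valid for a cubic $Q$ provided one expands $u$ around a root of $Q$:
$$u-u_{0}=\dfrac{1}{4}Q'(u_{0})\Bigl[\wp(z;g_{2},g_{3})-\dfrac{1}{24}Q''(u_{0})\Bigr]^{-1},\qquad z=\dfrac{\sqrt{2}}{L}(\phi-\phi_{0}).$$
Substituting $u_{0}=1/\rho_{\textrm{max}}$ into $Q'$ and $Q''$ (whose explicit forms are listed just before the theorem) and carrying the minus sign across via $-\tfrac{1}{24}Q''(u_{0})=\tfrac{1}{24}(3\mu_{2}r_{0}^{2}/\rho_{\textrm{max}}+L^{2})$ immediately yields the displayed formula for $1/\rho-1/\rho_{\textrm{max}}$.

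\npgni Finally, to confirm the stated values of the invariants $g_{2},g_{3}$ I would apply the standard formulae $g_{2}=a_{0}a_{4}-4a_{1}a_{3}+3a_{2}^{2}$ and $g_{3}=a_{0}a_{2}a_{4}+2a_{1}a_{2}a_{3}-a_{2}^{3}-a_{0}a_{3}^{2}-a_{1}^{2}a_{4}$ to the tabulated coefficients and truncate at $O(\mu_{2}^{2})$: the $a_{0}$-terms drop out, $-4a_{1}a_{3}$ contributes $\mu_{2}(\mu_{1}+\mu_{2})r_{0}^{2}/8\sim \mu_{1}\mu_{2}r_{0}^{2}/8$, $2a_{1}a_{2}a_{3}$ produces $\mu_{1}\mu_{2}L^{2}r_{0}^{2}/192$, while $a_{1}^{2}a_{4}$ is $O(\mu_{2}^{2})$ and is discarded, reproducing the stated $g_{2},g_{3}$. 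The main obstacle I anticipate is not any of these individual steps but the careful bookkeeping: one must track the sign of $Q''(u_{0})$ through the Weierstrass formula, ensure that the perturbation argument for $C(\rho)$ does not degenerate (which requires $V''_{\textrm{eff}}(\rho_{0})>0$ at the circular orbit, already secured in Corollary~3.2.1), and verify that truncating $g_{2},g_{3}$ at first order is consistent with the claimed first-order accuracy of the orbit equation itself.
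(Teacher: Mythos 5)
Your proposal is correct and follows essentially the same route as the paper: the theorem there is stated as a summary (``We have proved:-'') of the immediately preceding derivation, namely the quadrature with the degenerate quartic $Q(u)$ ($a_{0}=0$), the Whittaker--Watson inversion formula $u-u_{0}=\tfrac{1}{4}Q'(u_{0})\bigl[\wp(z;g_{2},g_{3})-\tfrac{1}{24}Q''(u_{0})\bigr]^{-1}$ applied at the root $u_{0}=1/\rho_{\textrm{max}}$, and the first-order invariants $g_{2},g_{3}$ computed from $a_{0},\dots,a_{4}$, all of which you reproduce. Your perturbation argument for the root ordering $\rho_{\textrm{max}}>\rho_{\textrm{min}}>0>\rho_{3}$ (via $C(0)=-\mu_{2}r_{0}^{2}/(2E)>0$ and the sign of the product of roots) is a worthwhile supplement, since the paper merely asserts this ordering; it also shows $\rho_{3}\to 0^{-}$ as $\mu_{2}\to 0$, which quietly corrects the paper's later remark that $\rho_{3}\to\infty$.
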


\npgni Needless to say in the limit as $\mu_{2}\rightarrow0$, $\rho_{\textrm{min}}$ and $\rho_{\textrm{max}}$ converge to the roots of the quadratic characterising the Keplerian elliptic orbit for the 1-body problem:

$$q(\rho)=\rho^2+\dfrac{\mu_{1}}{E}\rho-\dfrac{L^2}{2E},$$

\npgni $E<0$, namely $\tilde a(1\pm\tilde e)$,  $\tilde a=\dfrac{r_{0}}{1+\tilde e}$,  $\tilde e=\sqrt{1+\dfrac{2L^2E}{\mu_{1}^2}}=\left(\dfrac{2}{3}\right)^{-\frac{2}{3}}\hspace{-1mm}-1$, $\rho_{3}\rightarrow\infty$ for the paradigm Hildan asteroid.

\npgni Here $E$ and $L$ are given by\hspace{2mm} $-\dfrac{\mu_{1}}{2E}=\left(\dfrac{2}{3}\right)^{\frac{2}{3}}r_{0}$,\hspace{3mm}$\dfrac{L^2}{\mu_{1}}=\left(2-\left(\dfrac{2}{3}\right)^{-\frac{2}{3}}\right)r_{0}$.

\npgni For isosceles triangle solutions of the restricted 3-body problem, as long as 
$r_{3}>\dfrac{r_{2}-r_{1}}{2}$, the above theorem generalises.

\begin{corollary}

\npgni The asymptotic expansion as $\mu_{2}\sim0$ of $\wp$ in the last theorem, for the above $g_{2}$ and $g_{3}$, positive, reduces to:-

$$\wp(z;g_{2},g_{3})=-(c_{0}+\delta c)+\dfrac{3(c_{0}+\delta c)}{\sin^2(\sqrt{3(c_{0}+\delta c)}(z_{0}+\delta z))},$$

\npgni to first order, where $c_{0}=\dfrac{L^2}{24}$, $c_{0}+\delta c=c_{0}\left(1+\dfrac{3\mu_{1}\mu_{2}r_{0}^2}{L^4}\right)$, $\sqrt{3c}z_{0}=\dfrac{1}{2}(\phi-\phi_{0})$,

\npgni $(\phi-\phi_{0})$ being the polar angle of particle 3 measured from $\rho_{\textrm{max}}$, where $\phi=\phi_{0}$,

$$\delta z=z-z_{0}=\dfrac{\delta g_{2}}{144c_{0}^2}\int\dfrac{\sin^4(\sqrt{3c}z)dz}{\cos^2(\sqrt{3c}z)}+\dfrac{(\delta g_{3}-c_{0}\delta g_{2})}{144c_{0}^2}\int\dfrac{\sin^6(\sqrt{3c}z)dz}{\cos^2(\sqrt{3c}z)},$$

\npgni $\delta g_{2}=\dfrac{\mu_{1}\mu_{2}r_{0}^2}{8}$, $\delta g_{3}=\dfrac{\mu_{1}\mu_{2}L^2r_{0}^2}{192}$ ; and\vspace{3mm}

$$\int\dfrac{\sin^4x}{\cos ^2x}dx=\dfrac{\sin^3x}{\cos x}-\dfrac{3}{2}\left(x-\dfrac{1}{2}\sin 2x\right),$$\vspace{1mm}

$$\int\dfrac{\sin^6x}{\cos ^2x}dx=\dfrac{\sin^5x}{\cos x}-\dfrac{5}{4}\left(\dfrac{3}{2}x-\sin 2x+\dfrac{1}{8}\sin 4x\right).$$

\end{corollary}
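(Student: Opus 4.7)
The approach rests on noticing that at $\mu_{2}=0$ the invariants lie on the degenerate locus of the Weierstrass $\wp$ function: a direct check gives $(L^{4}/48)^{3}=L^{12}/110592=27(L^{6}/1728)^{2}$, so the discriminant $g_{2}^{3}-27g_{3}^{2}$ vanishes at $\mu_{2}=0$. In this case the cubic $4u^{3}-g_{2}^{(0)}u-g_{3}^{(0)}$ factors as $4(u-2c_{0})(u+c_{0})^{2}$ with $c_{0}=\sqrt{g_{2}^{(0)}/12}=L^{2}/24$, and the classical trigonometric degeneration formula gives
\[
\wp\bigl(z;g_{2}^{(0)},g_{3}^{(0)}\bigr)=-c_{0}+\frac{3c_{0}}{\sin^{2}(\sqrt{3c_{0}}\,z)}.
\]
This fixes the zeroth-order form of the expansion.

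For the first-order correction I would take the stated ansatz
\[
\wp(z;g_{2},g_{3})=-(c_{0}+\delta c)+\frac{3(c_{0}+\delta c)}{\sin^{2}\!\bigl(\sqrt{3(c_{0}+\delta c)}\,(z_{0}+\delta z)\bigr)}
\]
and determine $\delta c$ and $\delta z$ perturbatively. The constant $\delta c$ is fixed by forcing the ansatz to carry the correct $g_{2}$ invariant to first order: setting $12(c_{0}+\delta c)^{2}=g_{2}^{(0)}+\delta g_{2}$ with $\delta g_{2}=\mu_{1}\mu_{2}r_{0}^{2}/8$ gives $\delta c=\delta g_{2}/(24c_{0})=\mu_{1}\mu_{2}r_{0}^{2}/(8L^{2})$, which rearranges into the stated $c_{0}+\delta c=c_{0}(1+3\mu_{1}\mu_{2}r_{0}^{2}/L^{4})$. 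With this choice the residual in $g_{3}$ is $g_{3}-8(c_{0}+\delta c)^{3}=\delta g_{3}-c_{0}\delta g_{2}$ at first order, which is precisely the combination appearing in the stated coefficient.

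The function $\delta z(z)$ is then extracted by substituting the ansatz into the defining ODE $(\wp')^{2}=4\wp^{3}-g_{2}\wp-g_{3}$. Writing the ansatz as $F(z_{0}+\delta z,\,c_{0}+\delta c)$ with $F(Z,c)=-c+3c/\sin^{2}(\sqrt{3c}Z)$ satisfying $(\partial_{Z}F)^{2}=4F^{3}-12c^{2}F-8c^{3}$, and expanding $(1+\delta z')^{2}\approx 1+2\delta z'$, one obtains a linear equation for $\delta z'$ whose right-hand side, after $\delta c$ absorbs the $g_{2}$-residual, inherits contributions both from the genuine $g_{3}$-residual and from the $\partial_{c}F$ correction accompanying $\delta c$. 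Substituting the explicit forms $F=-c+3c/\sin^{2}(\sqrt{3c}z)$ and $(\partial_{Z}F)^{2}=108c^{3}\cos^{2}(\sqrt{3c}z)/\sin^{6}(\sqrt{3c}z)$ then separates the right-hand side cleanly into the two integrands $\sin^{4}/\cos^{2}$ and $\sin^{6}/\cos^{2}$ with coefficients $\delta g_{2}/(144c_{0}^{2})$ and $(\delta g_{3}-c_{0}\delta g_{2})/(144c_{0}^{2})$ respectively; term-by-term integration in $z$ then produces the stated formula for $\delta z$.

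The two elementary antiderivatives at the end of the corollary follow by the substitution $y=\tan(\sqrt{3c}z)$, which converts $\int\sin^{2n}x/\cos^{2}x\,dx$ into an elementary rational integral in $y$, or equivalently by the reduction $\sin^{2n}x/\cos^{2}x=\tan^{2}x\,\sin^{2n-2}x-\sin^{2n-2}x$ combined with the standard formulae for $\int\sin^{2k}x\,dx$. The principal obstacle, as I see it, is the careful bookkeeping in the linearisation: verifying that after $\delta c$ absorbs the $g_{2}$-residual, the remaining right-hand side decomposes exactly into the two stated trigonometric integrands with the numerical factor $1/(144c_{0}^{2})$. This requires correctly balancing the $\partial_{c}F$ contribution produced by $\delta c$ against the $\wp_{0}'$ contribution produced by $\delta z$, and tracking how the leftover $(\delta g_{3}-c_{0}\delta g_{2})$ disperses between the $\sin^{4}$ and $\sin^{6}$ channels upon expanding $\wp_{0}=-c+3c/\sin^{2}(\sqrt{3c}z)$ in powers of $\sin^{-2}$.
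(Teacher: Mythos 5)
Your proposal is correct and follows essentially the same route as the paper: both exploit the vanishing (to first order in $\mu_{2}$) of the discriminant $g_{2}^{3}-27g_{3}^{2}$ to reduce $\wp$ to its trigonometric degeneration about the double root $-c$, absorb $\delta g_{2}$ into $\delta c$ via $12c^{2}=g_{2}$, and obtain $\delta z$ from the first-order perturbation of the quadrature. The only (immaterial) difference is that you linearise the defining ODE $(\wp')^{2}=4\wp^{3}-g_{2}\wp-g_{3}$ whereas the paper expands the Abelian integral $\int d\wp/\sqrt{4\wp^{3}-g_{2}\wp-g_{3}}$ directly in $\delta g_{2},\delta g_{3}$ after substituting $\wp=-c+3c/\sin^{2}(\sqrt{3c}\,w)$ — these are the same computation.
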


\begin{proof}

\npgni The point is that the discriminant, $\Delta$, of $(4\wp^3-g_{2}\wp-g_{3})$ is $\Delta=g_{2}^3-27g_{3}^2$, i.e.

$$\Delta=\left(\dfrac{L^4}{48}+\dfrac{\mu_{1}\mu_{2}r_{0}^2}{8}\right)^3-27\left(\dfrac{\mu_{1}\mu_{2}L^2r_{0}^2}{192}+\dfrac{L^6}{1728}\right)^2>0$$

\npgni reduces to $\Delta=\dfrac{\mu_{1}^2\mu_{2}^2r_{0}^4L^4}{4096}+\dfrac{\mu_{1}^3\mu_{2}^3r_{0}^3}{512}=0$, to first order in $\mu_{2}$ and $g_{2}>0$, $g_{3}>0$.

\npgni So there is a solution $c=c_{0}+\delta c$ of the pair of equations, correct to first order,

$$12c^2=\dfrac{L^4}{48}+\dfrac{\mu_{1}\mu_{2}r_{0}^2}
{8}=g_{2},\;\;\;8c^3=\dfrac{\mu_{1}\mu_{2}L^2r_{0}^2}{192}+\dfrac{L^6}{1728}=g_{3},$$

\npgni namely $c=c_{0}+\delta c$ above. So for our purposes the cubic in $\wp$ has still got a pair of equal roots $-c$ and $-c$, the remaining root being $2c$ and we can find

$$\int \dfrac{d\wp}{\sqrt{4\wp^3-g_{2}\wp-g_{3}}}=\int dz=z_{0}+\delta z$$

\npgni by expanding in $\delta g_{2}$ and $\delta g_{3}$ if you use the new integration variable $w$, where

$$\wp=-c+\dfrac{3c}{\sin^2(\sqrt{3c}w)},\;\;\;\textrm{a simple enough expression}.$$

\end{proof}\vspace{-8mm}

\npgni It follows that for $\tilde{e}$ the eccentricity and $\tilde{l}$ the semi-latus rectum of the ellipse corresponding to $\mu_{2}=0$, correct to first order in $\mu_{2}$ as $\mu_{2}\sim0$, $u=\dfrac{1}{\rho}$ and $u_{0}=u_{\textrm{min}}$,\vspace{3mm}

$$u-u_{0}=\dfrac{2(\tilde{e}-\mu_{1}^{-1}\mu_{2})\sin^2(\sqrt{3(c+\delta c)}(z_{0}+\delta z))}{\tilde{l}+\mu_{1}^{-1}\mu_{2}r_{0}\left(\dfrac{r_{0}}{\tilde{l}}+\sin^2\left(\dfrac{\phi-\phi_{0}}{2}\right)\right)},$$\vspace{-5mm}

\npgni where $r_{0}=r_{\textrm{max}}=\dfrac{\tilde{l}}{1-\tilde{e}}$, $u_{0}=\dfrac{1}{r_{0}}$, always assuming $E<0$. This remarkable aysmptotic formula details the gravitational effects of the second body e.g. Jupiter in its orbit. Unfortunately the simplifying assumption of $\Delta=0$ means that it blows up e.g. when $(\phi-\phi_{0})=\pi$. The full formula involving $\wp$ does not have this problem when $\Delta>0$.\vspace{-3mm}

\npgni The following result, for the motion which is bounded away from $0$ and $\infty$, for which we give a simple proof, has none of these problems and gives the equations of the isosceles triangle orbit in all generality, as long as the third particle mass is negligible.

\begin{theorem}
\npgni Assuming that the orbit lies in the region $r_{3}>\dfrac{1}{2}(r_{2}-r_{1})$, given that $0<w<\sqrt{\dfrac{\mu_{1}}{\mu_{2}r_{2}^2}}$ i.e. $0<\dfrac{1}{\sqrt{\rho^2+\dfrac{\mu_{2}}{\mu_{1}}r_{2}^2}}<\sqrt{\dfrac{\mu_{1}}{\mu_{2}r_{2}^2}}$, the equation of the orbit of particle 3 is given by

$$\hspace{-5mm}\dfrac{1}{2}\int\dfrac{dw}{\left(1-\sqrt{\dfrac{\mu_{2}r_{2}^2}{\mu_{1}}}w\right)\sqrt{E+(\mu_{1}+\mu_{2})w-\left(\dfrac{L^2}{2}+\dfrac{\mu_{2}}{\mu_{1}}r_{2}^2E\right)w^2-\dfrac{(\mu_{1}+\mu_{2})}{\mu_{1}}\mu_{2}r_{2}^2w^3}}$$

$$\;\;\;\;+\dfrac{1}{2}\int\dfrac{dw}{\left(1+\sqrt{\dfrac{\mu_{2}r_{2}^2}{\mu_{1}}}w\right)\sqrt{E+(\mu_{1}+\mu_{2})w-\left(\dfrac{L^2}{2}+\dfrac{\mu_{2}}{\mu_{1}}r_{2}^2E\right)w^2-\dfrac{(\mu_{1}+\mu_{2})}{\mu_{1}}\mu_{2}r_{2}^2w^3}}$$

$$\hspace{50mm}=\sqrt{\dfrac{2}{L^2}}\int d\phi,$$

\npgni $w^{-1}=|\vct{r}_{3}-\vct{r}_{1}|$ being the equal side length in the isosceles $\bigtriangleup P_{1}P_{2}P_{3}$.

\end{theorem}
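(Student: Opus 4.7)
\npgni The plan is to derive the quadrature by a single change of variable in the integral already established in Corollary 3.2.1, passing from $u = 1/\rho$ to $w = 1/a$, where $a$ is the equal side length of $\bigtriangleup P_{1}P_{2}P_{3}$. In the circular two-body setting, with $\mu_{1}r_{1}=\mu_{2}r_{2}$ and $r = r_{1}+r_{2} = (\mu_{1}+\mu_{2})r_{2}/\mu_{1}$, the constant appearing in Theorem 3.2 is
$$c^{2} \;=\; \dfrac{\mu_{1}\mu_{2}}{(\mu_{1}+\mu_{2})^{2}}\,r^{2} \;=\; \dfrac{\mu_{2}}{\mu_{1}}\,r_{2}^{2},$$
so that $a^{2} = \rho^{2} + c^{2}$ and $w = 1/a$ is the natural variable for the isosceles-triangle orbit equation. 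The hypothesis $r_{3} > \tfrac{1}{2}(r_{2}-r_{1})$ guarantees that the isosceles configuration is available, and the stated range $0 < w < \sqrt{\mu_{1}/(\mu_{2}r_{2}^{2})} = 1/c$ is exactly the condition $a > c$, i.e.\ $\rho > 0$.

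\npgni Next I would write the substitution explicitly as $u = w(1 - c^{2}w^{2})^{-1/2}$, which is a diffeomorphism of $(0,\,1/c)$ onto $(0,\infty)$, and record three short identities that drive everything: $(1 + c^{2}u^{2})^{-1/2} = \sqrt{1 - c^{2}w^{2}}$, hence $(\mu_{1}+\mu_{2})\,u\,(1 + c^{2}u^{2})^{-1/2} = (\mu_{1}+\mu_{2})\,w$; $\tfrac{1}{2}L^{2}u^{2} = L^{2}w^{2}/[2(1-c^{2}w^{2})]$; and $du = (1-c^{2}w^{2})^{-3/2}\,dw$. Inserting these into the integrand of Corollary 3.2.1 and clearing the factor $1/(1-c^{2}w^{2})$ from inside the square root yields
$$\dfrac{du}{\sqrt{E - V_{\textrm{eff}}(u)}} \;=\; \dfrac{dw}{(1 - c^{2}w^{2})\,\sqrt{Q(w)}},$$
$$Q(w) \;=\; E + (\mu_{1}+\mu_{2})w - \left(\dfrac{L^{2}}{2} + \dfrac{\mu_{2}}{\mu_{1}}r_{2}^{2}E\right)w^{2} - \dfrac{(\mu_{1}+\mu_{2})\mu_{2}}{\mu_{1}}r_{2}^{2}\,w^{3},$$
which is precisely the cubic under the square roots in the statement.

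\npgni The last step is the partial-fraction decomposition
$$\dfrac{1}{1 - c^{2}w^{2}} \;=\; \dfrac{1}{2}\left[\dfrac{1}{1 - cw} + \dfrac{1}{1 + cw}\right], \qquad c = \sqrt{\dfrac{\mu_{2}r_{2}^{2}}{\mu_{1}}},$$
which splits the integrand into the two pieces displayed in the theorem, equated with $\pm\sqrt{2/L^{2}}\int d\phi$ exactly as in Corollary 3.2.1. The only obstacle is bookkeeping: one must verify that the substitution $u\leftrightarrow w$ is a genuine diffeomorphism on the admissible interval so the quadrature is equivalent and not merely formal, and must keep track of the branch of the square root so that the orbit is traversed monotonically between its apsides $w_{\min}$ and $w_{\max}$. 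Everything else is direct computation and matches the stated formula line for line.
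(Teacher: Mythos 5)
Your proposal is correct and is essentially the paper's own argument: the paper's proof consists of writing the energy-conservation orbit equation in $(\rho,\phi)$ and applying the substitution $w=\bigl(\rho^{2}+\tfrac{\mu_{2}}{\mu_{1}}r_{2}^{2}\bigr)^{-1/2}$, which is exactly your change of variable, merely composed through $u=1/\rho$ and the Corollary 3.2.1 quadrature first. The algebraic identities, the emergence of the cubic $Q(w)$, and the partial-fraction split of $(1-c^{2}w^{2})^{-1}$ that you spell out are precisely the details the paper leaves implicit in the phrase ``gives the solution.''
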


\begin{proof}

\npgni The equation of the orbit satisfies

$$\dfrac{L^2}{\rho^4}\left(\dfrac{d\rho}{d\phi}\right)^2+\dfrac{L^2}{\rho^2}-2(\mu_{1}+\mu_{2})\left(\rho^2+\dfrac{\mu_{2}}{\mu_{1}}r_{2}^2\right)^{-\frac{1}{2}}=2E.$$\vspace{-5mm}

\npgni Using the substitution $w=\dfrac{1}{\sqrt{\rho^2+\dfrac{\mu_{2}}{\mu_{1}}r_{2}^2}}$ gives the solution.

\end{proof}\vspace{-5mm}

\npgni When the discriminant of the cubic is such that its roots are real, the integrals on the l.h.s. can be expressed in terms of elliptic integrals of the first and third kind provided the cubic within the integrand factorises appropriately. In the nomenclature of Gradshteyn and Ryzhik, (see Ref.[15]),

$$\hspace{-90mm}\int_{b}^{w}\dfrac{dx}{(x-d)\sqrt{(a-x)(x-b)(x-c)}}$$

$$=\dfrac{2}{(c-d)(b-d)\sqrt{a-c}}\Bigg[(c-b)\Pi\left(K,\left(\dfrac{c-d}{b-d}\right)p^2,p\right)+(b-d)F(K,p)\Bigg],$$

\npgni where $p=\sqrt{\dfrac{a-b}{a-c}}$, $K=\arcsin\sqrt{\dfrac{(a-c)(w-b)}{(a-b)(w-c)}}$ and\vspace{3mm}

$$F(\phi,k)=\int_{0}^{\phi}\dfrac{d\alpha}{\sqrt{1-k^2\sin^2\alpha}}=\int_{0}^{\sin\phi}\dfrac{dx}{\sqrt{(1-x^2)(1-k^2x^2)}},$$

$$\Pi(\phi,n,k)=\int_{0}^{\phi}\dfrac{d\alpha}{(1-n\sin^2\alpha)\sqrt{1-k^2\sin^2\alpha}}=\int_{0}^{\sin\phi}\dfrac{dx}{(1-nx^2)\sqrt{(1-x^2)(1-k^2x^2)}}.$$

\npgni Following methods introduced by Legendre this result is easily obtained using the fractional substitution $y^2=\dfrac{(a-c)(x-b)}{(a-b)(x-c)}$, the details being left as an exercise.

\npgni (See also McKean and Moll Ref.[18] or Abramowitz and Stegun Ref.[2]).

\npgni Here $a,\;b,\;c$ are the roots of our cubic above, $E<0$ by assumption with $d\ne b$, $a>b>0>c$. This follows by considering the product of the roots and the sum of the product of roots in pairs, both being negative. $a,\;b,\;c$ can be calculated using Vieta's formula assuming the discriminant condition for real roots:

$$(2B^3-9ABC+27A^2D)^2<4(B^2-3AC)^3,$$

\npgni where $A=-\dfrac{(\mu_{1}+\mu_{2})\mu_{2}r_{2}^2}{\mu_{1}}$, $B=-\left(\dfrac{L^2}{2}+\dfrac{\mu_{2}r_{2}^2E}{\mu_{1}}\right)$, $C=\mu_{1}+\mu_{2}$ and $D=E$.

\npgni This can be expressed in powers of $L^2$ and $E$:

$$C(X)=4\gamma YX^3+(12\gamma Y^2+1)X^2+4(3\gamma Y^2+5)YX+4\gamma(Y^2-\gamma^{-1})^2>0,$$

\npgni where $X=\dfrac{L^2}{2}$, $Y=\dfrac{\mu_{2}r_{2}^2}{\mu_{1}}E$ and $\gamma=\dfrac{\mu_{1}}{(\mu_{1}+\mu_{2})^2\mu_{2}r_{2}^2}$. Viewing the above expression\vspace{-3mm}

\npgni $C(X)$ as a cubic in $X$, with $C(0)>0$ and $Y<0$, we easily deduce that provided $Y^2\ne\gamma^{-1}$ there exists an open set for which $X>0$ and the cubic is positive definite, thus ensuring the existence of the roots $a,\;b,\;c$.

\npgni The limits of the motion are $a\ge w\ge b$, $w=\dfrac{1}{\sqrt{\rho^2+\dfrac{\mu_{2}}{\mu_{1}}r_{2}^2}}$. Moreover the orbit is closed if and only if

$$\hspace{-5mm}\dfrac{L}{2\sqrt{2}}\int_{b}^a\dfrac{dw}{\left(1-\sqrt{\dfrac{\mu_{2}r_{2}^2}{\mu_{1}}}w\right)\sqrt{E+(\mu_{1}+\mu_{2})w-\left(\dfrac{L^2}{2}+\dfrac{\mu_{2}}{\mu_{1}}r_{2}^2E\right)w^2-\dfrac{(\mu_{1}+\mu_{2})}{\mu_{1}}\mu_{2}r_{2}^2w^3}}$$

$$\;+\dfrac{L}{2\sqrt{2}}\int_{b}^a\dfrac{dw}{\left(1+\sqrt{\dfrac{\mu_{2}r_{2}^2}{\mu_{1}}}w\right)\sqrt{E+(\mu_{1}+\mu_{2})w-\left(\dfrac{L^2}{2}+\dfrac{\mu_{2}}{\mu_{1}}r_{2}^2E\right)w^2-\dfrac{(\mu_{1}+\mu_{2})}{\mu_{1}}\mu_{2}r_{2}^2w^3}}$$

$$\hspace{100mm}=\Phi=\dfrac{2\pi m}{n},\;\;m,n\in\mathbb{Z}.$$\vspace{-10mm}

\npgni For the whole orbit to be described by this isosceles triangle solution we need

\npgni $\dfrac{1}{a}>\dfrac{1}{2}(r_{2}+r_{1})$ and then we have the hidden constant: $\dfrac{|\textrm{AJ}|}{|\textrm{AS}|}=\dfrac{|\vct{r}_{3}-\vct{r}_{2}|}{|\vct{r}_{3}-\vct{r}_{1}|}=1$, on the entire orbit.\vspace{-3mm}

\npgni \textbf{Remarks}\vspace{-3mm}

\npgni 1. For the Trojan asteroid system, $\tilde{k}=1.00143$.

\npgni 2. When Jupiter's orbit is circular there is no energy transfer whatever (c.f. Theorem 3.4).

\npgni 3. See the Appendix for Kepler's $4^{\textrm{th}}$ Law for 3-body problems when $a\sim b$.

\npgni 4. Of course Trojan asteroids, suitably perturbed, provide an example of this result for small $\mu_{2}$. There could be others hopefully. Needless to say constants of the motion here include the ratios of the sides of the triangle be it equilateral or isosceles. This result is the most general because the triangle is isosceles if and only if $\vct{h}_{3}$ is conserved and $(\ast)$ and $(\ast\ast)$ depend on this for their validity.

\subsection{Linearised Problem for Equlateral Triangle Elliptical Orbits}\vspace{-5mm}

\npgni From the above analysis, setting $\omega=\dot{\theta_{0}}(t)$, the common value of $\dot{\theta_{\textrm{i}}}(t)$, $\textrm{i}=1,2,3$, for $e\ne0$, working in rotating axes $O\tilde{x}$, $O\tilde{y}$; $O\tilde{x}$ being parallel to $\overset{\rightharpoonup}{O\textrm{J}}$, the Hamiltonian is Jacobi's integral\vspace{-5mm}

$$K=2^{-1}(\tilde{p}_{x}^2+\tilde{p}_{y}^2)+\dot\theta_{0}(t)(\tilde y\tilde p_{x}-\tilde x\tilde p_{y})+\tilde V,\;\;\;\textrm{where}\;\;\;\tilde V=-\dfrac{\mu_{1}}{|\textrm{SA}|}-\dfrac{\mu_{2}}{|\textrm{JA}|}.$$\vspace{-5mm}

\npgni Observing that $\dfrac{\partial K}{\partial t}=\ddot{\theta}(t)(\tilde y\tilde p_{x}-\tilde x\tilde p_{y})$ we see that when $e\ne0$, $K$ is not a constant.

\npgni However, setting $\tilde{\vct{A}}=\dot\theta(t)(-\tilde y,\tilde x)$,

$$K=2^{-1}(\tilde {\vct{p}}-\tilde {\vct{A}})^{2}+\tilde V-\frac{\tilde {\vct{A}}^2}{2}=2^{-1}(\tilde {\vct{p}}-\tilde {\vct{A}})^{2}+\tilde W,$$\vspace{-8mm}

\npgni $\tilde W=\tilde V-\dfrac{\tilde {\vct{A}}^2}{2}$, in the rotating frame.

\npgni Therefore, we obtain, if $\overset{\rightharpoonup}{O\textrm{A}}=\vct r$ in the rest frame, in the rotating frame,

$$\ddot{\tilde{\vct r}}=-\textrm{grad}\tilde{W}-\frac{\partial\tilde{\vct{A}}}{\partial t}+\dot{\tilde{\vct{r}}}\times\textrm{curl}\tilde{\vct{A}},$$\vspace{-8mm}

\npgni where $\tilde{W}=-\dfrac{\dot\theta_{0}^2}{2}(\tilde{x}^2+\tilde{y}^2)+\tilde{V}$. However, when $\tilde{\vct{r}}=\overset{\rightharpoonup}{O\textrm{L}}=\vct{r}_{0}$ in the rest frame our equation is satisfied if $\dot{\theta_{0}}=\omega$, so it is natural to linearise about this equilibrium point by writing, $\tilde{\vct{r}}=\vct{r}_{0}+\delta\vct{r}_{0}$, where we write $\delta\vct{r}=(\delta(t),\epsilon(t))$ in the rotating coordinates giving,

\npgni \;\;\;\;\;\;\;\;\;\;\;$\ddot{\vct{r}}_{0}+\delta\ddot{\tilde{\vct{r}}}=-\textrm{grad}\tilde{W}{\rceil}_{0}-\delta\vct{r}.\textrm{grad}\tilde{W}{\rceil}_{0}-\dfrac{\partial}{\partial t}(\tilde{\vct{A}}_{0}+(\delta\vct{r}_{0}.\boldsymbol{\nabla})\tilde{\vct{A}}{\rceil}_{0})$

\npgni \;\;\;\;\;\;\;\;\;\;\;\;\;\;\;\;\;\;\;\;\;\;\;\;\;\;\;\;\;\;\;\;\;\;\;\;\;\;\;\;\;\;\;\;\;\;\;\;\;\;\;\;\;\;\;\;\;\;\;\;\;\;\;\;$+(\dot{\vct{r}_{0}}+\delta\dot{\tilde{\vct{r}}})\times(\textrm{curl}\tilde{\vct{A}}
{\rceil}_{0}+(\delta\vct{r}_{0}.\boldsymbol{\nabla})\textrm{curl}\tilde{\vct{A}}{\rceil}_{0})$,

\npgni where ${\rceil}_{0}$ means evaluate at $\textrm{L}=\mathscr{L}_{4,5}$. This gives the linearised equation

$$(\ddot{\delta},\ddot{\epsilon})=-\left(\delta\frac{\partial}{\partial x}+\epsilon\frac{\partial}{\partial y}\right)(\textrm{grad}\tilde{W}){\rceil}_{0}+\ddot{\theta}_{0}(t)(\epsilon,-\delta)+2\dot{\theta}_{0}(t)(\dot{\epsilon},-\dot{\delta}).$$

\npgni To solve this equation we use analyticity in $e$ and use the asymptotic series:-

$$\dbinom{\delta}{\epsilon}=\dbinom{\delta_{0}}{\epsilon_{0}}+e\dbinom{\delta_{1}}{\epsilon_{1}}+e^2\dbinom{\delta_{2}}{\epsilon_{2}}+\cdots,$$

\npgni which may even converge in sup norm for almost periodic functions.

\npgni We work only to first order in the eccentricity $e\sim0$ giving (using Jupiter's or $\textrm{P}_{2}\textrm{'}s$ parameters),

$$\dot{\theta_{0}}=\frac{h_{\textrm{J}}}{l_{\textrm{J}}^2}(1+e\cos{\theta_{0}(t)})^2\sim\frac{h_{\textrm{J}}}{l_{\textrm{J}}^2}(1+2e\cos{\theta_{0}(t)}),$$

$$\ddot{\theta_{0}}=\frac{2h_{\textrm{J}}^2}{l_{\textrm{J}}^4}(1+e\cos{\theta_{0}(t)})^3(-e\sin{\theta_{0}(t)})\sim-2\omega^2e\sin{\theta_{0}(t)},$$\vspace{-5mm}

\npgni i.e. correct to first order in $e$, $\ddot{\theta_{0}}\sim-2\omega^2e\sin{\omega t}$. So we obtain,

$$\ddot{\delta}-2\dot{\theta_{0}}\dot{\epsilon}-\frac{3}{4}{\dot\theta_{0}}^2\delta-\Omega^2\epsilon=\epsilon\ddot\theta_{0},$$

$$\ddot{\epsilon}+2\dot{\theta_{0}}\dot{\delta}-\frac{9}{4}{\dot\theta_{0}}^2\epsilon-\Omega^2\delta=-\delta\ddot\theta_{0}.$$\vspace{-8mm}

\npgni Therefore, correct to first order in $e$,

\npgni \;\;\;\;\;\;\;\;\;\;\;$\begin{pmatrix}D^2-\frac{3}{4}\omega^2 & -2\omega D-\Omega^2 \\[0.4em] 2\omega D-\Omega^2 & D^2-\frac{9}{4}\omega^2 \end{pmatrix}\begin{pmatrix}{\delta}\\[0.4em] \epsilon\end{pmatrix}$\vspace{5mm}

$$\;\;\;\;\;\;\;=e\begin{pmatrix}3\omega^2\cos{\theta} & 4\omega\cos{\theta}D-2\omega^2\sin{\theta} \\[0.4em] 2\omega^2\sin{\theta}-4\omega\cos{\theta}D & 9\omega^2\cos{\theta} \end{pmatrix}\begin{pmatrix}{\delta}\\[0.4em] \epsilon\end{pmatrix},$$\vspace{-5mm}

\npgni $D=\dfrac{d}{dt}$, where on bounded time intervals, correct to first order, $e\cos{\theta}\sim e\cos{\omega t}+\textrm{O}(e^2)$\vspace{-3mm}

\npgni and $e\sin{\theta}\sim e\sin{\omega t}+\textrm{O}(e^2)$, giving

\npgni \;\;\;\;\;\;\;\;\;\;\;$\begin{pmatrix}D^2-\frac{3}{4}\omega^2 & -2\omega D-\Omega^2 \\[0.4em] 2\omega D-\Omega^2 & D^2-\frac{9}{4}\omega^2 \end{pmatrix}\begin{pmatrix}{\delta_{1}}\\[0.4em] \epsilon_{1}\end{pmatrix}$\vspace{5mm}

$$\;\;\;\;\;\;\;=\begin{pmatrix}3\omega^2\cos{\omega t} & 4\omega\cos{\omega t}D-2\omega^2\sin{\omega t} \\[0.4em] 2\omega^2\sin{\omega t}-4\omega\cos{\omega t}D & 9\omega^2\cos{\omega t} \end{pmatrix}\begin{pmatrix}{\delta_{0}}\\[0.4em] \epsilon_{0}\end{pmatrix}.$$

\npgni As expected we see that $\dbinom{\delta_{1}}{\epsilon_{1}}$ is almost periodic as is $\dbinom{\delta_{0}}{\epsilon_{0}}$.\vspace{5mm}

\begin{lemma}

By inspection of the last identity, when $e\sim0$, for $\sigma_{\textrm{B}}$ the Bohr spectrum,

$$\sigma_{\textrm{B}}(X)=\sigma_{\textrm{B}}(Y)=\sigma_{\textrm{B}}(\delta)=\sigma_{\textrm{B}}(\epsilon)=\{\pm\alpha,\pm\beta\pm(\omega-\alpha),\pm(\omega+\alpha)\},$$

\npgni where $\sigma_{\textrm{B}}(\delta_{0})=\sigma_{\textrm{B}}(\epsilon_{0})=\{\pm\alpha,\pm\beta\}$. So to find $\dbinom{\delta}{\epsilon}$ all we need to do is to calculate the coefficient of $\textrm{e}^{\pm irs}$ in $\dbinom{\delta(s)}{\epsilon(s)}$ for $r\in\sigma_{\textrm{B}}(X)=\sigma_{\textrm{B}}(Y)$.

\end{lemma}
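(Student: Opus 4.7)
The plan is pure Fourier-series bookkeeping on the asymptotic expansion $(\delta,\epsilon) = (\delta_{0},\epsilon_{0}) + e(\delta_{1},\epsilon_{1}) + O(e^{2})$. At zeroth order in $e$ the system reduces to the $e=0$ problem already solved in Section 2.3: $X(t)$ and $Y(t)$ are real linear combinations of $\sin(\alpha t+\phi_{1})$, $\cos(\alpha t+\phi_{1})$, $\sin(\beta t+\phi_{3})$, $\cos(\beta t+\phi_{3})$, so $\sigma_{\textrm{B}}(X) = \sigma_{\textrm{B}}(Y) = \{\pm\alpha,\pm\beta\}$. Since $(\delta,\epsilon)$ differs from $(X,Y)$ only by the fixed rotation through the angle $\gamma$, these spectra transfer verbatim, establishing the side remark $\sigma_{\textrm{B}}(\delta_{0}) = \sigma_{\textrm{B}}(\epsilon_{0}) = \{\pm\alpha,\pm\beta\}$.

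Next I turn to the first-order inhomogeneous system displayed immediately before the lemma. Its right-hand side is a $2\times 2$ matrix with entries of the form $\omega^{2}\cos\omega t$, $\omega^{2}\sin\omega t$, and $\omega\cos\omega t\,D$, acting on the column $(\delta_{0},\epsilon_{0})^{T}$. Applying the standard product-to-sum identities to each $\cos\omega t\cdot e^{\pm i\alpha t}$ and $\sin\omega t\cdot e^{\pm i\alpha t}$ (and the analogous products with $\beta$), the forcing becomes a finite trigonometric polynomial whose frequencies all lie in $\{\pm\alpha,\pm\beta,\pm(\omega\pm\alpha),\pm(\omega\pm\beta)\}$. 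Because the operator on the left is constant-coefficient, each Fourier mode of the forcing may be inverted independently, and the particular solution $(\delta_{1},\epsilon_{1})$ is then supported on exactly the same set of frequencies as the forcing. Passing back through the rotation by $\gamma$ and taking unions produces the spectrum asserted for $\sigma_{\textrm{B}}(X)=\sigma_{\textrm{B}}(Y)=\sigma_{\textrm{B}}(\delta)=\sigma_{\textrm{B}}(\epsilon)$.

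The closing remark of the lemma is then immediate: since $\delta$ and $\epsilon$ admit a finite-frequency Bohr--Fourier representation, they are completely determined by the scalars $\textrm{P}_{r}(\delta)$ and $\textrm{P}_{r}(\epsilon)$ as $r$ runs through the (finite) spectrum $\sigma_{\textrm{B}}(X)$; each such coefficient can be read off from the explicit Fourier-mode ansatz evaluated at $t=0$, exactly as was done in Section 2.3 to identify $|C_{1}|,|C_{3}|,\phi_{1},\phi_{3}$.

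The only real obstacle is the non-resonance verification needed for the first-order inversion: one must check that none of $i(\omega\pm\alpha)$, $i(\omega\pm\beta)$ is a root of the characteristic polynomial $\lambda^{4}+\omega^{2}\lambda^{2}+4\omega_{X}^{2}\omega_{Y}^{2}$. Using the Vieta relations $\alpha^{2}+\beta^{2}=\omega^{2}$ and $\alpha^{2}\beta^{2}=4\omega_{X}^{2}\omega_{Y}^{2}$ obtained from $t^{4}-\omega^{2}t^{2}+4\omega_{X}^{2}\omega_{Y}^{2}=0$, each case reduces to a simple algebraic identity in $\alpha,\beta,\omega$; a genuine resonance would force a relation such as $\omega+\alpha=\beta$, hence $\alpha=\beta$, contradicting the hypothesis. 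All remaining steps are elementary trigonometry and linear algebra, which is precisely why the lemma holds ``by inspection'' of the matrix identity preceding it.
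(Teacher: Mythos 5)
Your proposal is correct and is essentially the argument the paper intends: the lemma is asserted ``by inspection'' with no written proof, and the Fourier bookkeeping you supply (zeroth-order spectrum $\{\pm\alpha,\pm\beta\}$ from Section 2.3, product-to-sum identities shifting the forcing frequencies by $\pm\omega$, mode-by-mode inversion of the constant-coefficient operator) is exactly what the subsequent ``delta and epsilon Analysis'' subsection carries out term by term, so you have in effect written down the proof the authors omitted. Two small points. First, your derivation yields the full set $\{\pm\alpha,\pm\beta,\pm(\omega\pm\alpha),\pm(\omega\pm\beta)\}$, whereas the lemma as printed reads $\{\pm\alpha,\pm\beta\pm(\omega-\alpha),\pm(\omega+\alpha)\}$; this is evidently a typographical slip in the paper (the later analysis explicitly adds the ``$\beta$ terms'' $\textrm{e}^{i(\omega\pm\beta)t}$), and your version is the consistent one. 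Second, your non-resonance check is slightly off in its stated conclusion and not quite exhaustive: relations like $\omega+\alpha=\beta$ or $\omega-\alpha=\beta$ combined with $\alpha^{2}+\beta^{2}=\omega^{2}$ force $\alpha\beta=0$ (not $\alpha=\beta$), which contradicts $\alpha^{2}\beta^{2}=4\omega_{X}^{2}\omega_{Y}^{2}>0$; the case $\omega-\alpha=\alpha$ is excluded because $\alpha^{2}\ge\omega^{2}/2$; but the remaining case $\omega-\beta=\beta$, i.e.\ $\omega=2\beta$, is a genuine resonance occurring at the exceptional mass ratio $\mu_{1}\mu_{2}/(\mu_{1}+\mu_{2})^{2}=1/36$, which lies inside the admissible range $<1/27$. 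So the lemma holds for generic mass ratios (and comfortably for Sun--Jupiter, where $f_{2}=\beta/\omega\approx0.08$), but that one degenerate value should be excluded; neither the paper nor your write-up flags it.
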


\subsection{delta and epsilon Analysis}\vspace{-6mm}

\npgni We concentrate on first order approximations to $\dbinom{\delta}{\epsilon}$, namely\vspace{3mm}

$$\dbinom{\delta}{\epsilon}=\dbinom{\delta_{0}}{\epsilon_{0}}+e\dbinom{\delta_{1}}{\epsilon_{1}},$$

\npgni where $e$ is the orbital eccentricity and\vspace{-3mm}

$$\dbinom{\delta_{1}}{\epsilon_{1}}=A^{-1}\begin{pmatrix}3\omega^2\cos{\omega t}\delta^{(0)} & (4\omega\cos{\omega t}D-2\omega^2\sin{\omega t})\epsilon^{(0)} \\[0.4em] (2\omega^2\sin{\omega t}-4\omega\cos{\omega t}D)\delta^{(0)} & 9\omega^2\cos{\omega t}\epsilon^{(0)} \end{pmatrix},$$

\npgni with $A=\begin{pmatrix}D^2-\frac{9}{4}\omega^2 & 2\omega D+\Omega^2 \\
[0.4em] -2\omega D+\Omega^2 & D^2-\frac{3}{4}\omega^2 \end{pmatrix}$ and $D=\dfrac{d}{dt}$.

\npgni We write $\Delta=\textrm{Det}A$. Evidently $\omega\pm\alpha\in\sigma_{\textrm{B}}\dbinom{\delta_{1}(s)}{\epsilon_{1}(s)}$ so e.g. writing $\delta_{1}^{1}$, $\epsilon_{1}^{1}$ terms in $\textrm{e}^{i(\omega+\alpha)t}$, $\delta_{0}^{1}$, $\epsilon_{0}^{1}$ terms in $\textrm{e}^{i\omega t}$ in $\delta_{0}$, $\epsilon_{0}$\vspace{-3mm}

$$\dbinom{\delta_{1}^{1}}{\epsilon_{1}^{1}}=\Delta^{-1}\begin{pmatrix}D^2-\frac{9}{4}\omega^2 & 2\omega D+\Omega^2 \\[0.4em] -2\omega D+\Omega^2 & D^2-\frac{3}{4}\omega^2 \end{pmatrix}\begin{pmatrix}\frac{3}{2}\omega^2\delta_{0}^{1}+(2\omega i\alpha+i\omega^2)\epsilon_{0}^{1} \\[0.4em] (-2\omega i\alpha-i\omega^2)\delta_{0}^{1}+\frac{9}{2}\omega^2\epsilon_{0}^{1}\end{pmatrix}\textrm{e}^{i(\omega+\alpha)t},$$\vspace{-3mm}

\npgni $D=\dfrac{d}{dt}$, $\Delta=(D^4+\omega^2D^2-\Omega^2+\frac{27}{16}\omega^4)\rceil_{D=(\omega+\alpha)i}$ etc.

\npgni Assume $\delta_{0}(t)=\delta_{0}^{1}\textrm{e}^{i\alpha t}+\delta_{0}^{1}\textrm{e}^{-i\alpha t}+\delta_{0}^{3}\textrm{e}^{i\beta t}+\delta_{0}^{3}\textrm{e}^{-i\beta t}$ etc.

\npgni Contribution of $\delta_{1}^{0}$ to $\delta_{1}^{1}$ reads

\npgni $\Delta^{-1}\left((2\omega(\alpha+\omega)-i\Omega^2)\omega(\omega+2\alpha)-\frac{3}{2}\omega^2\left(\frac{13}{4}\omega^2+2\omega\alpha+\alpha^2\right)\right)\delta_{0}^{1}\textrm{e}^{i(\omega+\alpha)t}$

\npgni \hspace{120mm}$=A_{\delta}^{\delta}(\alpha)\delta_{0}^{1}\textrm{e}^{i(\omega+\alpha)t}$.\vspace{-5mm}

\npgni Contribution of $\epsilon_{0}^{1}$ to $\delta_{1}^{1}$ reads

\npgni $\Delta^{-1}\left(\frac{9}{2}\omega^2(2i\omega(\alpha+\omega)+\Omega^2)-i\omega(\omega+2\alpha)\left(\frac{13}{4}\omega^2+2\omega\alpha+\alpha^2\right)\right)\epsilon_{0}^{1}\textrm{e}^{i(\omega+\alpha)t}$

\npgni \hspace{120mm}$=A_{\epsilon}^{\delta}(\alpha)\epsilon_{0}^{1}\textrm{e}^{i(\omega+\alpha)t}$.\vspace{-5mm}

\npgni Contribution of $\delta_{0}^{1}$ to $\epsilon_{1}^{1}$ reads

\npgni $\Delta^{-1}\left(\frac{3}{2}\omega^2(-2i\omega(\alpha+\omega)+\Omega^2)+\omega(\omega+2\alpha)\left(2\omega(\omega+\alpha)-i\Omega^2\right)\right)\delta_{0}^{1}\textrm{e}^{i(\omega+\alpha)t}$

\npgni \hspace{120mm}$=A_{\delta}^{\epsilon}(\alpha)\delta_{0}^{1}\textrm{e}^{i(\omega+\alpha)t}$.\vspace{-5mm}

\npgni Contribution of $\epsilon_{0}^{1}$ to $\epsilon_{1}^{1}$ reads

\npgni $\Delta^{-1}\left((2\omega(\alpha+\omega)+i\Omega^2)\omega(\omega+2\alpha)+\frac{9}{2}\omega^2\left(2i\omega(\alpha+\omega)+\Omega^2\right)\right)\epsilon_{0}^{1}\textrm{e}^{i(\omega+\alpha)t}$

\npgni \hspace{120mm}$=A_{\epsilon}^{\epsilon}(\alpha)\epsilon_{0}^{1}\textrm{e}^{i(\omega+\alpha)t}$.

\npgni Of course there are similar terms for the $\beta$ root, $-\beta$ root and $-\alpha$ root. For the terms in $\textrm{e}^{i(\omega-\alpha)t}$ contributing we merely set $\delta_{0}^{1}\rightarrow\bar\delta_{0}^{1}$, $\epsilon_{0}^{1}\rightarrow\bar\epsilon_{0}^{1}$ and $\alpha\rightarrow-\alpha$. And for terms in $\textrm{e}^{i(\omega\pm\beta)t}$ we merely replace $\alpha$ by $\beta$. So as an example we obtain,

\npgni $\delta(t)=\delta_{0}(t)+e\left(A_{\delta}^{\delta}(\alpha)\delta_{0}^{1}\textrm{e}^{i(\omega+\alpha)t}+A_{\delta}^{\delta}(-\alpha)\bar\delta_{0}^{1}\textrm{e}^{i(\omega-\alpha)t}+\bar A_{\delta}^{\delta}(\alpha)\bar\delta_{0}^{1}\textrm{e}^{-i(\omega+\alpha)t}\right.$

\npgni \hspace{35mm}$\left.+\bar A_{\delta}^{\delta}(-\alpha)\delta_{0}^{1}\textrm{e}^{-i(\omega-\alpha)t}+A_{\epsilon}^{\delta}(\alpha)\epsilon_{0}^{1}\textrm{e}^{i(\omega+\alpha)t}+A_{\epsilon}^{\delta}(-\alpha)\bar\epsilon_{0}^{1}\textrm{e}^{i(\omega-\alpha)t}\right.$

\npgni \hspace{55mm}$\left.+\bar A_{\epsilon}^{\delta}(\alpha)\bar\epsilon_{0}^{1}\textrm{e}^{-i(\omega+\alpha)t}+\bar A_{\epsilon}^{\delta}(-\alpha)\epsilon_{0}^{1}\textrm{e}^{-i(\omega-\alpha)t}+\beta\;\textrm{terms}\right)$

\npgni $\delta(t)=\delta_{0}(t)+e\textrm{e}^{i(\omega+\alpha)t}\left(A_{\delta}^{\delta}(\alpha)\delta_{0}^{1}+A_{\epsilon}^{\delta}(\alpha)\epsilon_{0}^{1}\right)+\;\textrm{c.c.}$

\npgni \hspace{35mm}$+e\textrm{e}^{i(\omega-\alpha)t}\left(A_{\delta}^{\delta}(-\alpha)\bar\delta_{0}^{1}+A_{\epsilon}^{\delta}(-\alpha)\bar\epsilon_{0}^{1}\right)+\;\textrm{c.c.}+\beta\;\textrm{terms}$.

\npgni We therefore see that there are 6 constants, the original 2 plus $e|A_{\delta}^{\delta}(\alpha)\delta_{0}^{1}+A_{\epsilon}^{\delta}(\alpha)\epsilon_{0}^{1}|$ etc. In addition there are similar terms for $\epsilon(t)$.

\subsection{Constants of Integration in Detail}

\npgni We write $\alpha=f_{1}\omega$ etc. $D=i(\omega+\alpha)$ in cases considered below and

\npgni $\Delta=D^4+\omega^2D^2-\Omega^4+\dfrac{27}{16}\omega^4$, $\Omega^2=\dfrac{3\sqrt{3}}{4}\omega^2$, giving in our case $\Delta=(1+f_{1})^2(f_{1}^2+2f_{1})\omega^4$.

$$A_{\delta}^{\delta}(\alpha)=\frac{\omega^4\left((1+2f_{1})(2(1+f_{1})-i\frac{3\sqrt{3}}{4})-\frac{3}{2}(\frac{13}{4}+2f_{1}+f_{1}^2)\right)}{\omega^4(1+f_{1})^2(f_{1}^2+2f_{1})},$$

\npgni i.e.\hspace{16mm} $A_{\delta}^{\delta}(\alpha)=\dfrac{\left(-\frac{23}{8}+3f_{1}+\frac{5}{2}f_{1}^2-i\frac{3\sqrt{3}}{4}(1+2f_{1})\right)}{(1+f_{1})^2(f_{1}^2+2f_{1})}$.

$$A_{\epsilon}^{\delta}(\alpha)=\frac{\omega^4\left(\frac{9}{2}(\frac{3\sqrt{3}}{4}+2i(1+f_{1}))-i(1+2f_{1})(\frac{13}{4}+2f_{1}+f_{1}^2)\right)}{\omega^4(1+f_{1})^2(f_{1}^2+2f_{1})},$$

\npgni i.e.\hspace{14mm} $A_{\epsilon}^{\delta}(\alpha)=\dfrac{\left(\frac{27\sqrt{3}}{8}+i(\frac{23}{4}+\frac{1}{2}f_{1}-5f_{1}^2-2f_{1}^3)\right)}{(1+f_{1})^2(f_{1}^2+2f_{1})}$.

\npgni So for the above two terms we get contributions to $\delta(t)$

$$\delta(t)=\delta_{0}(t)+e\left(A_{\delta}^{\delta}(\alpha)\delta_{0}^{1}+A_{\epsilon}^{\delta}(\alpha)\epsilon_{0}^{1}\right)\textrm{e}^{i(\omega+\alpha)t}+\;\textrm{c.c.}$$

\npgni So a constant of integration here would have to be the coefficient of $\cos((\omega+\alpha)t+\psi)$, where $\psi=\arg\left(A_{\delta}^{\delta}(\alpha)\delta_{0}^{1}+A_{\epsilon}^{\delta}(\alpha)\epsilon_{0}^{1}\right)$, which reads:-

$$2e|A_{\delta}^{\delta}(\alpha)\delta_{0}^{1}+A_{\epsilon}^{\delta}(\alpha)\epsilon_{0}^{1}|\cos((\omega+\alpha)t+\psi).$$

\npgni $\delta_{0}^{1}$ and $\epsilon_{0}^{1}$ are 'constants of motion' for the zero eccentricity circular orbit and so because of the factor of $e$ outside, correct to first order in $e$, they are constant in this expansion. Similar reasoning applies to the other terms in this sum for $\delta(t)$ and $\epsilon(t)$. Incidentally from Szebehely (Ref.[30]), for the Sun-Jupiter system, $f_{1}=0.996758$ and $f_{2}=0.080464$.

\npgni In the case of the circular orbit there exist periodic elliptical orbit solutions for the third body (asteroid):

$$X=4\omega\alpha C\sin \alpha t\spc;\spc Y=2(\alpha^2+2\omega_{X}^2)C\cos \alpha t.$$

\npgni These yield

$$\delta_{0}^{1}=\left(\frac{(\alpha^2+2\omega_{X}^2)}{2}\sin{\gamma}+i2\omega\alpha\cos{\gamma}\right)C_{1},$$

$$\epsilon_{0}^{1}=\left(-\frac{(\alpha^2+2\omega_{X}^2)}{2}\cos{\gamma}+i2\omega\alpha\sin{\gamma}\right)C_{1}$$

\npgni and the constant of integration term associated with $\textrm{e}^{i(\omega+\alpha)t}$ in $(\delta(t),\epsilon(t))$ is

$$A_{\delta}^{\delta}(\alpha)=e\Bigg|\dfrac{\left(-\frac{23}{8}+3f_{1}+\frac{5}{2}f_{1}^2-i\frac{3\sqrt{3}}{4}(1+2f_{1})\right)\delta_{0}^{1}+\left(\frac{27\sqrt{3}}{8}+\frac{1}{2}f_{1}-5f_{1}^2-2f_{1}^3)\right)\epsilon_{0}^{1}}{(1+f_{1})^2(f_{1}^2+2f_{1})}\Bigg||C_{1}|.$$

\npgni The first factor is of course a constant as $|C_{1}|$ is a constant in the circular orbit case and because of the external factor $e$, to within first order, we have a constant of integration for our solution even for $e\ne0$, $e$ small.

\npgni \textbf{Part 2}

\section{Semi-classical and Quantum Mechanical Results for WIMPs}

\subsection{Asymptotics of Newtonian Quantum Gravity}\vspace{-5mm}

\npgni We envisage a cloud of diffusing quantum particles (WIMPs), condensing to form a celestial body orbiting on a curve $C_{0}$, where the WIMPs' trajectories converge to classical periodic orbits on $C_{0}$, eventually fusing together with other particles to form the classical body. In our example the celestial body is the Trojan asteroid moving on the orbit, centred at $\textrm{L}=\mathscr{L}_{4,5}$, the Lagrange equilibrium points for the restricted 3-body problem. The convergence to the classical orbit (in the central manifold for this linearised problem) is due to our diffusion process which is the semi-classical limit of Nelson's stochastic mechanics for a Schrödinger stationary state wave function $\psi$, where $\psi\sim\textrm{exp}\left(\frac{R+iS}{\hbar}\right)$ as $\hbar\sim0$. (See Refs.[9],[10],[11]). In our case it turns out that the relevant potential energy is the isotropic harmonic oscillator as we shall see.

\npgni Here the quantum particle density corresponding to $\psi$, $\rho\sim\textrm{exp}\left(\dfrac{2R}{\hbar}\right)$ as $\hbar\sim0$, $\rho$ (after suitable normalisation) being the invariant density for the semi-classical limit of our diffusion process. If the quantum Hamiltonian,

$$H=\frac{{\vct p}^2}{2}+V({\vct q}),\;\;\;H\psi=-\frac{\hbar^2}{2}\Delta\psi+V\psi=E\psi,\;\;\;E\;\textrm{being the energy},$$

\npgni $V$ the potential energy for the gravitational forces involved and $\psi=\psi(\vct x)$, $\vct x$, a point in the configuration space. We obtain as $\hbar\sim0$

$$-\left(\frac{\hbar}{2}(\Delta R+i\Delta S)+\frac{(\boldsymbol{\nabla}R+i\boldsymbol{\nabla}S)^2}{2}\right)\psi+V\psi=E\psi,$$

\npgni so it is necessary on the support of $\psi$ that in the limit:-

$$\boldsymbol\nabla R.\boldsymbol\nabla S=0,\spc2^{-1}(|\boldsymbol\nabla S|^2-|\boldsymbol\nabla R|^2)+V=E.$$

\npgni These are our semi-classical equations valid in a neighbourhood of $C_{0}$. The corresponding semi-classical dynamics for a particle in the cloud is

$$\frac{d}{dt}\vct{X}_t=\vct{b}(\vct{X}_t)=(\boldsymbol\nabla S+\boldsymbol\nabla R)(\vct{X}_t),$$

\npgni $R$ and $S$ satisfying the above equations. We assume $R$ achieves its global maximum on the curve $C_{0}$. Since

$$\frac{d}{dt}R(\vct{X}_t)=((\boldsymbol\nabla S+\boldsymbol\nabla R).\boldsymbol\nabla R)(\vct{X}_t)=|\boldsymbol\nabla R|^2(\vct{X}_t)\ge 0$$

\npgni $R(\vct{X}_t)\nearrow R_{\textrm{max}}$ as $t\nearrow\infty$ giving us the desired convergence to orbits on $C_{0}$, where $\boldsymbol\nabla R=\vct{0}$ and

$$2^{-1}|\boldsymbol\nabla S|^2+V=E$$

\npgni i.e. $S\rceil_{C_0}=S_0$, the classical Hamilton-Jacobi function corresponding to classical mechanics on $C_{0}$ in potential $V$. Moreover, we see that

$$2^{-1}(|\boldsymbol\nabla S|^2+|\boldsymbol\nabla R|^2)=E-V+|\boldsymbol\nabla R|^2,$$

\npgni so the asymptotics of Newtonian quantum gravity at lowest level are governed by the effective potential

$$V_{\textrm{eff}}=V-|\boldsymbol\nabla R|^2.$$

\npgni (See Refs.[9],[10],[11]).

\npgni If we know the Schrödinger wave function concentrated with minimal uncertainty on the curve of the classical orbit $C_{0}$ and can compute its asymptotics we will have detailed knowledge of $V_{\textrm{eff}}$. We have computed these asymptotics for the main potentials arising for Newtonian gravity i.e. for the Kepler problem and isotropic oscillator giving spiral orbits converging to well known elliptical orbits. (see Ref.[32]). In the absence of the solution to the Schrödinger equation concentrated on $C_{0}$ the question is how to find $R$ and $S$ in a neighbourhood of $C_{0}$. We discuss this problem in 2-dimensions next.

\subsection{Approximate Solutions in a Neighbourhood of the Classical Orbit in 2-
Dimensions}

\npgni We consider semi-classical motion in the effective potential $(V-|\boldsymbol\nabla R|^2)$. Since the particle velocity in a neighbourhood of $C_{0}$ is

$$\vct{v}=\boldsymbol\nabla R+\boldsymbol\nabla S,\;\;\;v=|\vct{v}|=|\boldsymbol\nabla (R+S)|$$

\npgni i.e. $v=\sqrt{|\boldsymbol\nabla R|^2+|\boldsymbol\nabla S|^2}$, $v$ is the speed at time $t$ if we evaluate the right hand side at $\vct{X}_t$. Initially, we work in 2-dimensions only.

\npgni The equations of motion reduce to

$$\kappa v^2=\pm(\vct{n}.\boldsymbol\nabla)(V-|\boldsymbol\nabla R|^2)\;\;\;\;\;(i),$$

$$\frac{1}{v}\frac{d}{dt}\left(\frac{v^2}{2}\right)=-(\vct{t}.\boldsymbol\nabla)(V-|\boldsymbol\nabla R|^2)\;\;\;\;\;(ii)$$

\npgni $\kappa$ being the curvature of the semi-classical orbit at time $t$, $\vct{n}$ and $\vct{t}$ unit normals and tangents, at time $t$, to the orbit. Evidently $(ii)$ is a simple consequence of the energy conservation equation so there is only one remaining equation in 2-dimensions in the neighbourhood of $C_{0}$, namely (NLE),

$$\kappa(|\boldsymbol\nabla R|^2+|\boldsymbol\nabla S|^2)=\left(1+\frac{|\boldsymbol\nabla R|^2}{|\boldsymbol\nabla S|^2}\right)^{-\frac{1}{2}}\left(\frac{|\boldsymbol\nabla R|}{|\boldsymbol\nabla S|}\widehat{\boldsymbol\nabla S}-\widehat{\boldsymbol\nabla R}\right).\boldsymbol\nabla(V-|\boldsymbol\nabla R|^2)\;\;\;\;(\textrm{NLE}),$$

$$\boldsymbol\nabla R=\left(\dfrac{\partial R}{\partial x},\dfrac{\partial R}{\partial y}\right),\;\;\;\widehat{\boldsymbol\nabla S}=\dfrac{1}{|\boldsymbol\nabla R|}\left(-\dfrac{\partial R}{\partial y},\dfrac{\partial R}{\partial x}\right)$$

\npgni in cartesians and $|\boldsymbol\nabla S|=\sqrt{2(E-V)+|\boldsymbol\nabla R|^2}$.

\npgni Parallel Curves

\npgni Parallel curves are curves at a fixed distance $|d|$ away from an existing curve, say $C_{0}$, with parametric equations $(x,y)=(x_{0}(t),y_{0}(t))$. The parallel curve parametric equations are given by $(x,y)=(x_{d}(t),y_{d}(t))$, where

$$x_{d}(t)=x_{0}(t)+\frac{d\dot{y}_{0}(t)}{\sqrt{\dot{x}_{0}^2(t)+\dot{y}_{0}^2(t)}},\;\;\;y_{d}(t)=y_{0}(t)-\frac{d\dot{x}_{0}(t)}{\sqrt{\dot{x}_{0}^2(t)+\dot{y}_{0}^2(t)}}.$$

\npgni $d$ can be positive or negative and the distance between the two curves is $|d|$.

\npgni It is easy to show that $(\dot{x}_{d}(t),\dot{y}_{d}(t))$ is parallel to $(\dot{x}_{0}
(t),\dot{y}_{0}(t))$ and for curvatures $\kappa_{0}$ and $\kappa_{d}$,

$$\kappa_{d}=\frac{\kappa_{0}}{1+d\kappa_{0}}.$$

\npgni So in our neighbourhood as $d\sim0$, for $\vct{n}$ the inward pointing normal,

$$\kappa_{d}(|\boldsymbol\nabla R|^2+|\boldsymbol\nabla S|^2)=\pm\vct{n}.\boldsymbol\nabla(V-|\boldsymbol\nabla R|^2),$$

\npgni where $|\boldsymbol\nabla S|^2\rightarrow|2(E-V_{0})|$ as $d\sim0$. We therefore obtain:

\begin{lemma}

\npgni If at time $t$, $\vct{X}_t=(x_{d}(t),y_{d}(t))$ for some $(x_{0}(t),y_{0}(t))\in C_{0}$ then as $t\nearrow\infty$ and $|d|\sim0$, we obtain for $\dfrac{\partial}{\partial n}$, the outward normal derivative on $C_{0}$,

$$|\boldsymbol\nabla R|^2\sim d\left(2(E-V_{0})\kappa_{0}+\dfrac{\partial V_{0}}{\partial n}\right)$$

\npgni and

$$|\boldsymbol\nabla S|^2\sim d\left(2(E-V_{0})\kappa_{0}+\dfrac{\partial V_{0}}{\partial n}\right)+2(E-V_{0}),$$

\npgni where the subscript $0$ means that we have to evaluate the expressions at $(x_{0}(t),y_{0}(t))$.

\end{lemma}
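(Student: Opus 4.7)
The plan is to Taylor-expand every quantity appearing in the equation of motion in powers of the signed normal distance $d$ from $C_{0}$, using as an ansatz the fact that $R$ attains its global maximum on $C_{0}$ so that $\boldsymbol{\nabla}R$ vanishes there. Concretely, I would write $|\boldsymbol{\nabla}R|^{2}=dA+O(d^{2})$ along the outward normal, where $A$ is a function defined on $C_{0}$ that we need to identify, and then use the energy-conservation constraint $|\boldsymbol{\nabla}S|^{2}=2(E-V)+|\boldsymbol{\nabla}R|^{2}$ proved earlier, which immediately gives $|\boldsymbol{\nabla}S|^{2}=2(E-V_{0})-2d\,\partial V_{0}/\partial n+dA+O(d^{2})$. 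Thus once $A$ is determined both asymptotic formulae fall out simultaneously.

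The identification of $A$ is where the equation of motion $\kappa_{d}(|\boldsymbol{\nabla}R|^{2}+|\boldsymbol{\nabla}S|^{2})=\pm\,\vct{n}.\boldsymbol{\nabla}(V-|\boldsymbol{\nabla}R|^{2})$ is used. I would insert the energy relation to rewrite the left-hand side as $\kappa_{d}(2|\boldsymbol{\nabla}R|^{2}+2(E-V))$, expand $V=V_{0}+d\,\partial V_{0}/\partial n+O(d^{2})$, and use the parallel-curve formula $\kappa_{d}=\kappa_{0}/(1+d\kappa_{0})=\kappa_{0}+O(d)$ established just before the lemma. Simultaneously, the outward normal derivative on the right-hand side, evaluated at the point $\vct{X}_{t}$ on the parallel curve, contributes $\partial V_{0}/\partial n-A+O(d)$, because $\vct{n}.\boldsymbol{\nabla}|\boldsymbol{\nabla}R|^{2}|_{C_{0}}=A$ by the ansatz. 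Matching $d^{0}$ coefficients collapses the relation to $2\kappa_{0}(E-V_{0})=\pm(\partial V_{0}/\partial n-A)$, and choosing the sign consistent with $C_{0}$ being a classical orbit (so that on $C_{0}$ the centripetal acceleration $\kappa_{0}\,v_{0}^{2}$ points inward along the gradient of $V$) yields exactly $A=2(E-V_{0})\kappa_{0}+\partial V_{0}/\partial n$, which reproduces the stated formula for $|\boldsymbol{\nabla}R|^{2}$, and combining with energy conservation gives the companion formula for $|\boldsymbol{\nabla}S|^{2}$.

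The hypothesis $t\nearrow\infty$ enters only to justify writing the trajectory as a parallel curve of $C_{0}$ with small parameter $d$: from the monotonicity $\tfrac{d}{dt}R(\vct{X}_{t})=|\boldsymbol{\nabla}R|^{2}(\vct{X}_{t})\geq 0$ derived in the previous subsection, $R(\vct{X}_{t})\nearrow R_{\max}$ and hence the distance of $\vct{X}_{t}$ from $C_{0}$ tends to zero, so the parallel-curve representation is legitimate for large $t$ and the expansion is self-consistent.

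The main obstacle I anticipate is the bookkeeping of signs: the choice of $\pm$ in the equation of motion, the orientation of $\vct{n}$, and the sign of $d$ are all interrelated, and getting the correct combination requires pinning down the convention at a single point (the classical orbit relation $\kappa_{0}\cdot 2(E-V_{0})=\pm\partial V_{0}/\partial n$ on $C_{0}$) and then propagating it consistently. A secondary technical issue is justifying the ansatz $|\boldsymbol{\nabla}R|^{2}=dA+O(d^{2})$, i.e.\ that $R$ is sufficiently smooth transverse to $C_{0}$ for a first-order Taylor expansion of $|\boldsymbol{\nabla}R|^{2}$ along the normal to be valid; this is implicit in assuming $C_{0}$ is a smooth attracting curve of the semi-classical flow, and it can be taken for granted here since one is working only to first order.
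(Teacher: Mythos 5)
Your route is the paper's route: the paper gives no real proof of this lemma beyond assembling exactly the three ingredients you use --- the normal equation of motion $\kappa_d(|\boldsymbol\nabla R|^2+|\boldsymbol\nabla S|^2)=\pm\vct{n}.\boldsymbol\nabla(V-|\boldsymbol\nabla R|^2)$, the parallel-curve formula $\kappa_d=\kappa_0/(1+d\kappa_0)$, and the energy relation $|\boldsymbol\nabla S|^2=2(E-V)+|\boldsymbol\nabla R|^2$ --- and then writing ``we therefore obtain''. Your first-order expansion in $d$ with the ansatz $|\boldsymbol\nabla R|^2=dA+O(d^2)$ and the matching of $d^0$ coefficients, giving $2\kappa_0(E-V_0)=\pm\bigl(\partial V_0/\partial n-A\bigr)$, is the derivation the authors intend, and your use of $R(\vct{X}_t)\nearrow R_{\max}$ to justify the parallel-curve representation for large $t$ is exactly how the hypothesis $t\nearrow\infty$ is meant to enter.

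The one point where your argument does not close is the sign resolution you yourself flag. Taking the lower branch of your matching equation gives $A=2(E-V_0)\kappa_0+\partial V_0/\partial n$ as stated; but the physical criterion you invoke to select the branch --- centripetal acceleration $\kappa_0 v_0^2$ equal to the inward component of the force $-\boldsymbol\nabla V_{\mathrm{eff}}$, i.e.\ $\kappa_0\cdot 2(E-V_0)=+\partial V_0/\partial n$ with $n$ outward --- is the \emph{upper} branch and yields $A=\partial V_0/\partial n-2(E-V_0)\kappa_0$, which vanishes on a classical orbit of $V$ (consistent with $|\boldsymbol\nabla R|^2=O(d^2)$ for the exact elliptic states, where $\boldsymbol\nabla R=\vct{0}$ on $C_0$). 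So your stated justification, followed literally, produces the opposite relative sign from the lemma. To be fair, the paper never resolves its own $\pm$ and mixes an ``inward pointing normal'' in the equation of motion with an ``outward normal derivative'' in the lemma, so this ambiguity is inherited from the source rather than introduced by you; but you should not claim the centripetal convention delivers the $+$ sign, because it does not.
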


\npgni To push this 2-dimensional result any further one needs the full complexity of the equation (NLE). In 3-dimensions there are similar formulae involving the torsion of the path as well as its curvature at time $t$. In the next section we give the full solution to this puzzle for the Trojan asteroid problem in 3-dimensions. This exploits the dynamical symmetry group of the Hamiltonian for the isotropic harmonic oscillator achieving minimal uncertainty concentration on $C_{0}$. (See Ref.[27]).

\npgni \textbf{Remarks}

\npgni \textbf{1.} The first order quantum effect is to reduce the kinetic energy of the particle by $|\boldsymbol\nabla R|^2$ both inside and outside $C_{0}$. As for the $\dfrac{v^2}{\rho_{0}}$ term, where $\rho_{0}$ is the radius of curvature of the semi-classical orbit, since $\dfrac{\partial}{\partial n}|\boldsymbol\nabla R|^2<0$ outside $C_{0}$ and $\dfrac{\partial}{\partial n}|\boldsymbol\nabla R|^2>0$ inside $C_{0}$, we\vspace{-3mm}

\npgni see that outside $C_{0}$ the radius of curvature has to decrease compared with the 
classical value.

\npgni \textbf{2.} $|\boldsymbol\nabla R|$ gives the deviation from the parallel curve $C_{d}$, a distance $d$ away from $C_{0}$ since in our neighbourhood $\vct{v}\sim\boldsymbol\nabla S+|\boldsymbol\nabla R|\widehat{\boldsymbol\nabla R}$, the last term tending to zero as $t\nearrow\infty$.

\npgni It would be good to solve the (NLE) for small $|\boldsymbol\nabla R|$ but we postpone that for now. Instead we recall exact results for the two most important potentials $V$ in Newtonian gravity and detail the physical effects in the next section of this paper.

\npgni Of course the above equations $(i)$ and $(ii)$ recapitulate Ed Nelson's result that the Schrödinger equation for $\psi=\textrm{exp}(R+iS)$ linearises the horribly non-linear equation in (R,S) required for Newton's 2\textsuperscript{nd} Law to be valid in stochastic mechanics (Ref.[24]). This prompts us to return to exact solutions of the Schrödinger equation and the asymptotics to derive the effects of quantum gravity on the curvature of semi-classical orbits for the Kepler problem and for the isotropic harmonic oscillator. We believe the former has applications to galactic evolution and the latter to the formation of, for example, Trojan asteroids.

\npgni It remains for us to prove that Newton's 2\textsuperscript{nd} Law of motion is valid in our set-up in 2 and 3 dimensions guaranteeing the solution of NLE.

\begin{lemma}

$$((\boldsymbol\nabla R+\boldsymbol\nabla S).\boldsymbol\nabla)(\boldsymbol\nabla R+\boldsymbol\nabla S)=-\boldsymbol\nabla(V-|\boldsymbol\nabla R|^2).$$

\end{lemma}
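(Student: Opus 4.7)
The plan is to recognize that the left-hand side has the standard form of a convective derivative $(\vct{b}\cdot\boldsymbol\nabla)\vct{b}$ where $\vct{b}=\boldsymbol\nabla R+\boldsymbol\nabla S$, and to exploit the fact that $\vct{b}$ is a gradient. The crucial vector calculus identity I would invoke is
$$(\vct{b}\cdot\boldsymbol\nabla)\vct{b}=\tfrac{1}{2}\boldsymbol\nabla|\vct{b}|^2-\vct{b}\times(\boldsymbol\nabla\times\vct{b}).$$
Since $\boldsymbol\nabla\times\boldsymbol\nabla R=\vct{0}$ and $\boldsymbol\nabla\times\boldsymbol\nabla S=\vct{0}$, the curl term drops out and the computation collapses to a pure gradient.

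First I would expand
$$|\boldsymbol\nabla R+\boldsymbol\nabla S|^2=|\boldsymbol\nabla R|^2+2\,\boldsymbol\nabla R\cdot\boldsymbol\nabla S+|\boldsymbol\nabla S|^2,$$
and then use the first of the semi-classical equations, namely $\boldsymbol\nabla R\cdot\boldsymbol\nabla S=0$, to discard the cross term. This gives
$$((\boldsymbol\nabla R+\boldsymbol\nabla S)\cdot\boldsymbol\nabla)(\boldsymbol\nabla R+\boldsymbol\nabla S)=\tfrac{1}{2}\boldsymbol\nabla\bigl(|\boldsymbol\nabla R|^2+|\boldsymbol\nabla S|^2\bigr).$$

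Next I would substitute the second semi-classical equation, $\tfrac{1}{2}(|\boldsymbol\nabla S|^2-|\boldsymbol\nabla R|^2)+V=E$, rearranged as $|\boldsymbol\nabla S|^2=2(E-V)+|\boldsymbol\nabla R|^2$, so that
$$|\boldsymbol\nabla R|^2+|\boldsymbol\nabla S|^2=2(E-V)+2|\boldsymbol\nabla R|^2.$$
Taking $\tfrac{1}{2}\boldsymbol\nabla$ of both sides and noting $\boldsymbol\nabla E=\vct{0}$ yields $-\boldsymbol\nabla V+\boldsymbol\nabla|\boldsymbol\nabla R|^2=-\boldsymbol\nabla(V-|\boldsymbol\nabla R|^2)$, which is exactly the right-hand side.

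There is no real obstacle here, since both semi-classical constraints are already at our disposal; the only point worth flagging is that the irrotationality of $\boldsymbol\nabla R$ and $\boldsymbol\nabla S$ is what allows the convective derivative to be rewritten as a pure gradient, and that this argument works equally in 2 and 3 dimensions (and indeed in any dimension). The step that most deserves care is the consistent use of $\boldsymbol\nabla R\cdot\boldsymbol\nabla S=0$ and the energy identity valid in a neighbourhood of $C_0$ — both are assumptions underlying the semi-classical ansatz, so Newton's 2\textsuperscript{nd} Law here is really a consequence of the Hamilton--Jacobi and continuity constraints rather than an independent input.
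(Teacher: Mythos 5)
Your proposal is correct and follows essentially the same route as the paper: rewrite the convective derivative of the gradient field $\boldsymbol\nabla(R+S)$ as $\boldsymbol\nabla\bigl(\tfrac{1}{2}|\boldsymbol\nabla(R+S)|^2\bigr)$ (the curl term vanishing), then invoke $\boldsymbol\nabla R\cdot\boldsymbol\nabla S=0$ and the energy relation $2^{-1}(|\boldsymbol\nabla S|^2+|\boldsymbol\nabla R|^2)=E-V+|\boldsymbol\nabla R|^2$. You merely spell out explicitly the orthogonality and substitution steps that the paper compresses into ``elementary vector algebra.''
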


\begin{proof}

$$\textrm{l.h.s.}=\boldsymbol\nabla\left(\frac{(\boldsymbol\nabla(R+S))^2}{2}\right).$$

\npgni By energy conservation $2^{-1}(|\boldsymbol\nabla S|^2+|\boldsymbol\nabla R|^2)=E-V+|\boldsymbol\nabla R|^2$ and the result follows from elementary vector algebra.

\end{proof}\vspace{-5mm}

\npgni In the next section we compute $\boldsymbol\nabla R$ and $\boldsymbol\nabla S$ for the isotropic harmonic oscillator potential.

\section{Quantisation and Semi-Classical Mechanics for Isotropic Harmonic Oscillators in 2 and 3-Dimensions}

\subsection{Quantum Connections with Isotropic Harmonic Oscillators}\vspace{-5mm}

\npgni Returning to the Trojan asteroid problem we need to find the state of the cloud of WIMPs so it will condense onto the classical periodic orbits in the central manifold for this linearised problem. From the form of the general solution it follows that, since $D_{2}\propto|C_{3}|$, for small $|D_{2}|$ the corresponding orbits are near the orbit with $D_{2}=0$. Following Dirac (Ref.[7]), further we can think of $D_{2}=0$ as a constraint and, since the Poisson bracket $\{D_{1},D_{2}\}=0$ here, there are no secondary constraints. This follows if either one of the $D'\textrm{s}$ is zero in which case by inspection of the general solution,

$$\ddot{X}=-\tilde\omega^2,\;\;\;\ddot{Y}=-\tilde\omega^2Y,\;\;\;\tilde\omega=\alpha\;\textrm{or}\;\beta,\;\;\;\lambda=\pm i\alpha,\pm i\beta$$\vspace{-10mm}

\npgni being roots of\vspace{-5mm}

$$\lambda^4+\omega^2\lambda^2+4\omega_{X}^2\omega_{Y}^2=0,$$

\npgni with $\omega_{X}^2+\omega_{Y}^2=\dfrac{3}{2}\omega^2, \;\;\omega_{X}^2\omega_{Y}^2=\dfrac{27\mu_{1}\mu_{2}}{16(\mu_{1}+\mu_{2})^2}\omega^4,\;\;{\omega}^{2}=\dfrac{(\mu_{1}+\mu_{2})}{|\textrm{SJ}|^{3}},\;\textrm{and}\;\mu_{1},\mu_{2}$ are

\npgni the gravitational masses of the Sun and Jupiter, respectively, for the Trojan problem. Hence, we have a 2-dimensional isotropic harmonic oscillator such as is associated with a homogeneous cloud of gravitating WIMPs. We also have a similar result in 3-dimensions where the next theorem is relevant:

\begin{theorem}

The isotropic harmonic oscillator elliptic state is, up to normalisation, for $\tilde\lambda=n\hbar$,

$$\psi_n^{\textnormal{HO}}(\vct r)=\textrm{exp}\left(-\frac{\tilde\omega r^2}{2\tilde\lambda}\right)H_n(\sqrt{n}\tilde u),\;\;\vct r=(x,y,z),$$

\npgni where $r^2=x^2+y^2+z^2$, $\tilde u=\sqrt{\dfrac{\tilde\omega}{\tilde\lambda}\left((1-\alpha)\dfrac{x^2}{2}+(1+\alpha)\dfrac{y^2}{2}-i\beta xy\right)}$, $\alpha=\dfrac{1}{e}$,\vspace{-5mm} 

\npgni $\beta=\dfrac{\sqrt{1-e^2}}{e}$, $0<e<1$, $e$ being the eccentricity for our ellipse with equations

$$\frac{x^2}{a^2}+\frac{y^2}{b^2}=1,\spc z=0,$$

\npgni and $H_n$ being a Hermite polynomial. 

\npgni For the quantum Hamiltonian $H=2^{-1}\vct P^2+2^{-1}\tilde\omega^2\vct Q^2$ in 3-dimensions,

$$H\psi_n^{\textnormal{HO}}=\tilde\omega\left(\tilde\lambda+\frac{3}{2}\hbar\right)\psi_n^{\textnormal{HO}},\;\;\;\tilde\lambda=n\hbar\rightarrow E,\;\textrm{the energy}\;\;E=\frac{1}{2}\tilde\omega^2(a^2+b^2).$$

\npgni (For the Trojan asteroid problem, we have to set $a=2\tilde\omega\alpha|C|$, $b=(\alpha^2-2\omega_{X}^2)|C|$, $C=|C_{1}|\;\textrm{or}\;|C_{3}|$ depending upon which of $D_{1}$ or $D_{2}$ is set to zero).

\end{theorem}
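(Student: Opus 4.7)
The plan is to realise $\psi_n^{\textrm{HO}}$ as an element of the degenerate $n$-excitation eigenspace of the $3$-dimensional isotropic oscillator by exploiting its $SU(3)$ dynamical symmetry, in direct analogy with the Lena-Delande-Gay atomic elliptic state construction (Ref.[17]) based on $SO(4)$. I would introduce standard ladder operators $a_j = \sqrt{\tilde\omega/(2\hbar)}\,Q_j + iP_j/\sqrt{2\hbar\tilde\omega}$, $j=1,2,3$, so that $H = \hbar\tilde\omega\bigl(\sum_j a_j^\dagger a_j + 3/2\bigr)$. Any state of the form $(c_1 a_1^\dagger + c_2 a_2^\dagger + c_3 a_3^\dagger)^n|0\rangle$ with $|c_1|^2+|c_2|^2+|c_3|^2 = 1$ is then automatically an eigenstate of $H$ with eigenvalue $(n+3/2)\hbar\tilde\omega = \tilde\omega(\tilde\lambda + 3\hbar/2)$, once $\tilde\lambda := n\hbar$ is fixed. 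Thus the only substantive task is to choose $(c_1,c_2,c_3)$ so that the coordinate representation matches the displayed formula and so that the semi-classical density concentrates on the target ellipse $C_0$.

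Since $C_0$ lies in the plane $z=0$, set $c_3 = 0$; the resulting state factorises as the oscillator ground state in $z$ times a $2$-dimensional elliptic state in $(x,y)$. Matching the Heisenberg-picture expectations $\langle Q_j\rangle(t)$ to the classical parametrisation $(a\cos\tilde\omega t,\;b\sin\tilde\omega t,\;0)$ forces the ratio $c_2/c_1$ to depend only on the eccentricity $e = \sqrt{1-b^2/a^2}$; up to overall phase, the pair $(c_1, c_2)$ is proportional to $(\sqrt{1-\alpha},\sqrt{1+\alpha})$ with $\alpha = 1/e$. When the associated $2\times 2$ quadratic form is written in configuration space it reproduces exactly the quadratic form defining $\tilde u^2$ in the statement; the crucial algebraic fact is that its discriminant $1-\alpha^2+\beta^2$ vanishes identically by virtue of $\alpha^2-\beta^2 = 1$, so $\tilde u^2$ is a perfect square and $\tilde u$ collapses to a single complex linear form in $(x,y)$, explaining the appearance of a first-degree Hermite argument rather than a product of two.

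The coordinate representation then follows from the standard $1$-dimensional identity $(a^\dagger)^n|0\rangle \propto H_n(\sqrt{\tilde\omega/\hbar}\,q)\exp(-\tilde\omega q^2/(2\hbar))$, extended by linearity to the complex combination $c_1 a_1^\dagger + c_2 a_2^\dagger$ acting on the three-dimensional vacuum, yielding the displayed form of $\psi_n^{\textrm{HO}}$. A direct verification of the eigenvalue equation then proceeds by computing $\Delta\psi_n^{\textrm{HO}}$: the Hermite ODE $H_n''(v)-2vH_n'(v)+2nH_n(v)=0$ cancels the cross term proportional to $vH_n'$ produced by $\nabla e^{-\tilde\omega r^2/(2\hbar)}\cdot \nabla H_n$, and the normalisation $|c_1|^2+|c_2|^2 = 1$ is precisely what makes that cancellation exact, leaving $(n+3/2)\hbar\tilde\omega\,\psi_n^{\textrm{HO}}$ on the right-hand side.

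For the Bohr-correspondence identification $\tilde\lambda \to E$, send $n\to\infty$, $\hbar\to 0$ with $\tilde\lambda = n\hbar$ held fixed and apply the Plancherel-Rotach asymptotics of $H_n$: the modulus $|\psi_n^{\textrm{HO}}|^2$ concentrates on the caustic of the oscillator in the direction selected by $(c_1,c_2)$, which by construction is $C_0$. The equality $\tilde\omega\tilde\lambda = E = \tfrac{1}{2}\tilde\omega^2(a^2+b^2)$ is then a one-line consequence of the virial theorem on the classical elliptic orbit. I expect the main obstacle to be the passage to coordinate representation for the complex, non-unitary combination of creation operators: the usual coherent-state manipulations take the mixing matrix to be unitary, and here one has to replace that hypothesis by the degeneracy condition $\alpha^2-\beta^2=1$ and check carefully, via a Baker-Campbell-Hausdorff reordering of $(c_1 a_1^\dagger + c_2 a_2^\dagger)^n$, that the resulting wavefunction nonetheless belongs to the $n$-quanta eigenspace with the precise form stated.
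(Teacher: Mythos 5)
Your construction is sound but follows a genuinely different route from the paper. The paper's proof is a bare-hands verification of the eigenvalue equation: it factorises $\psi_n^{\textrm{HO}}=\psi_1(z)\psi_2(x,y)$, notes $\boldsymbol\nabla\psi_1.\boldsymbol\nabla\psi_2=0$ so the Laplacian splits, uses the harmonicity $\Delta\tilde u=0$ (which holds precisely because $\alpha^2-\beta^2=1$ makes $\tilde u^2$ a perfect square of a complex linear form, the same degeneracy you isolate) to reduce $\Delta H_n(\sqrt n\tilde u)$ to $nH_n''(\sqrt n\tilde u)(\boldsymbol\nabla\tilde u)^2$, and then closes with the Hermite identities. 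You instead build the state as $(c_1a_1^\dagger+c_2a_2^\dagger)^n|0\rangle$, so membership in the $n$-quanta eigenspace is automatic and only the passage to the coordinate representation needs work; this is closer in spirit to the Lena--Delande--Gay $SO(4)$ construction the authors cite as their model, and it buys you for free the interpretation of the state inside the degenerate level and the Heisenberg-picture matching to the classical ellipse, which the paper's computation does not exhibit. Two points to tighten. First, the quantity that controls the cancellation in the direct verification is the \emph{bilinear} square $\boldsymbol\nabla\tilde u.\boldsymbol\nabla\tilde u=c_1^2+c_2^2$ (up to the overall scale $\tilde\omega/\tilde\lambda$), not the Hermitian norm $|c_1|^2+|c_2|^2$: with $c_1\propto\sqrt{1-\alpha}$ purely imaginary and $c_2\propto\sqrt{1+\alpha}$ real these differ (their squares sum to $2$ versus $2\alpha$), and it is $(1-\alpha)/2+(1+\alpha)/2=1$ that does the job. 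Second, the Baker--Campbell--Hausdorff reordering you flag as the main obstacle is not one: the $a_j^\dagger$ commute among themselves, so $(c_1a_1^\dagger+c_2a_2^\dagger)^n$ expands binomially and the coordinate form follows from the classical Hermite addition formula $\sum_k\binom{n}{k}u^kv^{n-k}H_k(\xi)H_{n-k}(\eta)=(u^2+v^2)^{n/2}H_n\bigl((u\xi+v\eta)/\sqrt{u^2+v^2}\bigr)$, valid as a polynomial identity for complex $u,v$ with $u^2+v^2\ne0$. With those repairs your argument delivers the theorem, including the energy identification $E=\tfrac12\tilde\omega^2(a^2+b^2)$, which as you say is immediate from the classical orbit.
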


\begin{proof}
First $\psi_1(z)=\textrm{exp}\left(-\dfrac{\tilde\omega z^2}{2\tilde\lambda}\right)$ is the ground state of the harmonic oscillator in 1-dimension and if

$$\psi_2(x,y)=\textrm{exp}\left(-\frac{\tilde\omega(x^2+y^2)}{2\tilde\lambda}\right)H_n(\sqrt{n}\tilde u),$$

$$\boldsymbol\nabla\psi_1.\boldsymbol\nabla\psi_2=0, \;\;\textrm{so}\;\; \Delta(\psi_1\psi_2)=\psi_1(\Delta\psi_2)+(\Delta\psi_1)\psi_2.$$

\npgni Further $\Delta\tilde u=0$, so $\Delta H_n(\sqrt{n}\tilde u)=nH_n''(\sqrt{n}\tilde u)|\boldsymbol\nabla\tilde u|^2$. Also,

\begin{flushleft}
\;\;$\Delta\psi_2 = \Delta\left(\textrm{exp}\left(-\dfrac{\tilde\omega(x^2+y^2)}{2\tilde\lambda}\right)H_n(\sqrt{n}\tilde u)\right)$
\end{flushleft}

\begin{flushleft}
$\spc\;\;=\textrm{exp}\left(-\dfrac{\tilde\omega(x^2+y^2)}{2\tilde\lambda}\right)\Delta H_n(\sqrt{n}\tilde u)+H_n(\sqrt{n}\tilde u)\Delta\textrm{exp}\left(-\dfrac{\tilde\omega(x^2+y^2)}{2\tilde\lambda}\right)$
\end{flushleft}

\begin{flushleft}
$\spc\spc\spc\spc\spc\spc\spc+2\boldsymbol\nabla\left(\textrm{exp}\left(-\dfrac{\tilde\omega(x^2+y^2)}{2\tilde\lambda}\right)\right).\boldsymbol\nabla H_n(\sqrt{n}\tilde u).$
\end{flushleft}

\npgni A computation using the properties of Hermite polynomials yields the result

$$-\frac{\hbar^2}{2}\Delta\psi_n^{\textrm{HO}}+\frac{\tilde\omega^2}{2}(x^2+y^2+z^2)\psi_n^{\textrm{HO}}=\tilde\omega\left(\tilde\lambda+\frac{3}{2}\hbar\right)\psi_n^{\textrm{HO}},\spc\tilde\lambda=n\hbar.$$
\end{proof}

\begin{lemma}
$$\lim_{n\nearrow\infty}\frac{H'_{n}(\sqrt{n}u)}{\sqrt{n}H_{n}(\sqrt{n}u)}=u-\sqrt{u^2-2}.$$
\end{lemma}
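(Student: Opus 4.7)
\npgni The plan is to reduce the ratio to the solution of a Riccati equation, pass to the large-$n$ algebraic limit, and then select the correct branch from the behaviour at $u=\infty$.

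\npgni Starting from the Hermite ODE $H_n''(x) - 2xH_n'(x) + 2nH_n(x) = 0$, set $\phi_n(x) := H_n'(x)/H_n(x)$. Since $H_n''/H_n = \phi_n' + \phi_n^2$, one obtains the Riccati equation $\phi_n'(x) + \phi_n(x)^2 - 2x\phi_n(x) + 2n = 0$. Rescaling via $\psi_n(u) := \phi_n(\sqrt{n}\,u)/\sqrt{n}$---which is precisely the quantity in the statement---and using the chain-rule identity $\phi_n'(\sqrt{n}\,u) = \psi_n'(u)$, the Riccati equation becomes
$$\frac{1}{n}\,\psi_n'(u) + \psi_n(u)^2 - 2u\,\psi_n(u) + 2 = 0.$$
Formally, as $n \to \infty$ the first term vanishes and any limit $\psi(u)$ must satisfy the algebraic equation $\psi^2 - 2u\psi + 2 = 0$, whose roots are $u \pm \sqrt{u^2 - 2}$.

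\npgni To identify the correct branch I examine $u \to \infty$ at fixed large $n$. Since $H_n(x) = (2x)^n + O(x^{n-2})$ as $x \to \infty$, we have $H_n'(x)/H_n(x) \sim n/x$, hence $\psi_n(u) \sim 1/u$ for large $u$. The Taylor expansion $u - \sqrt{u^2-2} = 1/u + O(1/u^3)$ matches this, whereas $u + \sqrt{u^2-2} \sim 2u$ does not. Thus the correct limit is $\psi(u) = u - \sqrt{u^2-2}$, valid on the exterior region $u > \sqrt{2}$ where the expression is real---consistent with the fact that $\sqrt{n}\,u$ then lies outside the largest zero of $H_n$ (which sits near $\sqrt{2n}$).

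\npgni The main obstacle is justifying the limit rigorously: one needs $\psi_n$ to be uniformly bounded on compact subsets of $(\sqrt{2},\infty)$, because only then does the Riccati equation itself force $\psi_n'/n \to 0$ and permit an Arzel\`a--Ascoli argument. For $u > \sqrt{2}$ and $n$ sufficiently large, $\sqrt{n}\,u$ exceeds all zeros of $H_n$, so $\psi_n$ is smooth and positive; the required uniform bounds can be extracted from classical estimates on ratios of consecutive Hermite polynomials, or equivalently from the Plancherel--Rotach asymptotics in the exponentially monotone region. Given such bounds, every subsequential limit satisfies the algebraic equation together with the boundary condition at infinity, so uniqueness of that root forces the full sequence to converge to $u - \sqrt{u^2-2}$.
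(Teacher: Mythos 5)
Your proposal is correct, and it takes a genuinely different route from the paper. The paper works entirely with the three-term recurrence: writing $Q_{n}=\dfrac{H_{n}'(\sqrt{n}u)}{\sqrt{n}H_{n}(\sqrt{n}u)}=\dfrac{2\sqrt{n}H_{n-1}(\sqrt{n}u)}{H_{n}(\sqrt{n}u)}$ and using $H_{n+1}(x)=2xH_{n}(x)-2nH_{n-1}(x)$ it obtains the fixed-point relation $\dfrac{2\sqrt{n+1}}{\sqrt{n}\,Q_{n+1}}=2u-Q_{n}$, then \emph{assumes} $Q_{n}\to q$ and reads off $\dfrac{2}{q}=2u-q$, i.e.\ the same quadratic $q^{2}-2uq+2=0$ that you reach. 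You instead derive a Riccati equation from the Hermite ODE, rescale so that the derivative term carries a factor $1/n$, and pass to the algebraic limit. Both arguments are formal in the same place (existence of the limit is assumed, not proved), but yours buys two things the paper's does not: an explicit branch selection via the $u\to\infty$ asymptotics $\psi_{n}(u)\sim 1/u$ (the paper simply asserts ``giving the desired result'' without explaining why the root $u+\sqrt{u^{2}-2}$ is rejected), and an honest account of what a rigorous completion would require (local uniform bounds on $\psi_{n}$ away from the zeros of $H_{n}$, then Arzel\`a--Ascoli). The paper's recurrence route also silently evaluates $Q_{n+1}$ at the rescaled argument $\sqrt{n}u$ rather than $\sqrt{n+1}u$, a mismatch your ODE formulation avoids entirely. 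One caveat worth noting for both arguments: in the application the variable $\tilde{u}$ is complex, so the restriction to real $u>\sqrt{2}$ in your branch discussion would eventually need to be relaxed to the complex plane cut along $[-\sqrt{2},\sqrt{2}]$, with the branch of $\sqrt{u^{2}-2}$ fixed by the same condition at infinity.
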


\begin{proof}

\npgni Let $Q_{n}=\dfrac{H'_{n}(\sqrt{n}u)}{\sqrt{n}H_{n}(\sqrt{n}u)}$. Clearly $Q_{n}=\dfrac{2\sqrt{n}H_{n-1}(\sqrt{n}u)}{H_{n}(\sqrt{n}u)}$. Then using the

\npgni standard recurrence relation

$$H_{n+1}(x)=2xH_{n}(x)-2nH_{n-1}(x),$$

\npgni with $x=\sqrt{n}u$ we can conclude

$$\dfrac{2\sqrt{n+1}}{\sqrt{n}Q_{n+1}}=2u-Q_{n}.$$

\npgni Assuming that $Q_{n}\rightarrow{q}$ as $n\rightarrow{\infty}$ then,

$$\dfrac{2}{q}=2u-q,$$\vspace{-8mm}

\npgni giving the desired result.

\end{proof}

\npgni For application to the Trojan asteroid problem we have to set $\tilde\omega=\alpha\;\textrm{or}\;\beta$, where

$$\alpha=\dfrac{1}{\sqrt{2}}\left(1+\sqrt{1-\dfrac{27\mu_{1}\mu_{2}}{(\mu_{1}+\mu_{2})^2}}\right)^{\frac{1}{2}}\omega\;\;;\;\;\beta=\dfrac{1}{\sqrt{2}}\left(1-\sqrt{1-\dfrac{27\mu_{1}\mu_{2}}{(\mu_{1}+\mu_{2})^2}}\right)^{\frac{1}{2}}\omega.$$

\npgni Setting $z=0$ gives the asymptotics of $\psi$ for the isotropic harmonic oscillator potential,

\npgni $V=\dfrac{\tilde\omega^2}{2}(x^2+y^2)$, for the elliptical orbit, $\psi\sim\textrm{exp}\left(\frac{\tilde R+i\tilde S}{\hbar}\right)$ as $\hbar\sim0$, $\vct{r}=(x,y)$

$$(\tilde R+i\tilde S)(\vct{r})=-\frac{1}{2}\tilde\omega r^2+\frac{\tilde\lambda}{2}\tilde u^2\left(1-\sqrt{1-\frac{2}{\tilde u^2}}\right)+\tilde\lambda\left(\tilde u+\sqrt{\tilde u^2-2}\right),$$

\npgni with $\tilde u=\sqrt{\dfrac{\tilde\omega}{\tilde\lambda}\left((1-\alpha)\dfrac{x^2}{2}+(1+\alpha)\dfrac{y^2}{2}-i\beta xy\right)}$, $\alpha=\dfrac{1}{e}$ and $\alpha^2-\beta^2=1$.

\npgni This is the analogue of the 2-dimensional Kepler problem asymptotics for Keplerian

\npgni ellipses, $\psi\sim\textrm{exp}\left(\frac{R+iS}{\hbar}\right)$ for $V=-\dfrac{\mu}{x^2+y^2}$, $\vct{r}=(x,y)$,

$$R+iS=-\frac{\mu}{\lambda}r+\frac{\lambda \nu}{2}\left(1-\sqrt{1-\frac{4}{\nu}}\right)-\lambda\ln \nu-2\lambda\ln\left(1-\sqrt{1-\frac{4}{\nu}}\right),$$

\npgni where $\nu=\dfrac{\mu}{\lambda^2}\left(r-\dfrac{x}{e}-\dfrac{iy\sqrt{1-e^2}}{e}\right)$ and $\lambda=\sqrt{a\mu}$, $a$
being the semi-major axis of the ellipse.

\npgni In what follows $\psi_{n,e}$ will denote the Keplerian analogue of $\psi_n^{\textrm{HO}}$. The corresponding orbits here spiral into the Keplerian ellipses.

\subsection{Quantum Corrections, Curvature and Torsion for Trojan WIMPs}\vspace{-5mm}

\npgni If no vector potentials are involved all one has to do to compute the quantum correction to curvature of the spirals is to replace $V$ by $V_{\textrm{eff}}$, where $V_{\textrm{eff}}=V-|\boldsymbol\nabla R|^2$, for which $|\boldsymbol\nabla R|^2$ can be read off from the above, in the formula for classical curvature $\kappa_{c}$, giving\vspace{5mm}

$$\kappa_{q}=\frac{\Big|\dfrac{\partial}{\partial n}(E-V+|\boldsymbol\nabla R|^2)\Big|}{2(E-V+|\boldsymbol\nabla R|^2)},$$

\npgni where the normal derivative to the orbit is\vspace{3mm}

$$\dfrac{\partial}{\partial n}=\sqrt{\left(1+\dfrac{|\boldsymbol\nabla R|^2}{|\boldsymbol\nabla S|^2}\right)}\left(\dfrac{|\boldsymbol\nabla R|}{|\boldsymbol\nabla S|}\widehat{\boldsymbol\nabla S}-\widehat{\boldsymbol\nabla R}\right).\boldsymbol\nabla,$$

\npgni and in 2-dimensions,

$$\boldsymbol\nabla S=\sqrt{1+\dfrac{2(E-V)}{|\boldsymbol\nabla R|^2}}\left(-\dfrac{\partial R}{\partial y},\dfrac{\partial R}{\partial x}\right),$$

\npgni where $|\boldsymbol\nabla R|^2=0$ in the classical case, $S$ being the Hamilton-Jacobi function.

\npgni Unfortunately the linearised 3-body problem is slightly more complicated.

\npgni Lemma 4.2 is not appropriate to the linearised Trojan asteroid problem as in this case we have used rotating axes so the dynamics involves a Coriolis force. To generalise the above we need to include a vector potential e.g. $\vct{A}=\omega(-y,x,0)$ in the above. From the corresponding Schrödinger wave function for a stationary state $\psi\sim\textrm{exp}\left(\frac{R+iS}{\hbar}\right)$, with energy $E$, for scalar and vector potentials $V$ and $\vct{A}$, respectively, the corresponding semi-classical mechanics 
reduces to :-

$$(\boldsymbol\nabla S-\vct{A}).\boldsymbol\nabla R=0,\spc2^{-1}(|\boldsymbol\nabla S-\vct{A}|^2-|\boldsymbol\nabla R|^2)+V=E,$$

\npgni where the dynamics corresponds to

$$\dot{\vct{X}}_{t}=\frac{d\vct{X}_{t}}{dt}=(\boldsymbol\nabla(R+ S)-\vct{A})(\vct{X}_{t}),\;\;t\ge0\;\;\textrm{and}\;\;2^{-1}\dot{\vct{X}}^{2}_{t}+V_{\textrm{eff}}=E,\;V_{\textrm{eff}}=V-|\boldsymbol\nabla R|^2.$$

\npgni Although the above equations are non-linear there is a linear superposition principle inherited from the Schrödinger equation as in the case $\vct{A}=\boldsymbol{0}$ (see Ref.[32]). The point is that here the entropy is still $(-R)$.

\npgni As you will see there is another way of confirming the validity of Hamilton's equations in this problem for $\vct{A}=\omega(-y,x,0)$ and $\textrm{curl}\vct{A}=(0,0,2\omega)$.

\npgni \textbf{Example}

\npgni For a unit positively charged particle P in the constant magnetic field, $\vct{B}=(0,0,B)$, $\vct{A}=2^{-1}\vct{r}\times\vct{B}$, in 2-dimensions, there is a circular spiral orbit of radius $a_{0}$, with $\overset{\rightharpoonup}{\textrm{OP}}=\vct{X}_{t}$, with $\vct{X}=(x,y,0)$,

$$\boldsymbol{\nabla}R=\dfrac{1}{r^2}\sqrt{\left(\dfrac{B}{2}r^2-L\right)^2-2Er^2}\;(x,y),\;\;\;\boldsymbol{\nabla}S=L\left(-\dfrac{y}{r^2},\dfrac{x}{r^2}\right),\;\;\;r=\sqrt{x^2+y^2}.$$

\npgni $\dot{\vct{X}}_{t}=(\boldsymbol\nabla(R+ S)-\vct{A})(\vct{X}_{t})$ \;so that for \;$t\ge0$, \;$|r_{t}^2-a_{0}^2|=|r_{0}^2-a_{0}^2|\textrm{e}^{-Bt}$, where

\npgni $a_{0}^2=\dfrac{2E}{B^2}=-\dfrac{2L}{B}$.\vspace{5mm}

\begin{lemma}
For the above semi-classical orbit, $\vct{v}=\boldsymbol\nabla(R+S)-\vct{A}$, $\vct{v}$ the particle velocity and $\vct{v}=\dot{\vct{X}}_{t}$, whilst the acceleration,

$$\ddot{\vct{X}}_{t}=-\boldsymbol\nabla(V-|\boldsymbol\nabla R|^2)+\dot{\vct{X}}_{t}\times\textrm{curl}\vct{A},$$

\npgni the last term being a Coriolis force and the r.h.s. being evaluated at $\vct{X}_{t}$.
\end{lemma}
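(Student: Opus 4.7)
The plan is to read off $\ddot{\vct{X}}_t$ from the defining relation $\dot{\vct{X}}_t=(\boldsymbol\nabla(R+S)-\vct{A})(\vct{X}_t)$ by an application of the chain rule and then repackage the result using the two semi-classical equations of motion displayed just above the statement. Writing $\vct{v}=\boldsymbol\nabla(R+S)-\vct{A}$ as a vector field on configuration space, the fact that $\vct{X}_t$ is an integral curve of $\vct{v}$ gives immediately $\ddot{\vct{X}}_t=(\vct{v}\cdot\boldsymbol\nabla)\vct{v}$ evaluated at $\vct{X}_t$; so the whole task is to reduce $(\vct{v}\cdot\boldsymbol\nabla)\vct{v}$ to the asserted closed form.

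The key step is the standard vector identity
$$(\vct{v}\cdot\boldsymbol\nabla)\vct{v}=\boldsymbol\nabla\bigl(\tfrac{1}{2}|\vct{v}|^2\bigr)-\vct{v}\times\textrm{curl}\,\vct{v}.$$
Since $\textrm{curl}\,\boldsymbol\nabla(R+S)=\vct{0}$, we have $\textrm{curl}\,\vct{v}=-\textrm{curl}\,\vct{A}$, so the second term instantly produces the Coriolis contribution $\dot{\vct{X}}_t\times\textrm{curl}\,\vct{A}$ appearing in the statement. That handles one half of the identity.

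For the gradient term, I would use the orthogonality condition $(\boldsymbol\nabla S-\vct{A})\cdot\boldsymbol\nabla R=0$ to eliminate the cross term in the expansion
$$|\vct{v}|^2=|\boldsymbol\nabla R|^2+|\boldsymbol\nabla S-\vct{A}|^2+2\boldsymbol\nabla R\cdot(\boldsymbol\nabla S-\vct{A})=|\boldsymbol\nabla R|^2+|\boldsymbol\nabla S-\vct{A}|^2,$$
and then invoke the energy equation $\tfrac{1}{2}|\boldsymbol\nabla S-\vct{A}|^2-\tfrac{1}{2}|\boldsymbol\nabla R|^2+V=E$ to replace $|\boldsymbol\nabla S-\vct{A}|^2$ by $2(E-V)+|\boldsymbol\nabla R|^2$. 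This yields $\tfrac{1}{2}|\vct{v}|^2=|\boldsymbol\nabla R|^2-V+E$, whose gradient is exactly $-\boldsymbol\nabla(V-|\boldsymbol\nabla R|^2)$. Combining this with the curl term completes the proof.

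There is not much of a genuine obstacle here: the argument is a clean bookkeeping exercise once one recognises that the two semi-classical equations are precisely what is needed to convert $\tfrac{1}{2}|\vct{v}|^2$ into the effective potential expression. The only point that needs to be kept clean is that all gradients, curls and the field $\vct{v}$ must be evaluated at $\vct{X}_t$ after differentiation, rather than being treated as time-dependent along the path; stating once at the outset that the computation is performed for $\vct{v}$ as a stationary vector field on configuration space (the wave function is a stationary state, so $R$, $S$, $V$ and $\vct{A}$ have no explicit $t$-dependence) and pulled back along $\vct{X}_t$ at the end avoids any confusion between partial and material time derivatives.
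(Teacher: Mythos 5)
Your proposal is correct and follows essentially the same route as the paper: the chain rule giving $\ddot{\vct{X}}_t=(\vct{v}\cdot\boldsymbol\nabla)\vct{v}$, the identity $\boldsymbol\nabla(\tfrac12|\vct{a}|^2)=(\vct{a}\cdot\boldsymbol\nabla)\vct{a}+\vct{a}\times\textrm{curl}\,\vct{a}$ applied with $\vct{a}=\vct{v}$ so that $\textrm{curl}\,\vct{v}=-\textrm{curl}\,\vct{A}$ yields the Coriolis term, and energy conservation to turn $\boldsymbol\nabla(\tfrac12|\vct{v}|^2)$ into $-\boldsymbol\nabla(V-|\boldsymbol\nabla R|^2)$. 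The only (harmless) difference is that you re-derive the energy relation $\tfrac12|\vct{v}|^2+V-|\boldsymbol\nabla R|^2=E$ from the orthogonality condition $(\boldsymbol\nabla S-\vct{A})\cdot\boldsymbol\nabla R=0$ and the semi-classical energy equation, whereas the paper simply invokes it, having displayed it immediately before the lemma.
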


\begin{proof}
The chain rule gives, $\ddot{\vct{X}}_{t}=\dfrac{d^2\vct{X}_{t}}{dt^2}=(\dot{\vct{X}}_{t}.\boldsymbol\nabla)\dot{\vct{X}}_{t}$. Since for any vector $\vct{a}$,

\npgni $\boldsymbol\nabla(2^{-1}\vct{a}^2)=(\vct{a}.\boldsymbol\nabla)\vct{a}+\vct{a}\times\textrm{curl}\vct{a}$, setting $\vct{a}=\boldsymbol\nabla(R+ S)-\vct{A}$, the r.h.s. being evaluated

\npgni at $\vct{X}_{t}$, $\vct{a}=\vct{v}$,

$$\ddot{\vct{X}}_{t}=\boldsymbol\nabla(2^{-1}\vct{v}^2+V-|\boldsymbol\nabla R|^2)-\boldsymbol\nabla(V-|\boldsymbol\nabla R|^2)+\dot{\vct{X}}_{t}\times\textrm{curl}\vct{A},$$

\npgni giving from energy conservation,

$$\ddot{\vct{X}}_{t}=-\boldsymbol\nabla(V-|\boldsymbol\nabla R|^2)+\dot{\vct{X}}_{t}\times\textrm{curl}\vct{A}.$$

\end{proof}\vspace{-5mm}

\npgni So we obtain in 3-dimensions, $v=|\vct{v}|$ and

$$\kappa v^2\vct{n}+\frac{1}{v}\frac{d}{dt}\left(\frac{v^2}{2}\right)\vct{t}=-\boldsymbol\nabla(V-|\boldsymbol\nabla R|^2)+\dot{\vct{X}}_{t}\times\textrm{curl}\vct{A},$$

\npgni $\kappa$ being the curvature of the orbit at $\vct{X}_{t}$. When the motion takes place in the $(x,y)$ plane and $\textrm{curl}\vct{A}$ is perpendicular to the plane, $\textrm{curl}\vct{A}=(0,0,2\omega)$,

$$\kappa v^2=-\vct{n}.\boldsymbol\nabla(V-|\boldsymbol\nabla R|^2)+2\omega v,$$

\npgni where $v=|\boldsymbol\nabla(R+S)-\vct{A}|=|{\dot{X}}_{t}|$ and $V$ is the gravitational potential.

\npgni The last result highlights the importance of $|\boldsymbol\nabla R|^2$ as well as the Coriolis force in calculating the curvature of the semi-classical trajectories in rotating frames. Anyway these results are important in predicting the past of Trojan asteroid systems.

\npgni We conclude here with the full expansion for $|\boldsymbol\nabla R|^2$ for the Coulomb potential in 2-dimensions:

\npgni \hspace{5mm}$|\boldsymbol\nabla R|^2=\dfrac{\mu^2}{4\lambda^2}\left((1+\gamma_{\textrm{R}})^2+\left(\dfrac{(1-\gamma_{\textrm{R}})^2}{e^2}+\dfrac{\gamma_{\textrm{I}}^2(1-e^2)}{e^2}\right)+\dfrac{2(1-\gamma_{\textrm{R}}^2)x}{er}\right.$

\npgni \hspace{100mm}$+\left.\dfrac{2\gamma_{\textrm{I}}(1+\gamma_{\textrm{R}})y\sqrt{1-e^2}}{er}\right)$,\vspace{-5mm}

\npgni where\vspace{-5mm}

$$\gamma_{\textrm{R}}=\left(\frac{1}{2}\sqrt{\frac{\left(er-x-\frac{4\lambda^2e}{\mu}\right)^2+(1-e^2)y^2}{(er-x)^2+(1-e^2)y^2}}+\frac{1}{2}\frac{\left(er-x-\frac{2\lambda^2e}{\mu}\right)^2+(1-e^2)y^2-\frac{4\lambda^4e^2}{\mu^2}}{(er-x)^2+(1-e^2)y^2}\right)\tph,$$

$$\gamma_{\textrm{I}}=-\frac{2\lambda^2e\sqrt{1-e^2}y}{\mu((er-x)^2+(1-e^2)y^2)\gamma_{\textrm{R}}},\;\;\;\lambda^2=\mu a.$$

\npgni There is a similar result for the isotropic harmonic oscillator potential which is easily derived from the results in section 5.\vspace{-3mm}

\npgni It is also possible to calculate semi-classical curvature and torsion in 3-dimensions in specific cases. In particular for circular Keplerian spirals one can compute the quantum curvature and quantum torsion in 3-dimensions of the orbit $\vct{X}_t^0=(x,y,z)$, viz.

$$\kappa_{q}=\pm\frac{a^2(2(2z^2r^2+\rho^4)z^2\rho^2+(2ar^3-\rho^4)^2)^\frac{1}{2}}{(2a^2r^2-2a\rho^2r+\rho^4+\rho^2z^2)^\frac{3}{2}\rho},$$

$$\tau_{q}=\frac{(3\rho^8-a(\rho^4+z^4)r^3)z}{2(2z^2r^2+\rho^4)z^2\rho^2r^2+(2ar^3-\rho^4)^2r^2},$$

\npgni where $r^2=x^2+y^2+z^2$, $\rho^2=x^2+y^2$ and $a=\dfrac{\lambda^2}{\mu}$ is the radius of the corresponding classical circular orbit (see Ref.[12]).

\npgni Further, for the stationary state, $\psi_{n,e}$, the Hamiltonian, $H=2^{-1}\vct{P}^{2}-\mu|\vct{Q}|^{-1}$, is a constant, $H=E$, $E<0$ being the energy. Moreover, if $\tilde{\vct{A}}=(-2E)^{-1/2}\vct{A}$, $\vct{A}$ being the Hamilton–Lenz–Runge vector and $\vct{L}$ is the orbital angular momentum, $\tilde{\vct{A}}$ and $\vct{L}$ are quantum constants of the motion generating the dynamical symmetry group SO(4). Setting the Bohr correspondence limits equal to $\tilde{\vct{a}}=(\tilde{a_{1}},\tilde{a_{2}},\tilde{a_{3}})$, $\boldsymbol \ell=(\ell_{1},\ell_{2},\ell_{3})$, of $\tilde{\vct{A}}$ and $\vct{L}$ defined by the Bohr limits of cartesian coordinates,

$$\tilde{a_{i}}=\lim \psi_{n,e}^{-1} \tilde{A_{i}} \psi_{n,e}, \spc \ell_{i}=\lim \psi_{n,e}^{-1} L_{i} \psi_{n,e}, \spc i=1,2,3,$$

\npgni where $\psi_{n,e}\ne 0$, for $\vct{Z}=\lim(-i\hbar\boldsymbol \nabla\ln\psi_{n,e})$, $\vct{Z}=-i\boldsymbol\nabla R+\boldsymbol\nabla S$, we obtain $\boldsymbol \ell=\vct{r}\wedge \vct{Z}(\vct{r})$, $\vct{a}=\vct{Z}\wedge(\vct{Z}\wedge\vct{r})-\mu r^{-1}\vct{r}$, the semi-classical variables inheriting Pauli’s identities for $\vct{L}$ and $\vct{A}$. So, defining $\tilde{\vct{a}}=(-2E)^{-1/2}\vct{a}$, we have the following new identities for $\boldsymbol\ell=\boldsymbol\ell^{\textrm{r}}+i\boldsymbol\ell^{\textrm{i}}$ and $\tilde{\vct{a}}=\tilde{\vct{a}}^{\textrm{r}}+i\tilde{\vct{a}}^{\textrm{i}}$, r real part, i imaginary part, assuming non-zero denominators:

$$\frac{\ell_{3}^{\textrm{i}}}{\tilde{a}_{2}^{\textrm{r}}}=-\frac{\tilde{a}_{3}^{\textrm{r}}}{\ell_{2}^{\textrm{i}}}=\frac{\tilde{a}_{3}^{\textrm{i}}}{\ell_{2}^{\textrm{r}}}=e, \spc \frac{\ell_{1}^{\textrm{i}}}{\ell_{2}^{\textrm{r}}}=-\frac{\ell_{1}^{\textrm{r}}}{\ell_{2}^{\textrm{i}}}=\frac{\tilde{a}_{1}^{\textrm{i}}}{\tilde{a}_{2}^{\textrm{r}}}=-\sqrt{1-e^{2}}.$$

\npgni We also obtain, where $\psi_{\textrm{sc}}\ne0$,

$$\cos \theta=\frac{\lambda \ell_{3}^{\textrm{r}}+\tilde{a}_{1}^{\textrm{r}}\tilde{a}_{2}^{\textrm{i}}}{(\ell_{3}^{\textrm{r}})^{2}+(\tilde{a}_{1}^{\textrm{r}})^{2}},\spc \sin \theta=\frac{\lambda \tilde{a}_{1}^{\textrm{r}}-\ell_{3}^{\textrm{r}}\tilde{a}_{2}^{\textrm{i}}}{(\ell_{3}^{\textrm{r}})^{2}+(\tilde{a}_{1}^{\textrm{r}})^{2}};\spc \sin \theta=e,$$

\npgni which are generalisations of Newton’s results for planetary motion. (See Ref.[22]).

\npgni Our last word in this section is used to draw attention to some intriguing anti-gravity effects in our semi-classical theory best illustrated by the effective potential,

$$V_{\textrm{eff}}=V-|\boldsymbol\nabla R|^2=-\frac{\mu}{r}-|\boldsymbol\nabla R|^2,$$\vspace{-5mm}

\npgni in the Keplerian case. For our astronomical elliptic state, $\psi_{n,e}\sim\textrm{exp}\left(\dfrac{R+iS}{\hbar}\right)$ as $\hbar\sim0$, correct to the leading term in $\dfrac{1}{r}$ as $r\sim\infty$,

$$V_{\textrm{eff}}\sim+\dfrac{\mu}{r}+O\left(\dfrac{1}{r^2}\right).$$\vspace{-8mm}

\npgni This is so, even though we can prove that for

$$R(\vct{X}_{t=0})>\lambda\left(2\left(\dfrac{1-e}{1+e}\right)-\ln4\right)$$

\npgni the particle orbits spiral onto the corresponding Keplerian ellipse. A striking example is the case $e=0$, in 2-dimensions,  where it is easy to compute

$$V_{\textrm{eff}}=+\frac{\mu}{r}-\frac{\lambda^2}{r^2}-\frac{\mu^2}{\lambda^2},\;\;\;r>0,$$\vspace{-8mm}

\npgni where $|\boldsymbol\nabla R+\boldsymbol\nabla S|\rightarrow0$ as $r=|\vct{r}|\nearrow\infty$. So semi-classical particles on the outer rim of the condensing cloud in the state $\psi_{n,e}$, if they are at rest in our inertial frame, should be repelled by this gravitating system!\vspace{-3mm}

\npgni It follows from the last equation that: $V'_{\textrm{eff}}(2a)=0$, $V'_{\textrm{eff}}(r)>0$ for $r<2a$ and $V'_{\textrm{eff}}(r)<0$ for $r>2a$ so that $r=2a$ is a local maximum of $V_{\textrm{eff}}(r)$. We also note that $V''_{\textrm{eff}}(3a)=0$. So from the energy conservation equation WIMPish particles describing circular spiral orbits will be slowing down when outside the circle, $r=2a$, and speeding up inside. This anti-gravity bump could 
provide an elementary test of our theory in simple protoplanetary nebulae such as we envisage formed Jupiter, or, as we argue, in circular spiral galaxies.\vspace{-3mm}

\npgni The last result also applies to the isotopic harmonic oscillator potential and the corresponding circular spiral orbits. In this case a simple calculation yields:\hspace{5mm}$V'_{\textrm{eff}}(2^{\frac{1}{4}}a)=0$, $a$ being the radius, so this result is relevant for celestial bodies such as Trojan asteroids being formed at Lagrange equilibrium points. Similar results still obtain for small orbital eccentricities, $e\sim0$. In the Keplerian case, correct to first order in $e$, the anti-gravity bump is on the ellipse with equation, $\dfrac{2a}{r}=1+\dfrac{7}{4}e\cos{\theta}$ for the elliptical spiral corresponding to the classical orbit, $\dfrac{a}{r}=1+e\cos{\theta}$.\vspace{-2mm}

\npgni Needless to say the above result is true on average in Nelson's stochastic mechanics in the Bohr correspondence limit where more detailed results emerge. Here we content ourselves by quoting a final theorem, again relevant to the formation of Trojan asteroids. We will enlarge on this in our future work, taking into account the large deviations.

\begin{theorem}
The transition density for the 2-dimensional radial process corresponding to circular orbits of the quantum isotropic harmonic oscillator:

$$dr(t)=\left(\dfrac{(n-\frac{1}{2})\epsilon^2}{r(t)}-\omega r(t)\right)dt+\epsilon dB(t),\;\;\;r(t)>0,$$

\npgni for integer $n\ge1$ and where $B$ is a BM($\mathbb{R}$) process, is given by

\npgni \hspace{3mm} $p(x,s;y,t)=\dfrac{2\omega}{\epsilon^2}\dfrac{\textrm{e}^{(n-1)\omega(t-s)}}{(1-\textrm{e}^{-2\omega(t-s)})}\dfrac{y^n}{x^{n-1}}\exp{\Bigg\{\dfrac{-\omega(y^2+x^2\textrm{e}^{-2\omega(t-s)})}{\epsilon^2(1-\textrm{e}^{-2\omega(t-s)})}\Bigg\}}$

\npgni \hspace{90mm} $\times\; I_{n-1}\left(\dfrac{2\omega xy}{\epsilon^2(\textrm{e}^{\omega(t-s)}-\textrm{e}^{-\omega(t-s)})}\right)$,

\npgni where $t>s\ge0$, $r(s)=x$, $r(t)=y$ and $I$ is the modified Bessel function.

\end{theorem}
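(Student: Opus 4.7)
The plan is to recognise the radial SDE as the Euclidean norm of a $2n$-dimensional Ornstein--Uhlenbeck process and then obtain the transition density by integrating out the angular degrees of freedom in the (Gaussian) OU transition density, using a standard identity for the spherical mean of an exponential.

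First I would set $R = |\vct{X}|$ where $\vct{X} = (X_1,\dots,X_{2n})$ solves the vector OU equation $d\vct{X}_t = -\omega \vct{X}_t\,dt + \epsilon\,d\vct{W}_t$, with $\vct{W}$ a standard Brownian motion in $\mathbb{R}^{2n}$. A direct application of It\^{o}'s formula to $|\cdot|\colon\mathbb{R}^{2n}\setminus\{0\}\to\mathbb{R}_+$, using $\partial_i|x| = x_i/|x|$ and $\Delta |x| = (2n-1)/|x|$, gives
\begin{equation*}
dR_t = \Bigl(-\omega R_t + \frac{(2n-1)\epsilon^2}{2R_t}\Bigr)dt + \epsilon\, dB_t = \Bigl(\frac{(n-\tfrac12)\epsilon^2}{R_t} - \omega R_t\Bigr)dt + \epsilon\, dB_t,
\end{equation*}
where $B_t = \int_0^t \vct{X}_s\cdot d\vct{W}_s / R_s$ is a one-dimensional Brownian motion by L\'evy's characterisation. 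This identifies the law of $r(t)$ with that of $|\vct{X}_t|$.

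Next, since the OU transition density is Gaussian,
\begin{equation*}
q(\vct{x},s;\vct{y},t) = (2\pi\sigma^2)^{-n}\exp\Bigl(-\frac{|\vct{y} - \vct{x}\,\textrm{e}^{-\omega(t-s)}|^2}{2\sigma^2}\Bigr),\qquad \sigma^2 = \frac{\epsilon^2(1-\textrm{e}^{-2\omega(t-s)})}{2\omega},
\end{equation*}
the radial density is obtained from $p(x,s;y,t)\,dy = \int_{|\vct{y}|=y} q\,dS(\vct{y})$. By rotational invariance I can align $\vct{x}$ along $\vct{e}_1$; writing $a = x\,\textrm{e}^{-\omega(t-s)}$ and $\cos\theta = y_1/y$ gives $|\vct{y}-a\vct{e}_1|^2 = y^2 + a^2 - 2ay\cos\theta$. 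Then I multiply by $y^{2n-1}$ and integrate over $\mathbb{S}^{2n-1}$, invoking the classical identity
\begin{equation*}
\int_{\mathbb{S}^{m-1}} \textrm{e}^{\lambda\,\vct{u}\cdot\vct{v}}\,d\sigma(\vct{u}) = (2\pi)^{m/2}\lambda^{-(m/2-1)}I_{m/2-1}(\lambda),\qquad|\vct{v}|=1,
\end{equation*}
with $m = 2n$ and $\lambda = ay/\sigma^2$, so the angular part contributes $(2\pi)^n(ay/\sigma^2)^{-(n-1)}I_{n-1}(ay/\sigma^2)$.

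Finally, I collect the factors and substitute back for $a$ and $\sigma^2$: the prefactor $y^{2n-1}/(2\pi\sigma^2)^n$ combined with $(2\pi)^n(ay/\sigma^2)^{-(n-1)}$ simplifies to $y^n/(\sigma^2 a^{n-1})$, which becomes $(2\omega/\epsilon^2)\,\textrm{e}^{(n-1)\omega(t-s)}y^n/\bigl(x^{n-1}(1-\textrm{e}^{-2\omega(t-s)})\bigr)$; the exponent $(y^2+a^2)/(2\sigma^2)$ becomes $\omega(y^2+x^2\textrm{e}^{-2\omega(t-s)})/(\epsilon^2(1-\textrm{e}^{-2\omega(t-s)}))$; and the Bessel argument $ay/\sigma^2$ becomes $2\omega xy/(\epsilon^2(\textrm{e}^{\omega(t-s)}-\textrm{e}^{-\omega(t-s)}))$. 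These are exactly the factors in the stated formula.

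The only genuine subtlety is the integer-$n$ restriction, which ensures the dimensional interpretation goes through: for non-integer $n$ the representation as a vector OU modulus fails, though the same density still arises abstractly as the transition density of the radial diffusion with the given generator, obtainable alternatively by time-changing and space-scaling to reduce to a squared Bessel process with drift (equivalently, the CIR process for $R^2$) whose transition density via Laplace inversion yields the non-central chi-squared form encoding $I_{n-1}$. The bulk of the work, and the main bookkeeping obstacle, is keeping the normalisation constants consistent through the hyperbolic-function identity $1-\textrm{e}^{-2\omega\tau} = \textrm{e}^{-\omega\tau}(\textrm{e}^{\omega\tau}-\textrm{e}^{-\omega\tau})$ used to rewrite the Bessel argument.
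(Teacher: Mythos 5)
Your proposal is correct, and in fact the paper states this theorem without supplying any proof at all, so there is nothing in the source to compare against; your argument fills that gap with what is surely the intended derivation. The key identification is sound: It\^{o}'s formula applied to the modulus of a $2n$-dimensional Ornstein--Uhlenbeck process $d\vct{X}_t=-\omega\vct{X}_t\,dt+\epsilon\,d\vct{W}_t$ reproduces the radial drift $\bigl((n-\tfrac12)\epsilon^2/R-\omega R\bigr)$ exactly, and integrating the Gaussian OU kernel with variance $\sigma^2=\epsilon^2(1-\textrm{e}^{-2\omega(t-s)})/(2\omega)$ over the sphere $|\vct{y}|=y$ via the identity $\int_{\mathbb{S}^{2n-1}}\textrm{e}^{\lambda\,\vct{u}\cdot\vct{e}_1}\,d\sigma(\vct{u})=(2\pi)^{n}\lambda^{-(n-1)}I_{n-1}(\lambda)$ yields precisely $\dfrac{y^{n}}{\sigma^{2}a^{n-1}}\exp\bigl(-\tfrac{y^{2}+a^{2}}{2\sigma^{2}}\bigr)I_{n-1}\bigl(\tfrac{ay}{\sigma^{2}}\bigr)$ with $a=x\,\textrm{e}^{-\omega(t-s)}$; I have checked that the prefactor, the exponent and the Bessel argument all reduce to the stated expressions. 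Two small points worth making explicit if you write this up: the radial process is a diffusion in its own right because the vector dynamics and the conditioning are rotationally invariant, and the positivity $r(t)>0$ holds because an OU process in dimension $2n\ge2$ almost surely never hits the origin. Your closing remark that for non-integer index the same density can be obtained by reducing $R^2$ to a CIR/squared-Bessel-with-drift process is also correct and is the natural way to remove the integrality restriction.
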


\npgni Using this result and applying Laplace's principle (see Ref.[8]) we see that in the Bohr correspondence limit, $n\nearrow\infty$ and $\epsilon\searrow0$ with $n\epsilon^2=\lambda$ the corresponding radial process transition density has an extremum when

$$y^2=\dfrac{\lambda}{\omega}+\left(x^2-\dfrac{\lambda}{\omega}\right)\textrm{e}^{-2\omega t},$$

\npgni where we have taken $s=0$ for simplicity. This last equation is precisely the semi-classical orbit, at time $t$, for $\epsilon=0$, which in turn tends to the classical circular orbit 
as $t\nearrow\infty$.

\npgni \textbf{Exercise}

\npgni If $y_{\textrm{sc}}(x,t)$ is the equation of the semi-classical orbit starting at $x$ at $t=0$, setting $y_{\textrm{sc}}^2=y^2$ for the above $y$, for initial WIMP particle density,

$$\mathbb{P}_{0}(W\in dx)=\rho_{0}(x)dx,$$

\npgni then in the Bohr correspondence limit,

$$\mathbb{P}_{t}(W\in dy)=\rho_{0}(x_{0}(y,t))\Big|\dfrac{\partial x_{0}}{\partial y}(y,t)\Big|dy,$$

\npgni where $x_{0}=x_{0}(y,t)$ is the unique solution of $y_{\textrm{sc}}(x_{0},t)=y$.

\npgni In the infinite time limit this gives the WIMP particle density on the classical orbit for the formation of e.g. the Trojan asteroids near the $\mathscr{L}_{4,5}$ Lagrange points when their orbits are circular. This result generalises for small eccentricities.

\subsection {On the (R,V) Equation, Vector Potentials and Fluid Models}

\npgni Solving the $(R,V)$ equation of (NQG I) amounts to resolving the Nelson problem of our $S$ term not being a gradient field. We have seen that this difficulty does not arise for potentials $V=-\dfrac{\mu}{r}$ or $V=Kr^2$ in 2-dimensions or 3-dimensions in the explicit case of the astronomical elliptic states, but for other potentials the problem remains. Obviously, when our WIMPish particles are charged the $(R,V)$ equation of (NQG I) can always be solved by including a vector potential, replacing $\boldsymbol{\nabla}S$ by $(\boldsymbol{\nabla}S-\vct{A})$, $\textrm{curl}\vct{A}\ne\vct{0}$, $\textrm{div}\vct{A}=0$. Moreover, when one considers the data from Hubble and the James Webb telescopes one is lead to conclude that a fluid model would be a more immediate way to elucidate what one sees in a neighbourhood of a typical ring corresponding to one of our astronomical elliptic states. So in this section we give such a model based on the semi-classical mechanics of a linear superposition of Schrödinger stationary state wave functions, $\psi\sim\textrm{exp}\left(\frac{R+iS}{\epsilon^2}\right)$ as $\epsilon\sim0$, including a vector potential $\vct{A}$ and scalar potential $V$. When $\vct{A}=\boldsymbol{0}$, our astronomical elliptic states provide a paradigm of fluids spiralling toward Keplerian ellipses obeying a version of Newton's laws forming a ring system. Our fluid model gives results for quite a large class of time-dependent states in the semi-classical limit and more generally because our fluid is posited to have both viscosity, $\sigma^2$, and vorticity; as you will see the bigger $\sigma^2$, the smaller the WIMP mass.

\npgni In our arguments we accept the primacy of the log particle density $R$ and the Hamilton Jacobi Function $S$ as required by the above asymptotics consistent with an entropy $\mathcal{E}=\dfrac{(R-R_{\textrm{max}})}{\epsilon^2}$ for a superposition of our astronomical elliptic eigenstates but here, as we shall see, the entropy is different. We assume we have a classical, planar, closed curve $C_{0}$, representing the classical orbit on 
which $R$ achieves its global maximum $R_{\textrm{max}}$, typically $C_{0}$ will be a Keplerian ellipse. The level surfaces, $R=c$, will then be approximately toroidal surfaces forming tubes centred on $C_{0}$. In a neighbourhood of $C_{0}$, $|\boldsymbol{\nabla}R|\sim0$, and fluid particle velocity, $\vct{v}=\boldsymbol{\nabla}R+\boldsymbol{\nabla}S-\vct{A}\sim\boldsymbol{\nabla}S-\vct{A}$, with $\textrm{curl}\vct{A}\not\equiv\boldsymbol{0}$, when the fluid is not irrotational. In a neighbourhood of $C_{0}$, since

$$\boldsymbol\nabla(2^{-1}\vct{v}^2)=(\vct{v}.\boldsymbol\nabla)\vct{v}+\vct{v}\times\textrm{curl}\vct{v}\sim|\vct{v}|\dfrac{\partial}{\partial S}\vct{v}-(\vct{A}.\boldsymbol\nabla)\vct{A}+\textrm{curl}\vct{A}\times\vct{v},$$

\npgni this would be a good place to look for the effects of $\vct{A}$ by concentrating on the vorticity term $\textrm{curl}\vct{A}\times\vct{v}$, which could reinforce dark matter effects. Always remember this neighbourhood is a collision zone and our WIMPish particles may have captured charge by combining with other particles and will therefore no longer be in an astronomical elliptic state and could be subject to vector potentials from magnetic fields. Nevertheless, we will show how semi-classical analysis can explain their behaviour in quite general circumstances.

\npgni In 3-dimensions, to ensure consistency with the Schrödinger equation, we imitate our 2-dimensional prescription of writing our putative solution, $\boldsymbol{\xi}=\boldsymbol{\nabla}S-\vct{A}$, of the $(R,V)$ equation as $\boldsymbol{\xi}=\lambda(\boldsymbol{\nabla}R)^{\perp}$, $\lambda=\sqrt{1+\dfrac{2(E-V)}{|\boldsymbol\nabla R|^2}}$, thereby ensuring energy conservation: $2^{-1}(|\boldsymbol{\xi}|^2-|\boldsymbol{\nabla}R|^2)+V=E$. To this end we concentrate on the level surfaces, $R=c$, with normal $\boldsymbol{\nabla}R$, at a point P on the level surface, $\overset{\rightharpoonup}{\textrm{OP}}=\vct{r}_{\textrm{p}}$, where the local normal coordinates $(u,v)$ are determined by the principal directions forming an orthogonal net on $R=c$.

\npgni The orthogonality condition $\boldsymbol{\xi}\perp\boldsymbol{\nabla}R$ means that $\xi_{w}=0$, where $w$ is the coordinate in the normal direction. Here $\boldsymbol{\xi}=\boldsymbol{\nabla}S-\vct{A}$, where possibly $\textrm{curl}\vct{A}\ne\boldsymbol{0}$. Anyway we can assume, redefining $S$ by a gauge transformation, that $\bigtriangleup S=\textrm{div}\boldsymbol{\xi}$, and $\textrm{div}\vct{A}=0$. What about $\textrm{curl}\vct{A}$? This is determined by cyclic permutations of the equations in $(u,v,w)$,

$$-\dfrac{1}{h_{u}h_{v}}\left(\dfrac{\partial}{\partial u}(h_{v}\xi_{v})-\dfrac{\partial}{\partial v}(h_{u}\xi_{u})\right)=(\textrm{curl}\vct{A})_{w}.$$

\npgni Cyclic permutations of the $(u,v,w)$ equations gives, if $\textrm{curl}\vct{A}=\boldsymbol{0}$,

$$\dfrac{1}{h_{v}h_{w}}\left(\dfrac{\partial}{\partial v}(h_{w}\xi_{w})-\dfrac{\partial}{\partial w}(h_{v}\xi_{v})\right)=0,$$

$$\dfrac{1}{h_{w}h_{u}}\left(\dfrac{\partial}{\partial w}(h_{u}\xi_{u})-\dfrac{\partial}{\partial u}(h_{w}\xi_{w})\right)=0,$$

$$\dfrac{1}{h_{u}h_{v}}\left(\dfrac{\partial}{\partial u}(h_{v}\xi_{v})-\dfrac{\partial}{\partial v}(h_{u}\xi_{u})\right)=0.$$

\npgni For $\xi_{w}=0$, these imply that $h_{v}\xi_{v}$ and $h_{u}\xi_{u}$ are functions of $(u,v)$ only and to solve the last equation we need a function $h=h(u,v)$, with

$$h_{v}\xi_{v}=\dfrac{\partial h}{\partial v},\;\;\;h_{u}\xi_{u}=\dfrac{\partial h}{\partial u},$$

\npgni $h$ being $C^2$. Let us now see how this determines the direction of $\boldsymbol{\xi}$ in $T_{P}$. We already know that $\oint\limits_{C}\boldsymbol{\xi}.d\vct{r}=0$ for any simple closed curve on $R=c$ from the existence of $h$, if it can be defined on the whole surface $R=c$. Now $|\boldsymbol{\xi}|^2=\land^2=\xi_{u}^2+\xi_{v}^2$, so $\xi_{v}=h_{v}^{-1}\dfrac{\partial h}{\partial v}$, $\xi_{u}=h_{u}^{-1}\dfrac{\partial h}{\partial u}$ and

$$\land^2=|\boldsymbol{\nabla}R|^2+2(E-V)=\left(h_{u}^{-1}\dfrac{\partial h}{\partial u}\right)^2+\left(h_{v}^{-1}\dfrac{\partial h}{\partial v}\right)^2,$$

\npgni which is not very difficult to satisfy. Let $\phi$ be the angle between $\boldsymbol{\xi}$ and $\vct{r}_{u}$ so that

$$\tan{\phi}=\dfrac{\xi_{v}}{\xi_{u}}=\dfrac{h_{u}\frac{\partial h}{\partial v}}{h_{v}\frac{\partial h}{\partial u}}.$$

\npgni So $\boldsymbol{\nabla}S=\boldsymbol{\xi}=\sqrt{|\boldsymbol{\nabla}R|^2+2(E-V)}(\cos{\phi}\;\hat{\vct{r}}_{u}+\sin{\phi}\;\hat{\vct{r}}_{v})$, $\xi_{w}=0$ and $\textrm{curl}\vct{A}=\boldsymbol{0}$.

\npgni To finally determine $\phi$ in our coordinate neighbourhood of P on the level surface, $R=c$, we need more data. We choose the angular momentum, $\vct{L}(\vct{r})=\vct{r}\times(\boldsymbol{\nabla}R+\boldsymbol{\nabla}S)$, which we assume is known at least approximately, where $\vct{r}=\overset{\rightharpoonup}{\textrm{OQ}}$, Q being in the neighbourhood of P being different for each elliptical orbit in the ring system. Then, if $|\vct{r}\times\boldsymbol{\nabla}S|=r|\boldsymbol{\nabla}S|\sin{\alpha}$,

$$\sin{\alpha}=\dfrac{|\vct{L}(\vct{r})-\vct{r}\times\boldsymbol{\nabla}R|}{r\sqrt{|\boldsymbol{\nabla}R|^2+2(E-V)}}.$$

\npgni Using local coordinates,

\npgni $\widehat{\boldsymbol{\nabla}R}=(0,0,1)$, $\widehat{\boldsymbol{\nabla}S}=(\cos{\phi},\sin{\phi},0)$, $\vct{r}=\overset{\rightharpoonup}{\textrm{OQ}}=(x,y,z)$ so that

$$\cos{\alpha}=\sqrt{(x^2+y^2)}\sin{(\phi+\chi)},\;\;\textrm{where}\;\;\chi=\arctan{\left(\frac{y}{x}\right)},$$

\npgni determining $\phi$. Needless to say for central forces, if P is in the neighbourhood of $C_{0}$, characterising a typical ring corresponding to $R_{\textrm{M}}=R_{j}$, $\vct{L}(\vct{r})\cong\vct{L}_{j}$, a constant.

\npgni Modulo finding the integral function $h(u,v)$ this solves $\textrm{curl}\vct{A}=\boldsymbol{0}$ locally in the tangent plane $\textrm{T}_{p}$ at a point P on the level surface $R=c$. Only if this local solution can be extended to the whole of space can we say with confidence $\textrm{curl}\vct{A}\equiv\boldsymbol{0}$. So it is likely we need to include vector potentials. Of course if we use our solution of the Schrödinger equation with the desired asymptotics all of the above must hold good. To recapitulate we need:-

$$\boldsymbol{\xi}.\boldsymbol{\nabla}R=0,\;\;2^{-1}(\boldsymbol{\xi}+\boldsymbol{\nabla}R)^2+V_{\textrm{eff}}(\vct{q})=E,\;\;V_{\textrm{eff}}=V-|\boldsymbol{\nabla}R|^2,\;\;\boldsymbol{\xi}=\boldsymbol{\nabla}S-\vct{A},$$

\npgni for the corresponding stationary state solution, $\psi\sim\textrm{exp}\left(\frac{R+iS}{\epsilon^2}\right)$ as $\epsilon\sim0$ of

$$H_{Q}(\vct{q},\vct{p})\psi=E\psi,\;\;\textrm{where}\;\;H(\vct{q},\vct{p})=2^{-1}\left(\vct{p}-\vct{A}(\vct{q})\right)^2+V_{\textrm{eff}}(\vct{q}),$$

\npgni with constraint $(\vct{p}-\boldsymbol{\nabla}R).\boldsymbol{\nabla}R=0$, in the classical limiting case. Here we presuppose we have solutions of Hamilton's equations:

$$\dot{\vct{q}}=\dfrac{\partial H}{\partial\vct{p}},\;\;\;\;\;\;\dot{\vct{p}}=-\dfrac{\partial H}{\partial\vct{q}},$$

\npgni given by $\vct{p}=\boldsymbol{\nabla}S(\vct{q})+\boldsymbol{\nabla}R(\vct{q})-\vct{A}$ and $\dot{\vct{q}}=\vct{p}-\vct{A}=\boldsymbol{\nabla}S(\vct{q})-\vct{A}+\boldsymbol{\nabla}R(\vct{q})$. This is in line with taking the Bohr correspondence limit of Nelson's stochastic mechanics for scalar and vector potentials, although it is the first time we have spelled this out in detail. (Here we assume $V$, $\vct{A}$ and $E$ are per unit mass and we still retain at this stage $(\boldsymbol{\nabla}S-\vct{A}).\boldsymbol{\nabla}R=0$ as above).

\npgni So what you may say is the form of the Hamilton equations? Setting particle velocity\vspace{-5mm}

\npgni $\vct{v}=\dot{\vct{q}}=(\boldsymbol{\nabla}S-\vct{A}+\boldsymbol{\nabla}R)$, $(\boldsymbol{\nabla}S-\vct{A}).\boldsymbol{\nabla}R=0$, and taking the gradient of the energy equation gives,

$$\boldsymbol{\nabla}_{q}(2^{-1}\vct{v}^2+V_{\textrm{eff}}(\vct{q}))=
(\vct{v}.\boldsymbol{\nabla}_{q})\vct{v}+\vct{v}\times\textrm{curl}\vct{v}+\boldsymbol{\nabla}_{q}V_{\textrm{eff}},$$

\npgni which is reminiscent of fluid dynamics. In fact, setting the convected derivative,

$$\dfrac{D}{\partial t}=\dfrac{\partial}{\partial t}+\vct{v}.\boldsymbol{\nabla},$$

\npgni where the first term only contributes in time dependent cases we have a Burgers-Zeldovich equation with vorticity, viz

$$\dfrac{D\vct{v}}{\partial t}=-\boldsymbol{\nabla}V_{\textrm{eff}}+\textrm{curl}\vct{v}\times\vct{v},$$

\npgni where $\vct{v}=(\boldsymbol{\nabla}S-\vct{A}+\boldsymbol{\nabla}R)$, with constraint $(\boldsymbol{\nabla}S-\vct{A}).\boldsymbol{\nabla}R=0$ for our classical limit. To solve the latter problem it is tempting to set $|\boldsymbol{\nabla}R|\sim0$, a slowly varying $R$, so as to consider the Hamilton-Jacobi equation in the time-dependent case,

$$\dfrac{\partial S_{t}}{\partial t}+H(\vct{q},\boldsymbol{\nabla}S_{t}(\vct{q}))=0,\;\;\;S_{t=0}=S_{0}.$$

\npgni As we shall see, this inevitably leads to a Euclidean version of our theory, i.e. a Schrödinger-Heat equation which is not obviously relevant to the formation of planets, stars and galaxies, but see the example below. We shall return to the time-dependent problem after considering the example.

\npgni \textbf{Exercise}

\npgni Prove that the 2-dimensional $(R,V)$ problem in (NQG I) can always be solved by including a vector potential $\vct{A}$, satisfying $A_{z}=0$ and $-\bigtriangleup\vct{A}=\boldsymbol{\nabla}^{\perp}(\textrm{div}(\lambda\boldsymbol{\nabla}R))$.

\npgni \textbf{Example}

\npgni Let $\Psi(\vct{r},\epsilon^2)$, $\vct{r}\in\mathbb{R}^3$, be the circular stationary state solution of the Schrödinger equation of a unit particle moving in the Coulomb potential, $-\dfrac{\mu}{r}$, where $r=|\vct{r}|$, i.e. for energy $E$,

$$-\dfrac{1}{2}\epsilon^4\bigtriangleup\Psi-\dfrac{\mu}{r}\Psi=E\Psi.$$

\npgni Writing $\epsilon^2=i\sigma^2$ and $U(\vct{r},\sigma^2)=\Psi(\vct{r},i\sigma^2)$, formally at least, $U$ is a solution of the Schrödinger-Heat equation,

$$\dfrac{1}{2}\sigma^4\bigtriangleup U-\dfrac{\mu}{r}U=EU.$$

\npgni If $U=\textrm{exp}\left(\dfrac{S-iR}{\sigma^2}\right)$, for real $R$ and $S$, is a complex-valued solution of the above equation then $U=\textrm{exp}\left(\dfrac{S+R}
{\sigma^2}\right)$ is a real solution of the modified heat equation,\vspace{3mm}

$$\dfrac{1}{2}\sigma^4\bigtriangleup U+\left(-\dfrac{\mu}{r}-|\boldsymbol{\nabla}R|^2\right)U=EU.$$

\npgni Transforming the appropriate stationary state solution, $\Psi$, (see ref.[12]), we can construct an exact solution $U$ with

$$R=-\dfrac{\mu}{\lambda}r+\dfrac{\lambda}{2}\ln{(x^2+y^2)}+\sigma^2\tan^{-1}\left(\dfrac{y}{x}\right),$$

$$S=\lambda\tan^{-1}\left(\dfrac{y}{x}\right)-\dfrac{\sigma^2}{2}\ln{(x^2+y^2)},$$

\npgni where $\lambda>0$ is defined by $E=-\dfrac{\mu^2}{2\lambda^2}$. We also note that $\sigma$ need not be small and $\boldsymbol{\nabla}R.\boldsymbol{\nabla}S\ne0$.

\npgni If $\vct{v}=\boldsymbol{\nabla}R+\boldsymbol{\nabla}S$ defines the deterministic part of the particle velocity then

$$\dfrac{d\vct{v}}{dt}=-\boldsymbol{\nabla}V_{\textrm{eff}},\;\;\textrm{where}\;\;V_{\textrm{eff}}=\dfrac{\mu}{\lambda r}(\lambda-\sigma^2)-\dfrac{\lambda^2+\sigma^4}{x^2+y^2}-\dfrac{\mu^2}{\lambda^2}.$$

\npgni We now assume that $0<\sigma^2<\lambda$. This system can be solved exactly, showing that the particle paths spiral on to the circular orbit with radius $r=\dfrac{\lambda(\lambda-\sigma^2)}{\mu}$ in the plane $z=0$. Moreover the $3^{\textrm{rd}}$ component of angular momentum is equal to $(\lambda+\sigma^2)$.

\npgni This example leads us to ask, could this be a model for the formation of spiral galaxies such as the Whirlpool galaxy, Messier 51? In addition to the spiral nature of the solution if $\sigma^2\sim\lambda$ the $3^{\textrm{rd}}$ component of angular momentum is approximately \textbf{twice} the classically predicted value! Could this help to explain dark matter data reproducing the observed rotation curve for galaxies' gaseous parts.

\npgni Other applications of this technique include cyclone cloud formation, Phytoplankton ocean swirls, bubble chamber photographs showing spiral paths and atmospheric rocket fuel spirals.

\npgni The following theorem encapsulates the main ideas underlying the last example and provides a source of solutions of Burgers-Zeldovich equations with vorticity, which can be superposed making them appropriate to galactic evolution and dark matter data. Firstly, recall the Schrödinger equation for a time-dependent wave function, $\Psi_{t}$, for a unit mass particle subject to a scalar potential $V$ and vector potential $\vct{a}$,

$$\dfrac{\partial\Psi_{t}}{\partial t}=\left(\dfrac{\sigma^2}{2}\bigtriangleup+\;\vct{a}.\boldsymbol{\nabla}+\dfrac{1}{\sigma^2}\left(V+\dfrac{\vct{a}^2}
{2}\right)\right)\Psi_{t},\;\;\;\textrm{div}\vct{a}=0,\;\;\;\dfrac{\partial\vct{a}}{\partial t}=0,$$

\npgni in this simple example, with $\sigma^2=i\hbar$. By the corresponding Schrödinger-Heat equation we mean, modulo above assumptions,

$$\dfrac{\partial U_{t}}{\partial t}=\left(\dfrac{\sigma^2}{2}\bigtriangleup+\;\vct{a}.\boldsymbol{\nabla}+\dfrac{1}{\sigma^2}\left(V+\dfrac{\vct{a}^2}{2}\right)\right) U_{t},\;\;\sigma^2\in\mathbb{R}_{+}.\;\;\;\;\;(\textrm{SHE})$$

\npgni \textbf{The Hopf-Cole transformation} for $\mathcal{S}_{t}$, $U_{t}=\textrm{exp}\left(-\dfrac{\mathcal{S}_{t}}{\sigma^2}\right)$ reduces this to:

$$\dfrac{\partial\mathcal{S}_{t}}{\partial t}+\dfrac{1}{2}|\boldsymbol{\nabla}\mathcal{S}_{t}-\vct{a}|^2+V=\dfrac{\sigma^2}{2}\bigtriangleup\mathcal{S}_{t},\;\;\mathcal{S}_{t}=-\sigma^2\ln{U_{t}}.$$

\npgni All of the above need to be solved for the given initial conditions at $t=0$.

\begin{theorem}

Corresponding to the complex-valued solution, $U_{t}=\textrm{exp}\left(\dfrac{iR_{t}-S_{t}}{\sigma^2}\right)$ of the SHE, we have two real valued solutions of a modified SHE with $V\rightarrow V-|\boldsymbol{\nabla}R_{t}|^2$,

$$U_{t}=\textrm{exp}\left(-\dfrac{\mathcal{S}_{t}}{\sigma^2}\right),$$

\npgni where $\mathcal{S}_{t}=(\pm R_{t}\pm S_{t})$. Here we consider the +ve signs.

\end{theorem}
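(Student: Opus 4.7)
The plan is to push the complex solution through the Hopf-Cole transformation recorded immediately before the statement, separate real and imaginary parts to extract two coupled real PDEs for $R_t$ and $S_t$, and then verify directly that the ansatz $\mathcal{S}_t = \pm R_t + S_t$ satisfies the Hopf-Cole transform of the modified SHE. Everything reduces to algebra once the two real identities are in hand.

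First I would write $U_t = \exp(-\Phi_t/\sigma^2)$ with $\Phi_t := S_t - iR_t$ and apply the Hopf-Cole computation of the SHE verbatim, taking care that, because $\Phi_t$ is complex-valued, each square is the algebraic square and not a modulus. This produces
\begin{equation*}
\partial_t \Phi_t + \tfrac{1}{2}(\boldsymbol{\nabla}\Phi_t - \vct{a})^2 + V = \tfrac{\sigma^2}{2}\bigtriangleup\Phi_t.
\end{equation*}
Since $\vct{a}$ and $V$ are real, separating real and imaginary parts yields the two identities
\begin{equation*}
\partial_t S_t + \tfrac{1}{2}(\boldsymbol{\nabla}S_t - \vct{a})^2 - \tfrac{1}{2}|\boldsymbol{\nabla}R_t|^2 + V = \tfrac{\sigma^2}{2}\bigtriangleup S_t, \quad (\star)
\end{equation*}
\begin{equation*}
\partial_t R_t + (\boldsymbol{\nabla}S_t - \vct{a}) \cdot \boldsymbol{\nabla}R_t = \tfrac{\sigma^2}{2}\bigtriangleup R_t. \quad (\star\star)
\end{equation*}
These two real identities are the engine of everything that follows.

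Next, the Hopf-Cole transform of the modified SHE (with $V$ replaced by $V - |\boldsymbol{\nabla}R_t|^2$) is the requirement that
\begin{equation*}
\partial_t \mathcal{S}_t + \tfrac{1}{2}(\boldsymbol{\nabla}\mathcal{S}_t - \vct{a})^2 + V - |\boldsymbol{\nabla}R_t|^2 = \tfrac{\sigma^2}{2}\bigtriangleup\mathcal{S}_t.
\end{equation*}
I would substitute $\mathcal{S}_t = R_t + S_t$, expand
\begin{equation*}
(\boldsymbol{\nabla}R_t + \boldsymbol{\nabla}S_t - \vct{a})^2 = |\boldsymbol{\nabla}R_t|^2 + 2 \boldsymbol{\nabla}R_t \cdot (\boldsymbol{\nabla}S_t - \vct{a}) + (\boldsymbol{\nabla}S_t - \vct{a})^2,
\end{equation*}
and observe that the left-hand side then regroups term-for-term into the sum of the left-hand sides of $(\star)$ and $(\star\star)$, the cancellation $\tfrac{1}{2}|\boldsymbol{\nabla}R_t|^2 - |\boldsymbol{\nabla}R_t|^2 = -\tfrac{1}{2}|\boldsymbol{\nabla}R_t|^2$ providing exactly the correction needed. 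By $(\star)$ and $(\star\star)$ the right-hand side is $\tfrac{\sigma^2}{2}\bigtriangleup(R_t + S_t)$, as required. The second real solution corresponds to the ansatz $\mathcal{S}_t = -R_t + S_t$ and works because the coupled system $\{(\star),(\star\star)\}$ is invariant under $R_t \to -R_t$ (the first equation is even in $R_t$ and the second is homogeneous of degree one).

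The hard part will not be any single calculation but rather the conceptual insistence that the square in the complex Hopf-Cole equation be interpreted algebraically rather than as a modulus; this is what produces the sign-indefinite term $-\tfrac{1}{2}|\boldsymbol{\nabla}R_t|^2$ in $(\star)$, and it is precisely this term which, when doubled by the expansion of $(\boldsymbol{\nabla}R_t + \boldsymbol{\nabla}S_t - \vct{a})^2$, absorbs the shift $V \mapsto V - |\boldsymbol{\nabla}R_t|^2$ in the modified SHE. No PDE-analytic obstacle should arise: the argument is purely formal, valid wherever $R_t$ and $S_t$ are smooth enough for the Hopf-Cole transformation and its differentiations to be carried out.
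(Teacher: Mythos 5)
Your proposal is correct and follows essentially the same route as the paper: your identities $(\star)$ and $(\star\star)$ are precisely the paper's equations $(i)$ and $(ii)$ (rearranged), and adding them so that the cross term $\boldsymbol{\nabla}R_t\cdot(\boldsymbol{\nabla}S_t-\vct{a})$ and the half of $|\boldsymbol{\nabla}R_t|^2$ combine to absorb the shift $V\mapsto V-|\boldsymbol{\nabla}R_t|^2$ is exactly the paper's computation. The only differences are cosmetic: you derive $(i)$ and $(ii)$ from the complex Hopf--Cole equation rather than quoting them, and you explicitly justify the second sign choice via the invariance of the system under $R_t\to -R_t$, which the paper leaves implicit.
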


\begin{proof}
Dropping the subscript $t$, we know that:-

$$\dfrac{\partial R}{\partial t}=\dfrac{\sigma^2}{2}\bigtriangleup R-\boldsymbol{\nabla}R.\boldsymbol{\nabla}S+\vct{a}.\boldsymbol{\nabla}R,\;\;\;\;\;\;(i)$$

$$-\dfrac{\partial S}{\partial t}=-\dfrac{\sigma^2}{2}\bigtriangleup S+\dfrac{1}{2}(|\boldsymbol{\nabla}S|^2-|\boldsymbol{\nabla}R|^2)-\vct{a}.\boldsymbol{\nabla}S+V+\dfrac{\vct{a}^2}{2},\;\;\;\;(ii)$$

\npgni Subtracting gives

$$\dfrac{\partial\mathcal{S}}{\partial t}+\dfrac{1}{2}|\boldsymbol{\nabla}\mathcal{S}-\vct{a}|^2+V-|\boldsymbol{\nabla}R|^2=\dfrac{\sigma^2}{2}\bigtriangleup\mathcal{S},\;\;\;\mathcal{S}=(R+S).\;\;\;\;\;(iii)$$
 
\end{proof}

\npgni To recover our semi-classical eigenfunction results, where $\vct{a}\equiv\vct{0}$, we need $\dfrac{\partial R}{\partial t}=0$ and $\dfrac{\partial S}{\partial t}=-E$, for astronomical elliptic states. What happens more generally in the time-dependent case? Say for the sake of argument, $R_{t}=\textrm{O}(\sigma^2)$, in the SHE case $R_{t=0}(\vct{x})=\sigma^2\ln{T_{0}}(\vct{x})$, so that

$$U_{t=0}(\vct{x})=T_{0}(\vct{x})\textrm{exp}\left(-\dfrac{S_{0}(\vct{x})}{\sigma^2}\right),$$

\npgni $T_{0}>0$ and $\int T_{0}^2(\vct{x})d\vct{x}<\infty$, for a finite WIMPish fluid mass. As we shall see the elementary formula of Elworthy-Truman gives for fluid density $\rho_{0}$,

$$\rho_{0}^{1/2}(\vct{x},t)=\lim_{\sigma^2\rightarrow0}\textrm{exp}\left(\dfrac{S_{t}(\vct{x})}{\sigma^2}\right)U_{t}(\vct{x})=T_{0}(\vct{x}_{0},t)|J|^{1/2},\;\;\;J=\Big|\dfrac{\partial\vct{x}_{0}}{\partial\vct{x}}(\vct{x},t)\Big|,$$

\npgni (the Jacobian determinant) and where $S_{t}$ is the Hamilton-Jacobi function satisfying:

$$\dfrac{\partial S_{t}}{\partial t}+\dfrac{1}{2}|\boldsymbol{\nabla}S_{t}-\vct{a}|^2+V=0,\;\;\textrm{with}\;\;S_{t=0}=S_{0}.$$

\npgni This result is only true for sufficiently small time $t<T$, where $T$ is the caustic time, $\vct{x}_{0}(\vct{x},t)$ being the solution of $\phi_{t}(\vct{x}_{0})=\vct{x}$ for the classical flow map, $\phi_{s}$, defined by:

$$\phi_{s}(\vct{x}_{0})=\vct{X}(s,\vct{x}_{0},\boldsymbol{\nabla}S_{0}(\vct{x}_{0}),\;\;\;s\in(0,t),\;\;\;t<T,$$

\npgni $\vct{X}(s)=\vct{X}(s,\vct{x}_{0},\boldsymbol{\nabla}S_{0}(\vct{x}_{0})$, satisfying, $\vct{X}(0)=\vct{x}_{0}$, $\dot{\vct{X}}(0)=\boldsymbol{\nabla}S_{0}(\vct{x}_{0})$ and

$$\ddot{\vct{X}}(s)=-\boldsymbol{\nabla}V(\vct{X}(s))+\dot{\vct{X}}(s)\times\textrm{curl}(\vct{a}(\vct{X}(s))),\;\;\;s\in(0,t),\;\;\;t<T,$$

\npgni the classical equation corresponding to the Hamilton-Jacobi function and $T$ being the time up to which $\phi_{s}$ is a diffeomorphism, $\phi_{s}:\mathbb{R}^d\rightarrow\mathbb{R}^d$, this occurring when infinitely\vspace{-4mm}

\npgni many classical paths focus at a point and $\Big|\dfrac{\partial\vct{x}_{0}}{\partial\vct{x}}\Big|$ blows up or is zero.

\npgni It turns out that $\vct{v}_{t}=(\boldsymbol{\nabla}S_{t}-\vct{a})$ is the fluid velocity and as you would expect for $\rho_{0}(\vct{x},t)=T_{0}^2(\vct{x}_{0}(\vct{x},t)|J|$,

$$\dfrac{\partial\rho_{0}}{\partial t}+\textrm{div}(\rho_{0}\vct{v}_{t})=0$$

\npgni i.e. $\rho_{0}$ is the fluid density and for the fluid velocity field, $\vct{v}_{t}=\boldsymbol{\nabla}\mathcal{S}_{t}(\vct{x})-\vct{a}(\vct{x})$, in the limit as $\sigma^2\rightarrow0$,

$$\dfrac{\partial\vct{v}_{t}}{\partial t}+(\vct{v}_{t}.\boldsymbol{\nabla})\vct{v}_{t}+\vct{v}_{t}\times\textrm{curl}\vct{v}_{t}=-\boldsymbol{\nabla}V.$$

\npgni So we have a Burgers-Zeldovich fluid with vorticity. If one prefers one can, of course, retain the terms in $\sigma^2$ and consider

$$\dfrac{\partial S_{t}}{\partial t}+\dfrac{1}{2}|\boldsymbol{\nabla}\mathcal{S}_{t}-\vct{a}|^2+V=\dfrac{\sigma^2}{2}\bigtriangleup\mathcal{S}_{t}$$

\npgni and obtain solutions of Burgers-Zeldovich fluids with voticity and viscosity as in our example.

\npgni All this is possible in the framework of the 'Elworthy-Truman elementary formula' which we quote in its simplest version below. (For generalisations see Ref.[40] which is dedicated to David Elworthy).

\npgni Recall $\mathcal{S}_{t}=S(\vct{x},t)$ in the limit as $\sigma^2\rightarrow0$, the solution of our Hamilton-Jacobi equation is given by,

$$S_{t}(\vct{x})=S_{0}(\vct{x}_{0}(\vct{x},t))+\int_{0}^{t}\mathcal{L}(\vct{X}(s),\dot{\vct{X}}(s))ds,$$

\npgni for Lagrangian, $\mathcal{L}$, when $\dfrac{\partial\vct{a}}{\partial t}=\vct{0}$, $\textrm{div}\vct{a}=0$, for simplicity,

$$\mathcal{L}(\vct{X}(s),\dot{\vct{X}}(s))=2^{-1}(\dot{\vct{X}}(s))^2+\dot{\vct{X}}(s).\vct{a}(\vct{X}(s))-V(\vct{X}(s)),$$

\npgni where $\vct{X}(s)=\vct{X}(s,\vct{x}_{0},\boldsymbol{\nabla}S_{0}(\vct{x}_{0})\rceil_{\vct{x}_{0}=\vct{x}_{0}(\vct{x},t)}$.

\npgni We require the diffusion process, $\vct{Y}_{s}^\sigma$, to be non-explosive and satisfy

$$d\vct{Y}_{s}^{\sigma}=\sigma d\vct{B}(s)-(\boldsymbol{\nabla}S_{t-s}(\vct{Y}_{s}^{\sigma})-\vct{a}(\vct{Y}_{s}^{\sigma}))ds,\;\;\;\vct{Y}_{s}^{\sigma}\rceil_{s=0}=\vct{x},$$

\npgni where $\vct{B}(s)$ is a $\textrm{BM}(\mathbb{R}^3)$ process, for $s\in(0,t)$, $t<T$, the caustic time where $J=\Big|\dfrac{\partial\vct{x}_{0}}{\partial\vct{x}}(\vct{x},t)\Big|$ blows up or is zero. Then in its simplest form the elementary formula reads:

\begin{theorem}

Modulo the above assumptions and mild boundedness of e.g. $V$, $T_{0}$ etc.

$$\textrm{exp}\left(\dfrac{\mathcal{S}_{t}(\vct{x})}{\sigma^2}\right)U_{t}(\vct{x})=\mathbb{E}\left(T_{0}(\vct{Y}_{t}^{\sigma})\textrm{exp}\left(-\frac{1}{2}\int_{0}^{t}\bigtriangleup S_{t-s}(\vct{Y}_{s}^{\sigma})ds\right)\right),$$

\npgni where $U_{0}(\vct{x})=T_{0}(\vct{x})\textrm{exp}\left(-\dfrac{\mathcal{S}_{0}(\vct{x})}{\sigma^2}\right)$.

\end{theorem}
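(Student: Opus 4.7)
The plan is to set $V_t(\vct{x}) = \exp(\mathcal{S}_t(\vct{x})/\sigma^2)\, U_t(\vct{x})$, derive a linear PDE for $V_t$ that is regular in $\sigma^2$, and then recognise the right-hand side of the theorem as a Feynman--Kac representation of $V_t$ along the diffusion $\vct{Y}_s^\sigma$ run backwards in time.

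First I would compute $\partial_t V_t$, $\boldsymbol{\nabla}V_t$ and $\Delta V_t$ directly from the product, substitute into the (SHE) satisfied by $U_t$, and use the Hamilton--Jacobi equation $\partial_t \mathcal{S}_t + \tfrac{1}{2}|\boldsymbol{\nabla}\mathcal{S}_t - \vct{a}|^2 + V = 0$ to eliminate $\partial_t \mathcal{S}_t$, together with the gauge condition $\textrm{div}\,\vct{a}=0$. A direct verification shows that all $\sigma^{-2}$ and $\sigma^{-4}$ contributions cancel, leaving
$$\partial_t V_t = \frac{\sigma^2}{2}\Delta V_t - (\boldsymbol{\nabla}\mathcal{S}_t - \vct{a})\cdot\boldsymbol{\nabla}V_t - \tfrac{1}{2}(\Delta \mathcal{S}_t)\, V_t,$$
with initial data $V_0 = T_0$ by the stated form of $U_0$. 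This is a linear second-order equation of Kolmogorov-backward type whose drift matches the drift of $\vct{Y}_s^\sigma$.

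Next, applying It\^o's formula to $s\mapsto V_{t-s}(\vct{Y}_s^\sigma)$, the $-\partial_t V_{t-s}$ term, the generator contribution $\tfrac{\sigma^2}{2}\Delta V_{t-s} - (\boldsymbol{\nabla}\mathcal{S}_{t-s} - \vct{a})\cdot\boldsymbol{\nabla}V_{t-s}$ and the PDE combine so that only the zeroth-order remainder $\tfrac{1}{2}(\Delta \mathcal{S}_{t-s}) V_{t-s}$ survives in the finite-variation part, alongside the Brownian term $\sigma\boldsymbol{\nabla}V_{t-s}\cdot d\vct{B}(s)$. Multiplying by the bounded-variation factor $E_s := \exp\bigl(-\tfrac{1}{2}\int_0^s \Delta \mathcal{S}_{t-u}(\vct{Y}_u^\sigma)\,du\bigr)$ cancels this remainder, so $M_s := E_s V_{t-s}(\vct{Y}_s^\sigma)$ is a local martingale with $M_0 = V_t(\vct{x})$ and $M_t = E_t\, T_0(\vct{Y}_t^\sigma)$. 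Taking $\mathbb{E}[M_t]=M_0$ produces exactly the claimed identity.

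The hard part is the promotion of "local martingale" to genuine martingale, which is where the mild boundedness hypotheses on $V$, $T_0$ and the HJ data do their work. One needs non-explosion of $\vct{Y}_s^\sigma$ on $[0,t]$, enough integrability of $\Delta \mathcal{S}_{t-s}$ along the trajectory to place $E_t\, T_0(\vct{Y}_t^\sigma)$ in $L^1$, and sufficient control of $\boldsymbol{\nabla}V_{t-s}$ so that $\int_0^t \sigma E_s\, \boldsymbol{\nabla}V_{t-s}(\vct{Y}_s^\sigma)\cdot d\vct{B}(s)$ is a true martingale rather than merely a local one. All of these are essentially encoded in the restriction $t<T$: once the classical flow map $\phi_s$ fails to be a diffeomorphism, $\boldsymbol{\nabla}\mathcal{S}_{t-s}$ and $\Delta \mathcal{S}_{t-s}$ cease to be smooth, the SDE drift becomes singular, and the elementary formula must be replaced by its post-caustic generalisations in Ref.[40].
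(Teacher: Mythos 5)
Your argument is correct, and in fact the paper offers no proof of this theorem at all: it is quoted as the ``Elworthy--Truman elementary formula'' with the reader referred to Refs.~[38] and [40], so there is nothing in the text to compare against. Your derivation is precisely the standard one from those references: the substitution $V_t=\exp(\mathcal{S}_t/\sigma^2)U_t$ does turn the (SHE) into $\partial_t V_t=\tfrac{\sigma^2}{2}\Delta V_t-(\boldsymbol{\nabla}\mathcal{S}_t-\vct{a})\cdot\boldsymbol{\nabla}V_t-\tfrac{1}{2}(\Delta\mathcal{S}_t)V_t$ once the inviscid Hamilton--Jacobi equation and $\textrm{div}\,\vct{a}=0$ are used (I have checked the cancellation of the $\sigma^{-2}$ terms; note they are all of order $\sigma^{-2}$ rather than $\sigma^{-4}$ after the factor $\tfrac{\sigma^2}{2}$ is applied), the drift matches that of $\vct{Y}_s^{\sigma}$, and the It\^o/Feynman--Kac step with the multiplier $E_s$ gives the identity upon taking expectations, with $V_0=T_0$ supplying the terminal value. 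Your closing remarks on where the ``mild boundedness'' hypotheses enter --- non-explosion, $L^1$-integrability of $E_tT_0(\vct{Y}_t^{\sigma})$, and promoting the local martingale to a true one, all tied to $t<T$ before the caustic --- correctly identify the only genuinely delicate points. The one cosmetic caveat is that the paper is loose about $S_t$ versus $\mathcal{S}_t$ in the exponent of the stated formula; your reading (both equal to the inviscid Hamilton--Jacobi function for $t<T$) is the consistent one.
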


\npgni Evidently $U_{t}(\vct{x})$ is given by 'a sum over paths' formula, where the paths are essentially the sample paths of the Nelson diffusion process for the corresponding Schrödinger equation, as such they obey a $2^{\textrm{nd}}$ Law of Motion in the form of the Nelson-Newton Law,

$$\textrm{Force}=\textrm{Mass}\times\textrm{Acceleration},\;\;\textrm{if}\;\;\sigma=\sqrt{\dfrac{\hbar}{m}},\;\;m\;\;\textrm{the WIMPish mass},$$

\npgni (see Nelson Refs.[23],[24]). A formal asymptotic series expansion in powers of $\sigma^2$ of the above expression, first obtained in Ref.[38] and elaborated upon in Refs.[6] and [39], gives solutions to the Burgers-Zeldovich fluid equations with viscosity and vorticity. Needless to say the non-linear Burgers-Zeldovich equations will inherit a linear superposition principle from the Schrödinger-Heat equation with entropy $\left(-\dfrac{\mathcal{S}_{t}}{\sigma^2}\right)$ so a minimal action principle emerges in this framework as $\sigma^2\rightarrow 0$. Compare this with the case of eigenstates where the entropy is $\mathcal{E}=\dfrac{(R-R_{\textrm{max}})}{\epsilon^2}$.  Obviously the formal asymptotic expansion can be made to apply to the Schrödinger equation itself with the identity, $\textrm{SE}\rightarrow\textrm{SHE}$ epitomised by

$$\textrm{exp}\left(\dfrac{R+iS}{\epsilon^2}\right)\rightarrow\left(\dfrac{iR-S}{\sigma^2}\right),\;\;\textrm{when}\;\;R=\textrm{O}(\sigma^2),$$

\npgni using the Elworthy-Truman elementary formula and joint results with Zhao and Davies.

\section{Conclusion}\vspace{-5mm}

\npgni In this paper we have presented three new results:- two new constants of the motion for the linearised restricted 3-body problem, an important isosceles triangle generalisation of Lagrange's equilateral triangle solution for the restricted case and explicit formulae for the quantum corrections to curvature and torsion for the trajectories of semi-classical particles subject to vector as well as scalar potentials. All these results are relevant to the problem of understanding the past and future histories of the Trojan asteroids. Our quantum mechanical results are relevant for WIMPish particles where the main interaction with 
matter is through Newtonian gravitational attraction, the particles being relatively massive and non relativistic. We postulate that such particles were involved in the formation of the Trojans.

\npgni WIMPs themselves are very elusive; it seems and some authors have suggested that they may be hiding near the Lagrange points, $\mathscr{L}_{4,5}$, of 3-body problems such as the Sun, Jupiter and Trojan asteroid system studied in this work. Moreover, the two new constants of the motion should be observable in the motion of the Trojan asteroids near the $\mathscr{L}_{4,5}$ Lagrange points. This is very timely as the space-shot Lucy has recently been launched to photograph, close-up, some of the Trojans. This could result in data to confirm or rebutt our theoretical predictions. Similar remarks are pertinent to our isosceles triangle solution of the restricted 3-body problem where there may be better examples in other solar systems. For the latter isosceles triangle case we have included some relevant information from immediately available data and an exciting possible application to the Hildans. Other intriguing results are included in the Appendix which are of independent interest.

\npgni A final word about our predictions concerning the curvature and torsion of particles' trajectories in our astronomical state - these particle orbits are spirals converging to Keplerian ellipses as we have proved elsewhere (Refs.[9],[10],[11]). Could this be the explanation of why, as far as Hubble's results are concerned, 70\% of galaxies are spirals and only about 20\% are elliptical, the elliptical ones being older than the spiral ones? We believe so and have given an acid test of the validity of our ideas, if the astronomers can measure the curvature and torsion of particles in the spiral tails of galaxies. We believe there is no data currently available as far as this is concerned.

\npgni As we have proved in earlier works, we have found hidden constants in the semi-classical mechanics for our astronomical elliptical states by taking the Bohr correspondence limits of what are essentially Pauli's identities for the Hydrogen atom. When seen in an astronomical context they lead to complex identities for the orbits of WIMPish particles. Yet again they could be observable in the tails of galaxies as they evolve - another effect of Newtonian quantum gravity.

\npgni As far us extending our results to more general potentials and time inhomogeneous systems for general states and fluid models opens new windows. The interested reader is referred to Ref.[37] for more results.

\section*{Acknowledgement and Dedication}\vspace{-8mm}

\npgni Our astronomical elliptic states were formed by taking the semi-classical limit of the atomic elliptic states of Lena, Delande and Gay (Ref.[17]) who proved these are the closest possible states to Keplerian elliptic states. Moreover, the cognoscenti will realise our indebtedness to Ed Nelson (Refs.[23],[24],[25]). Further we would like to dedicate this paper to our late friend, Robin Hudson, who was a shining light to us all. Finally we need to add that without the inspirational work in mathematical physics of Barry Simon and Michael Berry and the teaching and influence of David Williams this paper would not have been written. \vspace{-3mm}

\section*{References}

\section *{Appendix (Available Data and Keplerian $4^{\textrm{th}}$ Law)}\vspace{-5mm}

\npgni Here we push our isosceles triangle result to the absolute limit, assuming all relevant orbits are approximately circular. The result is a Keplerian $4^{\textrm{th}}$ Law for 3-body systems. The equation of motion in the centre of mass rest frame of the primary 2-body problem of the third body (of negligible mass) moving in the isosceles triangle configuration under the gravitational attraction of the two primaries, each moving in a circular orbit, is\vspace{-5mm}

$$\ddot{\boldsymbol{\rho}}=-\dfrac{\mu_{1}+\mu_{2}}{\left(\rho^2+\dfrac{\mu_{2}}{\mu_{1}}r_{2}^2\right)^{\frac{3}{2}}}\boldsymbol{\rho},$$\vspace{-5mm}

\npgni where $\boldsymbol{\rho}$ is the displacement of the third body from the centre of mass, $O$, of the two primaries. Here $\mu_{1}$ and $\mu_{2}$ are the gravitational 
masses of the primaries and $r_{2}$ is the distance of the second (smaller) primary from $O$. The only underlying assumption is that the angular momentum of $\textrm{P}_{3}$ (particle of negligible mass) about $O$, the mass centre of $\textrm{P}_{1}$ and $\textrm{P}_{2}$, is conserved, which only requires that $\bigtriangleup\textrm{P}_{1}\textrm{P}_{2}\textrm{P}_{3}$, in the rest frame, be isosceles. This holds even if the particle masses $\mu_{1}$ and $\mu_{2}$ vary in time and if the side lengths of $\bigtriangleup\textrm{P}_{1}\textrm{P}_{2}\textrm{P}_{3}$ change as long as the triangle remains isosceles. So the principle applies over the history of 3-body problems in very general circumstances. For simplicity though we here only considered nearly circular orbits to test our ideas, circles being centred at $O$. This gives a new approximate picture of the formation of the solar system and other 3-body systems, taking into account the gravitational effect of a massive planet such as Jupiter in our own solar system.

\npgni The above equation supports circular orbits connecting orbital radii, $\rho$, with orbital periods, $T_{3}$, of the third body giving rise to a Keplerian type $4^{\textrm{th}}$ law of motion:

$$\rho=\bigg\{\left(\dfrac{T_{3}}{T_{2}}\right)^{\frac{4}{3}}\left(1+\frac{M_{2}}{M_{1}}\right)^2-\frac{M_{2}}{M_{1}}\bigg\}^{\frac{1}{2}}r_{2},$$\vspace{-5mm}

\npgni where $M_{1}$ and $M_{2}$ ($M_{1}>M_{2}$) are the masses of the primaries and $T_{2}$ is the orbital period of the second primary. This result is valid for $\rho=r_{3}>\dfrac{r_{2}-r_{1}}{2}$, where $r_{1}$ is the distance of the larger primary from $O$.

\npgni Our solar system as well as the moons orbiting the planets provide the perfect test bed of our ideas about isosceles triangle configurations, conservation of angular momentum and the above formula. If we idealise the solar system as circular 3-body systems with the Sun and Jupiter as the primaries and the planetary moon systems as circular 3-body systems using the planet and its largest moon as the primaries, available data allows us to compare our predictions for the orbital radius of a given planet/moon orbiting with a given orbital period. The tables below show the results for the solar system and the moons of Jupiter in prograde motion.

\npgni \textbf{The Solar System}

\npgni In this case the orbital elements of the primaries are $M_{1}=1.989\times10^{30}\;\textrm{kg}$, the mass of the Sun, $M_{2}=1.898\times10^{27}\;\textrm{kg}$, the mass of Jupiter, $T_{2}=4331\;\textrm{days}$, the orbital period of Jupiter and $r_{2}=5.2\;\textrm{Au}$, the distance of Jupiter from the Sun.

\begin{center}
\begin{tabular}{|c||c|c|}
\hline
 \multicolumn{3}{|c|}{The Solar System} \\
 \hline
 Planet & Semi-Major Axis of Orbit (Au) & Predicted Radius,
 $\rho$ (Au) \\
 \hline
 Saturn & 9.57 & 9.54\\
 Uranus & 19.17 & 19.21\\
 Neptune & 30.18 & 30.07\\
 Pluto & 39.48 & 39.41\\
 \hline
\end{tabular} 
\end{center}

\begin{center}
Table 1 - The Solar System
\end{center}

\npgni These results show very good agreement for orbits satisfying $\rho=r_{3}>\dfrac{r_{2}-r_{1}}{2}$. It should also be noted that this formula could apply, not only to the Trojan and Greek asteroids, but also to the Hildan asteroids and to many of the asteroids in the asteroid belt again with good agreement.

\npgni \textbf{Moons of Jupiter}

\npgni In this case the idealised 3-body system comprises Jupiter, Ganymede (the largest moon) and a second moon. The orbital elements of the primaries are $M_{1}=1.898\times10^{27}\;\textrm{kg}$, the mass of Jupiter, $M_{2}=1.4819\times10^{23}\;\textrm{kg}$, the mass of Ganymede, $T_{2}=7.1545\;\textrm{days}$, the orbital period of Ganymede and $r_{2}=1070400\;\textrm{km}$, the semi-major axis of the orbit of Ganymede about Jupiter. The table below compares the semi-major axis with the predicted radius, $\rho$, for 12 moons in prograde motion with Ganymede.

\begin{center}
\begin{tabular}{|c||c|c|}
\hline
 \multicolumn{3}{|c|}{Moons of Jupiter} \\
 \hline
 Moon & Semi-Major Axis of Orbit (km) & Predicted Radius,
 $\rho$ (km) \\
 \hline
 Europa & 671100 & 667707\\
 Calisto & 1882700 & 1882701\\
 Themisto & 7507000 & 7399085\\
 Leda & 11170000 & 11162027\\
 Ersa & 11401000 & 11416931\\
 Himalia & 11460000 & 11457512\\
 Pandia & 11481000 & 11498630\\
 Lysithea & 11700800 & 11719412\\
 Elara & 11740000 & 11732671\\
 Dia & 12260300 & 12285637\\
 Carpo & 16990000 & 17086110\\
 Valetudo & 18694200 & 18823105\\
 \hline
\end{tabular} 
\end{center}

\begin{center}
Table 2 - Moons of Jupiter
\end{center}\vspace{-5mm}

\npgni Again we see good agreement between observation and theoretical prediction. Analysing moons around Mars, Saturn, Neptune and Uranus give similar agreement and also apply to moons in retrograde motion. However in all these cases the ratio $\dfrac{M_{2}}{M_{1}}$ is small and the formula for $\rho$ is approximately that given by Kepler's $3^{\textrm{rd}}$ law. This is not the case for Pluto and its largest moon Charon.

\npgni \textbf{The Moons of Pluto}\vspace{-3mm}

\npgni  Here our idealised 3-body system comprises Pluto, Charon and a second moon. The orbital elements of the primaries are $M_{1}=1303\times10^{19}\;\textrm{kg}$, the mass of Pluto, $M_{2}=158.6\times10^{19}\;\textrm{kg}$, the mass of Charon, $T_{2}=6.38723\;\textrm{days}$, the orbital period of Charon and $r_{2}=17536\;\textrm{km}$, the semi-major axis of the orbit of Charon about the centre of mass. The table below compares the semi-major axis with the predicted radius, $\rho$, for the other 4 moons of Pluto. We note that $\dfrac{M_{2}}{M_{1}}=0.12172$, Pluto and Charon are in 1:1 resonance and all the orbits are nearly circular. This system provides the perfect test for our ideas. The table below gives the results.

\begin{center}
\begin{tabular}{|c||c|c|}
\hline
 \multicolumn{3}{|c|}{The Moons of Pluto} \\
 \hline
 Moon & Semi-Major Axis of Orbit (km) & Predicted Radius,
 $\rho$ (km) \\
 \hline
 Styx & 42650 & 41881\\
 Nix & 48690 & 48272\\
 Kereros & 57780 & 57473\\
 Hydra & 64740 & 64522\\
 \hline
\end{tabular} 
\end{center}

\begin{center}
Table 3 - The Moons of Pluto
\end{center}\vspace{-3mm}

\npgni Another good agreement.

\npgni \textbf{Circumbinary Systems}\vspace{-4mm}

\npgni Perhaps even more remarkable is that our result also applies to several other known planetary systems which can be idealised as 3-body systems. In particular the Kepler-16, Kepler-34, Kepler-35 and Kepler-38 binary systems all have circumbinary planets orbiting the centre of the system. In addition they all have a significant value of $\dfrac{M_{2}}{M_{1}}$. Omitting the details our calculations give the following results comparing known data for the orbital semi-major (S-M) axes with the predicted radius values, $\rho$.

\begin{center}
\begin{tabular}{|c||c|c|c|c|}
\hline
 \multicolumn{5}{|c|}{Circumbinary Systems}\\
 \hline
 System & $\dfrac{M_{2}}{M_{1}}$ & Planet & S-M Axis of Orbit (Au) & $\rho$ (Au)\\
 \hline
 Kepler-16 & 0.29368 & Kepler-16b & 0.7048 & 0.6985\\
 Kepler-34 & 0.97414 & Kepler-34b & 1.0896 & 1.0836\\
 Kepler-35 & 0.91179 & Kepler-35b & 0.6035 & 0.5965\\
 Kepler-38 & 0.26238 & Kepler-38b & 0.4632 & 0.4604\\
 \hline
\end{tabular}
\end{center}

\begin{center}
Table 4 - Circumbinary Systems
\end{center}\vspace{-5mm}

\npgni This is just a snapshot of the planets that are known to exist, but we believe our results apply to many other systems which will exhibit this fundamental property. In particular, we believe the best place to look for isosceles triangle orbits for the restricted 3-body problem will be in amongst the Hildan and Trojan asteroids.

\npgni Data Sources:-

\npgni 1. NASA Planetary Factsheets: https://nssdc.gsfc.nasa.gov/planetary/factsheet/

\npgni 2. JPL Planetary Satellite Mean Elements: https://ssd.jpl.nasa.gov/sats/elem

\npgni 3. NASA Exoplanet Archive: https://exoplanetarchive.ipac.caltech.edu

\end{document}